\newtheorem{theorem}{Theorem}
\newtheorem{lemma}{Lemma}
\newtheorem{proposition}{Proposition}
\newtheorem{corollary}{Corollary}
\newtheorem{claim}{Claim}
\newtheorem{remark}{Remark}
\newcommand{\off}[1]{}
\newcommand{\R}{\mathbb{R}}
\newcommand{\N}{\mathbb{N}}
\newcommand{\C}{\mathbb{C}}
\newcommand{\cI}{{\cal I}}
\newcommand{\E}{\mathrm{E}}
\newcommand{\e}{\mathrm{e}}
\newcommand{\bw}{\mathbf{w}}
\newcommand{\bu}{\mathbf{u}}
\newcommand{\bU}{\mathbf{U}}
\newcommand{\bV}{\mathbf{V}}
\newcommand{\by}{\mathbf{y}}
\newcommand{\bH}{\mathbf{H}}
\newcommand{\bh}{\mathbf{h}}
\newcommand{\bM}{\mathbf{M}}
\newcommand{\bR}{\mathbf{R}}
\newcommand{\bz}{\mathbf{z}}
\newcommand{\bx}{\mathbf{x}}
\newcommand{\bxi}{\pmb{\xi}}
\newcommand{\indicator}[1]{\mathbbm{1}_{\left\{ {#1} \right\} }}
\newcommand{\SEFI}[1]{{\bfseries *** Sefi says: #1 ***}}
\newcommand{\RED}[1]{}
\newcommand{\BLUE}[1]{#1}
\newcommand{\blue}[1]{#1}
\newcommand{\red}[1]{}
\newcommand{\redo}[1]{}
\newcommand{\prnt}[1]{\left( #1 \right)}
\newcommand{\brkt}[1]{\left[ #1 \right]}
\begin{document}
\title{The \BLUE{Ergodic} Capacity of the Multiple Access Channel Under Distributed Scheduling - Order Optimality of Linear Receivers\thanks{Parts of this work appeared in the proceedings of ITW 2013, Seville, Spain.}}
\author{
Joseph Kampeas, Asaf Cohen and Omer Gurewitz
\\
Department of Communication Systems Engineering
\\
Ben-Gurion University of the Negev
\\
Beer-Sheva, 84105, Israel
\\ {\tt Email: \{kampeas,coasaf,gurewitz\}@bgu.ac.il}}
\date{}
\maketitle
\begin{abstract}
Consider the problem of a Multiple-Input Multiple-Output (MIMO)
Multiple-Access Channel (MAC) at the limit of large number of users.
Clearly, in practical scenarios, only a small subset of the users can
be scheduled to utilize the channel simultaneously. Thus, a problem of
user selection arises. However, since solutions which collect Channel
State Information (CSI) from all users and decide on the best subset
to transmit in each slot do not scale when the number of users is
large, distributed algorithms for user selection are advantageous.

In this paper, we analyse a distributed user selection algorithm,
which selects a group of users to transmit without coordinating
between users and without all users sending CSI to the base
station. This threshold-based algorithm is analysed for both Zero-Forcing (ZF) and Minimum Mean Square Error (MMSE) receivers, and its expected  sum-rate in the limit of large number of users is investigated.
It is shown that for large number of users it achieves the same scaling laws as the optimal centralized scheme.
\end{abstract}
\section{Introduction}\label{sec. intro}
Wireless access networks are the typical last-mile networks connecting multiple users to a high speed backbone. In these networks, a Base Station (BS) serves a large group of users. Traditionally, either the time or the frequency are divided to ensure users do not interfere with each other. Modern coding techniques, however, allow multiple users to either transmit or receive simultaneously, and be decoded successfully using appropriate Multiple Access Channel (MAC) codes or Broadcast Channel (BC) codes, respectively.

In prevalent scenarios in which the BS is serving a large user population it is not practical to serve all users simultaneously, and the BS is required to select a subset of users to be served. In such scenarios the problem of user selection or scheduling is acute, i.e., selecting a proper subset of users to be served can dramatically boost the performance in terms of a predefined objective, e.g., can dramatically increase the achievable sum-rate. As shown in literature, such performance enhancement is applicable both in the case of downlink traffic, in which a BS transmits to a group of users, as well as for the case of upstream traffic, in which a selected sub-set of users are transmitting simultaneously to the BS. Furthermore, numerous papers have shown that under various setups optimal or near-optimal sum-rate can be attained by selecting a proper subset of users, e.g., it was shown that for the downstream traffic, selecting a sub-set of users with good channel condition, yet with relatively orthogonal channels and utilizing zero-forcing beamforming (ZFBF) strategy achieves the performance of Dirty Paper Coding (DPC), hence is asymptotically optimal in the Gaussian case (e.g., \cite{yoo2006optimality,jagannathan2007scheduling,swannack2004low,kim2005scheduling,airy2003spatially}).

The challenge in selecting a proper set of users under the predefined objective is twofold. First, in order to find a proper set the scheduler (e.g., the BS) needs to attain the channel state between each and every BS's antenna and each potential user's antenna or vice versa (between the potential users' antennas and each BS's antennas), for the downstream or upstream traffic, respectively. Obviously, such channel state acquisition consumes large overhead which hinders the gain of serving multiple users simultaneously, especially when the number of users is moderate or large. Second, even after possessing the channel state between each and every potential transmit antenna and each and every potential receive antenna, searching for the optimal user set is computationally prohibitive. Note that most of the literature considering this problem assumes the Channel State Information (CSI) is available at the scheduler and suggests heuristics for user selection procedures for various setups, e.g., in \cite{yoo2006optimality} an efficient user selection heuristic was suggested which, given the CSIs, attains the capacity scaling law when the number of users is large.

In this paper, we  analyse a distributed threshold-based user selection algorithm for the upstream traffic case, in which each user determines for itself whether to transmit or not, in each transmission opportunity, without any coordination between the users and with minimal CSI exchange (specifically, the BS attains CSIs only from the self-selected users). We analyse this threshold-based algorithm both for Zero-Forcing (ZF) and Minimum Mean Square Error (MMSE) detection, \RED{scheme,} and show that in both cases the \RED{expected} \BLUE{ergodic} sum-rate for large number of users achieves the same scaling laws as the one obtained by the optimal centralized scheme. Note that employing such distributed threshold-based mechanisms tackles both challenges mentioned above. First, it requires minimal CSI exchange, and second, it provides a simple and distributed user selection search. 

\subsubsection*{Main Contribution}\label{subsec. contribution}
We consider a MIMO MAC channel with $r$ receiving antennas and $K \gg r$ users. We investigate a distributed algorithm for selecting a group of users to transmit in each slot.  In particular, a threshold value for the norm of the channel vector is set, and only users above the threshold transmit. Hence, there is no need to collect CSIs from all users. Nor is any cooperation required.

The contribution of this study is the analysis of the resulting \BLUE{ergodic} sum-rate in the limit of large $K$. The respective scaling laws are given for both Zero-Forcing (ZF) and Minimum Mean Square Error (MMSE) receivers. This analysis employs recent tools from both Point Process approximation and asymptotic random matrix theory, that to the best of our knowledge, were not used in this setting before. Via this analysis, the simple distributed, threshold based algorithm, is shown to achieve the optimal scaling laws. 

The rest of this paper is organized as follows: Section~\ref{sec. related} presents the most relevant related work. Section~\ref{sec. prelim} includes the required preliminary material. Section~\ref{sec. dist alg} introduces the distributed algorithm, gives its analysis and scaling law under ZF decoding. Section~\ref{sec. MMSE rec} gives the analysis under an MMSE receiver. While conceptually similar, this analysis is the more technically challenging. Section~\ref{sec. conc} concludes the paper. \blue{Proofs which were omitted in the main body are given in Appendix~\ref{sec:proofs}}.
\off{
\section{Related Work}\label{sec. related}
The essence of multi-user diversity was introduced in \cite{knopp1995information}, where selecting the strongest user in each time slot was first suggested. The work was followed by numerous scheduling algorithms for various scenarios. We list here only the most relevant.

In \cite{gozali2003impact}, the authors considered the impact of multi-user diversity on the MIMO downlink channel (BC). Assuming channel state information at the BS, the authors used order statistics to evaluate the effective SNR when scheduling the strongest user in each slot. However, \emph{only one user was scheduled} in each slot, and the results were given in terms of the $K$-fold statistics, without an extreme value analysis for large $K$.

In \cite{kim2005scheduling}, a similar downlink model was considered, however, when users are scheduled simultaneously. The authors considered Zero-Forcing Beamforming (ZFBF), and suggested a greedy algorithm to schedule the strongest and most orthogonal users. Additional scheduling algorithms for downlink communication were given in \cite{airy2003spatially,swannack2004low,primolevo2005channel,yoo2006finite}. In fact, in the downlink scenario, it was shown later that ZFBF and optimal user selection can indeed achieve the Dirty Paper Coding (DPC) region \cite{yoo2006optimality}, and is hence optimal in the Gaussian case \cite{weingarten2006capacity}. Additional surveys and scaling laws can be found in \cite{caire2006mimo,hassibi2007fundamental,sharif2007comparison}.

A closely related scheme, yet still for the downlink model, was suggested in \cite{shen2006low}. Using Block Diagonalization (BD), a capacity-based greedy algorithm was suggested, in which first the strongest user is scheduled, and then additional users are added, one by one, based on their marginal contribution to the total capacity. In the same context, \cite{jagannathan2006efficient} considered the special case of two transmit antennas and one receive antenna per user, and showed that a greedy, two-stage algorithm, which first selects the strongest user and then the second to form the best pair is asymptotically optimal.
In the context of heterogeneous users, \cite{jagannathan2007scheduling} proposed a scheduling scheme which selects a small subset of the users with favorable channel characteristics.
\BLUE{\cite{gesbert2004much}  proposed threshold-based scheduling for the SISO downlink channel, and showed that using such scheme  reduce the required feedback, while preserving most of the system capacity and outage probability. \cite{bayesteh2008user} analysed the asymptotic performance of threshold-based algorithm for scheduling users in an MIMO broadcast channel environment, under ZF detector. In particular, the transmitter utilized threshold value on the eigenvalues of the users' channel matrix. Then, among the relevant candidates, a set of nearly orthogonal users were selected. Such scheduling scheme in nearly optimal, as the transmitter selects both strong and nearly orthogonal users. However, for the uplink channel, if the users are not coordinated, it is not possible to choose near orthogonal users distributively. }

The above works focus on the downlink setting. In this scenario, it is reasonable to assume that at least some information is available at the BS, and a centralized decision can be made. In the uplink (MAC) model, however, if one wishes to select a group of users without centralized processing at the BS, distributed algorithms are required. In this paper, we suggest both a single-stage distributed algorithm, and a multi-stage semi-distributed one for the uplink scenario, and, in addition, analyse their sum capacity in the limit of large number of users and give the resulting scaling laws.

A pioneering study of the uplink model was done in \cite{qin2003exploiting,qin2006distributed}, where a decentralized MAC protocol for Orthogonal Frequency Division Multiple Access (OFDMA) channels was suggested. In this scheme, each user estimates the channel gain and compares it to a threshold. Only above-the-threshold users can transmit. \cite{bai2006opportunistic} extended the scheme to a multi-channel setup, where each user competes on $m$ channels.
In \cite{qin2008distributed}, the authors used a similar approach for power allocation in the multi-channel setup, and suggested an algorithm that asymptotically achieves the optimal water filling solution.
 However, the works above do not consider a \emph{MIMO setting}, nor do they consider the interaction within a group of users, when all are scheduled to use \emph{the same resources}.
Space-time coding for fading multi-antenna MAC was considered in \cite{gartner2006multiuser}. The focus therein, however, was on joint code design for a given point in the rate region and the resulting error probability, rather than user scheduling and its resulting capacity.

Recently, we proposed a Point Process approximation which facilitates the analysis of various distributed threshold-based scheduling algorithms in the non-homogeneous scenario \cite{kampeas2012opportunistic,kampeas2012capdisthetnet}. This work, however, assumed only a single user can be successfully decoded in each time slot. A key contribution of the current work is the non-trivial extension of the work in \cite{kampeas2012opportunistic} to truly multiple-access protocols, where several users transmit simultaneously and should be decoded successfully, hence the questions that arise are how to distributively select a good subset of users to transmit and what the mutual influence between the users in the selected group is. For example, a closely related work is \cite{choi2010user}. Therein, various decoding procedures were discussed, and the corresponding best user selection \emph{for the uplink setting} was given. However, while reinforcing the necessity of proper user selection, the work in \cite{choi2010user} considered only the scenario where \emph{one user} can access the radio channel at a given time.

As for more complex topologies, spatial diversity in the context of multiple relays was considered in \cite{laneman2003distributed}. Therein, communication between a source and a destination is done through a group of relays. However, unlike conventional relay schemes, only the relays with the \emph{strongest received signal} decode the message and cooperate via space time coding to successfully relay it to the destination.
An asymptotically optimal scheme for multiple base stations (with joint optimization) was given in \cite{zakhour2011minmax}.

Extreme Value Theory (EVT) is a key tool in proving capacity results under scheduling and multi-user diversity.
\cite{sharif2005capacity} suggested random beams scheme for the broadcast channel model, in which the transmitter selects users that obtained the highest SINR values in the random directions. Such schemes require very limited feedback, nevertheless, still achieve the optimal scaling law. Yet, random beams schemes suffers from degrades power gain, thus, they are inadequate, especially in low SINR regime.
In \cite{wang2007coverage}, the authors suggested a sub-carrier assignment algorithm, and used order statistics to derive an expression for the resulting link outage probability.
In \cite{pun2007opportunistic}, the authors used EVT to derive the scaling laws for scheduling systems using beamforming and linear combining. \cite{choi2008capacity} analysed the scaling laws of base station scheduling, and showed that by scheduling the strongest among $K$ stations one can gain a factor of $O(\sqrt{2\log K})$ in the expected capacity (compared to random or Round-Robin scheduling).
%
%
%
} 

\section{Related Work}\label{sec. related}

The essence of multi-user diversity was introduced in \cite{knopp1995information}, where selecting the strongest user in each time slot was first suggested.
In \cite{gozali2003impact}, the authors considered the impact of multi-user diversity on the MIMO channel. Assuming channel state information at the BS, the authors used order statistics to evaluate the effective SNR when scheduling the strongest user in each slot.
Nevertheless, locating the strongest user is not an easy task, especially for large number of users, as it requires huge overhead due to excessive CSI exchange. A practical method to attain multi-user diversity is to schedule the strongest user distributively by setting a threshold on, e.g., the link capacity or channel gain.

A pioneering study of distributed scheduling was done in \cite{qin2003exploiting,qin2006distributed}, where a decentralized MAC protocol for Orthogonal Frequency Division Multiple Access (OFDMA) channels was suggested. In this scheme, each user estimated the channel gain and compared it to a threshold. Only above-the-threshold users could transmit.
In \cite{gesbert2004much}, the authors proposed threshold-based scheduling for the SISO downlink channel, and showed that using such a scheme  reduces the required feedback, while preserving most of the system capacity and outage probability.
Recently, we proposed a Point Process approximation which facilitates the analysis of various distributed threshold-based  single user scheduling algorithms in the non-homogeneous scenario \cite{kampeas2012opportunistic,kampeas2012capdisthetnet}.
In \cite{bai2006opportunistic} the authors extended the threshold-based scheme to a multi-channel setup, where each user competes on $m$ channels.
In \cite{qin2008distributed}, the authors used a similar approach for power allocation in the multi-channel setup, and suggested an algorithm that asymptotically achieves the optimal water filling solution.

Yet, to fulfill the full potential of multiple antenna systems, multi-user scheduling should be  considered. However, when extending the single user scheduling problem to a multi-user one, the problem of mutual interference arises. Thus, the scheduled group may have a significant impact on the system performance.
In \cite{kim2005scheduling}, the authors suggested a greedy algorithm to schedule the strongest and most orthogonal users for the MIMO downlink model with Zero-Forcing Beamforming (ZFBF) detector.
A closely related scheme was suggested in \cite{shen2006low}. Using Block Diagonalization (BD), a capacity-based greedy algorithm was suggested, in which first the strongest user is scheduled, and then additional users are added, one by one, based on their marginal contribution to the total capacity.
In the same context, \cite{jagannathan2006efficient} considered the special case of two transmit antennas and one receive antenna per user, and showed that a greedy, two-stage algorithm, which first selects the strongest user and then the second to form the best pair is asymptotically optimal.
In the context of heterogeneous users, \cite{jagannathan2007scheduling} proposed a scheduling scheme which selects a small subset of the users with favorable channel characteristics.
In fact, in the downlink scenario, it was shown that ZFBF with sub-optimal user selection can indeed achieve the Dirty Paper Coding (DPC) region \cite{yoo2006optimality}, and is hence optimal in the Gaussian case \cite{weingarten2006capacity}.
Additional centralized multi-user scheduling algorithms for MIMO communication were given in \cite{airy2003spatially,swannack2004low,primolevo2005channel,yoo2006finite}.
Space-time coding for fading multi-antenna MAC was considered in \cite{gartner2006multiuser}. The focus therein, however, was on joint code design for a given point in the rate region and the resulting error probability, rather than user scheduling and its resulting capacity.
In \cite{pun2007opportunistic}, the authors used Extreme Value Theory (EVT) to derive the scaling laws for scheduling systems using beamforming and linear combining.
In \cite{wang2007coverage}, the authors suggested sub-carrier assignment algorithms that achieve fairness and better cell coverage, and used order statistics to derive an expression for the resulting link outage probability.
An asymptotically optimal scheme for multiple base stations (with joint optimization) was given in \cite{zakhour2011minmax}. \cite{choi2008capacity} analysed the scaling laws of base station scheduling, and showed that by scheduling the strongest among $K$ stations one can gain a factor of $O(\sqrt{2\log K})$ in the expected capacity (compared to random or Round-Robin scheduling).
Additional surveys and scaling laws can be found in \cite{caire2006mimo,hassibi2007fundamental,sharif2007comparison}.

Still, attaining multi-user diversity, and further, selecting a favorable group of users, requires CSI and complex coordination. Thus, a distributed solution is desirable in this case as  well.
In \cite{bayesteh2008user}, the asymptotic performance of a threshold-based algorithm for scheduling users in a MIMO broadcast channel environment under ZF detection was analysed. In particular, the transmitter utilized threshold values on the eigenvalues of the users' channel matrix. Then, among the relevant candidates, a set of nearly orthogonal users were selected. Such  a scheduling scheme in nearly optimal, as the transmitter selects both strong and nearly orthogonal users. Note that to compute a threshold on the eigenvalues, one should deal with the eigenvalue distribution, which does not have a closed form expression. In this paper, on the other hand, we compute a threshold on the channel norm, which has a simple form of the Chi-square distribution, and thus, the analysis herein is inherently different.
 In \cite{sharif2005capacity}, the authors suggested a random beaming scheme for the broadcast channel, in which the transmitter \emph{first chooses the directions} at random, then selects users that obtained the highest SINR values in the directions chosen. Such a scheme requires very limited feedback, nevertheless, still achieves the optimal scaling law. However, random beaming schemes suffer from degraded power gains, as a receiver (in an uplink model) sees only the projections on the a-priori chosen directions, and cannot use the full knowledge on the true directions of the users actually transmitting. This can be critical, especially in the low SINR regime. In this paper, however, the receiver can detect the transmitting users based on their exact directions, since it has the transmitting users' CSI and no a-priori directions were defined. Hence, the resulting power gain and its analysis are different. In other words, herein, we are able to design the detection matrices based on the channel vectors of the selected users, and we are not restricted to a (random) set of directions chosen in advanced.

A key contribution of the current work is the non-trivial extension of the work in \cite{kampeas2012opportunistic} to analyse multiple-access protocols, where several users transmit simultaneously and should be decoded successfully, hence the questions that arise are: $(i)$ how to distributively select a good subset of users to transmit, $(ii)$ what is the mutual influence between the users in the selected group, and $(iii)$ what is the performance under different detection schemes.

\BLUE{Channel hardening refers to the phenomenon that the variance of the limiting distribution of the channel mutual information decreases in relation to the mean as the number of antennas grows \cite{hochwald2004multiple}. Accordingly, if matrix $\bH$ dimensions between the transmit antennas and selected users are large, channel hardening limits the gains provided by scheduling, i.e., the user selection gain of choosing preferable users is much smaller with respect to the gain already achieved by the large number of antennas. Note that channel hardening relies on the law of large numbers and is applied to massive MIMO systems. Accordingly, as long as after scheduling the dimensions of $\bH$ remain large, user selection gain is much smaller compared to gain already achieved by the large number of antennas.}

\BLUE{In this work, however, even though the number of potential users is large, the number of actually selected users is small, and since the number of receive antennas is small as well, the dimensions of $\bH$ are fixed, and channel hardening does not apply.}

\BLUE{In fact, one of this study's key contributions is in considering a different kind of asymptotic in which the size of $\bH$ is fixed (small), and taking the \emph{number of users from which to select, to infinity}. Moreover, due to the distributed algorithm we suggest, the entries in $\bH$ are dependent, rendering many previous results useless and requiring us to derive new tools to analyse the \BLUE{sum-rate}.}



%
%
%

\section{Preliminaries}\label{sec. prelim}
In this section, we describe the system model and relevant results which will be used throughout this paper.
\subsection{System Model}

We consider a multiple-access model with $K$ users, each with a single transmit antenna. The BS is equipped with $r$ receiving antennas.
Let us denote random matrices in bold upper-case and random vectors and variables in bold lower-case letters.
When $k$ users utilize the channel simultaneously, the received signal at the base station can be described as:
\begin{equation}
\by = \sum_{i=1}^k \bh_i \bx_i + \bw,
\label{eq: channel model}
\end{equation}
where $\bx_i \in\mathbb{C}$ is the transmitted signal (scalar). $\bx_i$ has a short-term constrain in its total power to $P$, i.e., $\text{E}[\bx_i^\dag \bx_i]\leq P$ per block (users are not allowed to aggregate power if they do not transmit). However, in most cases, we will assume a \emph{constant} power constraint $P$. $\bw\in\mathbb{C}^{r}$ denotes the uncorrelated Gaussian noise. $\bh_i \in\mathbb{C}^{r\times 1}$ is a complex random Gaussian channel vector  whose coefficients describe the gain and phase between the user's transmitting antenna to each of the receiving antennas at the BS. When all users are identically and independently distributed, it is common to assume that all entries of $\bh_i$ are independent and have zero mean and variance $1/2$ imaginary and real parts, for all users.

We assume a memoryless block-fading channel model, i.e., the channel remains constant over each slot (block) period, and at the beginning of each slot independent realizations of $\{\bh_i\}_{i=1}^{K}$ are drawn. Furthermore, we assume that each user has accurate channel gain estimation to each of the BS antennas. 
 \BLUE{Eventually, however, in the algorithm we suggest it is sufficient to know only the norm $\Vert \bh \Vert^2$.} We further assume TDD channel reciprocity, i.e., the downlink and uplink transmissions are performed at the same carrier frequency and uplink transmissions happen at the same coherence time such that downlink channel estimation at the \red{transmitter}\blue{users} can be directly utilized for the uplink transmission \cite{caire1999capacity}. 

\off{
\BLUE{The rate obtained when letting a single arbitrary user $i$, with a static channel $h_i$, to transmit is $\log\prnt{1 + P\Vert h_i \Vert^2}$. When a random channel realization, $\bh_i$, is drawn after each transmission interval and user $i$ transmits at a fixed rate $\mathcal{R_i}$, the BS can decode the transmission only if $\log\prnt{1 + P\Vert \bh_i \Vert^2}$ is greater than $\mathcal{R}$ at that time. Otherwise, there is an \emph{outage event}~\cite{ozarow1994information}. Yet, if the channel statistics is available to the user, it may code over many blocks (time slots) while taking into account all fading values of $\bh_i$, and thus attain the ergodic rate $\E\brkt{\log\prnt{1 + P\Vert \bh_i \Vert^2}}$. The main focus of this work is the rates that can be achieved in this ergodic sense.}

\BLUE{As mentioned earlier, when scheduling, for example, only one user per slot, it is beneficial to schedule a user whose channel gain is the strongest in that slot.
Since the users' channel are drawn independently in each slot, the largest channel gain, denoted by $\bh_{(1)}$, may be assigned to a different user in each slot. In terms of performance, the average boost in capacity when scheduling the strongest user is $\E\brkt{\log\prnt{1 + P\Vert \bh_{(1)} \Vert^2}}$. Yet, this requires to obtain all channels' gain to determine $\Vert \bh_{(1)} \Vert^2$  in each slot. In many situations, however, the scheduler cannot afford the overhead in collecting this information to select the strongest user. Threshold based scheduling, on the other hand, in which a user with above high threshold channel gain selects itself  may attain similar results with significantly less overhead \cite{kampeas2014capacity}.} 

\BLUE{This paper focus on a threshold based scheduling for \emph{multiuser selection}. In particular,  a group of users schedule themselves if their channel's gain is above high threshold, without knowing the number of above-threshold users at a time, nor their correlation with the other selected users. Hence, analyzing the ergodic rate, which determined by both the random number of scheduled users and the channel correlation between them, is challenging. For example, considering a trivial upper bound for the case where a \emph{group of the $r$ strongest users transmit together}, assuming orthogonality, the resulting \emph{ergodic sum-rate} can be upper bound by $r \E\brkt{\log\prnt{1+P\Vert \bh_{(1)}\Vert^2}}$. Clearly, out goal is deriving stronger bounds, under different user scheduling and detection methods.}
}

\BLUE{This paper focuses on a threshold based scheduling for \emph{multiuser selection}. In particular, a group of users schedule themselves if their channels' gains are above a high threshold, without knowing the number of above-threshold users at that slot, or their channel correlation with the other scheduled users. The channel correlation highly affects the resulting rate of each user. That is, even if each user knows its own CSI, due to the inter-user interference, determining the achievable transmission rate is impossible without coordination.  As a result, when each user transmits at a fixed rate $\mathcal{R}$, the BS can decode that transmission only if the capacity is greater than $\mathcal{R}$ at that time. Otherwise, there is an \emph{outage event}~\cite{ozarow1994information}. Alternatively, a user may code over many blocks (time slots) while averaging all fading values and taking into account the inter-user interference statistics, and thus attain the ergodic rate. This paper focuses on the latter approach, i.e., analysing the ergodic capacity under distributed multiuser selection with MMSE and ZF detection.}

\off{
The capacity obtained by letting \RED{an} \BLUE{a single} arbitrary user $i$ with channel norm $\Vert h_i \Vert^2$ utilize the block-fading channel is given by  $C = \log(1 + P \Arrowvert h_i \Arrowvert^2)$. \RED{We will later on assume} 
\RED{ $\bh_i$ is random, and take the expectation.} \RED{However, this should not be confused with the fast-fading channel model, where the channel changes frequently within a slot, the (ergodic) capacity is \mbox{$C = \text{E}\left[\log(1 + P \Vert \bh_i \Vert^2)\right]$}. In this paper we use the block-fading model, and the expectations we will use will take on a different interpretation, as explained below.}

As mentioned earlier, when scheduling, for example, only one user per slot, it is beneficial to schedule the strongest user in that slot. This, requires
Denote by $h_{(1)}$ the received channel vector with the largest norm.
Scheduling the strongest user thus results in  $C = \log(1 + P \Arrowvert h_{(1)} \Arrowvert^2)$\RED{. Note that here} \BLUE{, when} $h_{(1)}$ is fixed and known. However, in this simple example, \RED{we are interested in} \BLUE{consider} the \emph{expected performance} \RED{of a system} \BLUE{over many time slots} where in each time slot all users have a new realization of their channel vector, and the strongest transmits. In this case, the channel vectors are random (Gaussian), $\bf{h}_{(1)}$ will have some extreme distribution (as it is the one with the largest norm), and the expected performance will be $E[\log(1 + P \Arrowvert \bf{h}_{(1)} \Arrowvert^2)]$, where the expectation is over \RED{the distribution} \BLUE{all fading transitions} of $\bf{h}_{(1)}$, and hence, implicitly, includes the fact that we use an \emph{algorithm}, e.g., ``choose the user with the highest norm". \BLUE{Note that this expectation can be understood in the sense of randomization in the choice of the users and the random value of the strongest channel norm.}

Moreover, as this paper is focused on \emph{multi-user scheduling}, we will employ and algorithm to select a subset of the users, and will be interested in the expected performance, where the expectation will be over the randomized user selection and the distributions of the selected users. For example, considering a trivial upper bound for the case where a \emph{group of the $r$ strongest users transmit together}, assuming orthogonality, the resulting \emph{sum rate} can be upper bound by $r\log \left(1 + P \Arrowvert h_{(1)} \Arrowvert^2\right)$ for a single slot. \RED{Again, it is interesting to analyze the expected performance when scheduling such strongest users each time, assuming, for example, they are chosen out of $K$ users, each of which having a Gaussian channel vector. Thus, when we write $\mathrm{E}\{\log(1+P\Vert \bh_{(1)}\Vert^2)\}$, this should be understood in the sense of the randomization in the choice of the users and the random value of the strongest norm, and \emph{not in the sense} of ergodic capacity, were $\bh$ changes rapidly and we take long enough blocks for ergodicity to kick in.} Clearly, our goal is deriving stronger bounds, under different user scheduling and detection methods.
}
\off{
\begin{remark}
For the downlink channel, the achievable sum-rate upper bound has the form~\cite{jagannathan2007scheduling}
\begin{equation*}
R \leq r\log \left(1 + \frac{P}{r} \Arrowvert h_{(1)} \Arrowvert^2\right).
\label{eq: sum-rate bound}
\end{equation*}
The difference between the uplink to the downlink originates in the power constraint applied to the transmitter. That is, in the downstream, when transmitting to a group of receivers, each receiver gets a share of the available power, while in the upstream, there is a group of transmitters that transmit to a single receiver. It should be noted that usually the power constraint $P$ for the downlink and uplink are not equal, since the base station has a strong and steady power supply, whereas the user has a limited battery power supply.
\end{remark}
}
\subsection{Multi-User Diversity Via EVT}
EVT is a key tool in \red{evaluating}\blue{our evaluation of} the rate under scheduling in multi-user systems. In this subsection we briefly review the most relevant results in this context. In addition, we develop new normalizing constants for the problem at hand (EVT for the $\chi^2$ distribution), which will later aid at speeding up convergence results.

As the sum rate is mainly influenced by the channel vectors' gains and directions, our goal is to explore this behavior for a large number of users. Specifically, we first wish to explore the behavior of the maximal gain. Since the entries of $\bh$ are complex Gaussian, the channel gain follows a $\chi^2$ distribution with $2r$ degrees of freedom, denoted $\chi^2_{2r}$. \blue{Let `$\stackrel{d}{\to}$' denote convergence in distribution.} We utilize the following EVT theorem.
\begin{theorem}[\cite{de2006extreme,leadbetter1983,embrechts2011modelling}]\label{theorem. EVT}
Let $\bz_1,..\bz_K$ be a sequence of i.i.d.\ random variables with distribution $F(z)$, and let $\bM_K = \max(\bz_1,...,\bz_K)$.
If there exists a sequence of normalizing constants $a_K>0$ and $b_K$ such that as $K\rightarrow \infty$,
\begin{equation*}\label{eqn: P(Mn) approx}
\Pr(\bM_K \leq a_K z + b_K)\stackrel{d}{\to} G(z)
\end{equation*}
for some non-degenerate distribution G, then G is of the generalized extreme value (GEV) distribution type
\begin{equation*}\label{eqn: G def}
G(z) = \exp\left\{-(1+\xi z)^{-1/\xi}\right\}
\end{equation*}
and we say that $F(z)$ is in the domain of attraction of $G$,
where $\xi$ is the shape parameter, determined by the ancestor distribution $F(z)$.
\end{theorem}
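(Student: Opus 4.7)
My plan is to derive the defining \emph{max-stability} relation for $G$ and then classify its solutions. First I would rewrite the hypothesis, using $\Pr(\bM_n\le y)=F^n(y)$, as $F^n(a_nx+b_n)\to G(x)$ at every continuity point of $G$. For any fixed $k\in\N$, applying this along the subsequence $nk$ yields $F^{nk}(a_{nk}x+b_{nk})\to G(x)$, while the identity $F^{nk}(y)=[F^n(y)]^k$ together with continuity of $u\mapsto u^k$ gives
\[
F^{nk}(a_nx+b_n)=[F^n(a_nx+b_n)]^k\to G^k(x).
\]
Since $G$ is non-degenerate, so is $G^k$; the same sequence $F^{nk}$ therefore admits two affine rescalings converging to non-degenerate limits.

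Next, I would invoke Khinchin's convergence-of-types theorem: under these conditions the two rescalings must be asymptotically affinely equivalent, and the two limits can differ only by that affine change of variable. This produces constants $\alpha_k>0$ and $\beta_k\in\R$ with $G^k(x)=G(\alpha_k x+\beta_k)$ for every $k\in\N$. A standard monotonicity/density argument then extends this from integer $k$ to real exponents $t>0$, giving the full max-stability identity $G^t(x)=G(\alpha_t x+\beta_t)$.

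Finally, set $U(x)=-\log G(x)$, a non-negative and non-increasing function on the support of $G$. The max-stability identity becomes the Cauchy-type functional equation $U(\alpha_t x+\beta_t)=tU(x)$ for all $t>0$. Its monotone solutions split into three regimes depending on whether $\alpha_t\equiv 1$ (forcing $U$ to be exponential and yielding the Gumbel type) or $\alpha_t$ is a nontrivial power of $t$ with a finite fixed point of the affine map $x\mapsto\alpha_t x+\beta_t$ (yielding the Fr\'echet or Weibull type according to the sign of the shape). Up to an affine change of variable, all three can be written in the single parametric form $G(x)=\exp\{-(1+\xi x)^{-1/\xi}\}$, with $\xi=0$ recovered as a limit. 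I expect the main obstacle to be this last step: the careful case analysis showing that the only monotone solutions are the GEV family, and the continuity argument needed to treat $\xi=0$ as the boundary between the other two regimes.
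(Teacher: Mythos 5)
This is the classical Fisher--Tippett--Gnedenko extremal types theorem, which the paper simply quotes from \cite{de2006extreme,leadbetter1983,embrechts2011modelling} without giving any proof, so there is no in-paper argument to compare against. Your outline --- deriving max-stability $G^k(x)=G(\alpha_k x+\beta_k)$ via Khinchin's convergence-of-types theorem, extending to real exponents, and classifying the monotone solutions of the functional equation $U(\alpha_t x+\beta_t)=tU(x)$ into the Gumbel, Fr\'echet and Weibull types unified (up to affine change of variable) as $G(x)=\exp\{-(1+\xi x)^{-1/\xi}\}$ --- is exactly the standard proof found in those references and is correct as a plan.
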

\blue{It was shown that} when $\{\bz_K\}$ is a sequence of i.i.d.\ $\chi^{2}_{2r}$ variables, the asymptotic distribution of $\bM_K$ is a Gumbel distribution (e.g., \cite[pp. 156]{embrechts2011modelling}). Specifically,
\begin{equation}
\Pr(\bM_K \leq a_K z + b_K) \stackrel{d}{\to} \e^{-\e^{-z}},
\label{eq: Gumbel distribution}
\end{equation}
where
\begin{eqnarray}\label{eqn: a_n normalized slow}
   a_K &=& 2,
\\
\label{eqn: b_n normalized slow}
  b_K &=&2\left( \log K +(r-1)\log\log K - \log \Gamma(r) \right) + o(1),
\end{eqnarray}
and $\Gamma(r)=\int_{0}^{\infty}t^{r-1}\e^{-t}dt$ is the Gamma function.

\label{commentr22}\blue{However, the convergence of the maxima to the Gumbel distribution is quite slow for i.i.d.\ $\chi^2_{2r}$ random variables, when using the above normalizing constants. That is, the approximation of the maximal value using the above normalizing constants and the Gumbel distribution will not be tight for moderate values of $r,K \in \mathbb{N}$. Thus, to provide insight into practical setups, we devise new normalizing constants for the maxima of $\chi^2$-distribution, which have the same asymptotic limit as the constants in \eqref{eqn: a_n normalized slow} and \eqref{eqn: b_n normalized slow}, yet, as will be shown experimentally, attain an accurate approximation even for moderate $K$ and $r$ (i.e., the approximation using these constants converges much faster). The method used to derive the new constants is technical, and hence deferred to \Cref{proof:normalizing_constants}. It results in the following constants.
\begin{eqnarray}\label{eqn: a_n normalized}
   a_{\{K,r\}} &=& \frac{2}{K} \Gamma(r) \exp\left\{Q^{-1}\left(r,\frac{1}{K}\right)\right\} Q^{-1}\left(r,\frac{1}{K}\right)^{1-r}
\\
\label{eqn: b_n normalized}
  b_{\{K,r\}} &=& 2Q^{-1}\left(r,\frac{1}{K}\right) + o(a_{\{K,r\}}),
\end{eqnarray}
where $Q^{-1}\left(r,\frac{1}{K}\right)$ is the inverse of the regularized upper incomplete gamma function, that is, $Q(r,z)= \frac{\Gamma(r,z)}{\Gamma(r)}$, and the inverse is defined with respect to $z$. Accordingly, throughout this paper, whenever the normalizing constants are required (e.g., all simulation results), we utilize the constants above.
It is important to note that the main results of this paper, which are the scaling laws, are not affected by the constants used, as long as they have the same asymptotic behaviour. The difference is only in the speed of convergence.

\label{comment_on_fig_1}Figure \ref{fig:max_norm_dist30} evaluates the expression for the distribution of the maxima of $K$ independent variables (with a $\chi^2$-distribution each), using the normalizing constants in \eqref{eqn: a_n normalized} and \eqref{eqn: b_n normalized}, and taking $o(a_{\{K,r\}})$ to be $0$. In particular, we start from the convergence result in \eqref{eq: Gumbel distribution}. This result holds, of course, when $K\to\infty$. However, letting $\zeta = a_K z + b_K$, and using this change of variables, the distribution for finite $K$ can be expressed as $\Pr\left(\bM_K \leq \zeta \right) = \e^{-\e^{-(\zeta- b_K)/a_K}}.$ Of course, differentiating this w.r.t.\ $\zeta$ provides a probability density function (PDF) for any finite $K$. This distribution is depicted in the figure (solid line) for $30$ chi-squared random values (i.e., $K=30$) with $r=4$ (left plot) and $r=16$ (right plot) degrees of freedom. We compare these analytical results with simulation results (presented by bars) which are generated by randomly generating $K=30$ random values, drawn from the $\chi^2$-distribution and selecting the maximal value for each such instance, repeating the process for $10,000$ instances, and presenting the histogram of the outcome (i.e., the empirical PDF). As can be seen in the figure, the agreement between the simulation results and analytical results utilizing the normalizing constants suggested in \eqref{eqn: a_n normalized} and \eqref{eqn: b_n normalized} is evident even for a small number of random variables. For comparison, we also depict the analytical results obtained by utilizing the normalizing constants previously suggested in the literature. In particular, the dashed lines depict the resulting Gumbel distribution under the normalizing constants given in \eqref{eqn: a_n normalized slow} and \eqref{eqn: b_n normalized slow}. As can be seen in the figure, the maximal value distribution is not well approximated by the Gumbel distribution and these normalizing constants for moderate $K$ ($K=30$). Obviously, for asymptotically large $K$ both sets of normalizing constants provide good approximation, converging to the Gumbel distribution at the limit (a proof is given in \Cref{sec:proofs}).   
\begin{figure}
	\centering
		\includegraphics[scale=0.6]{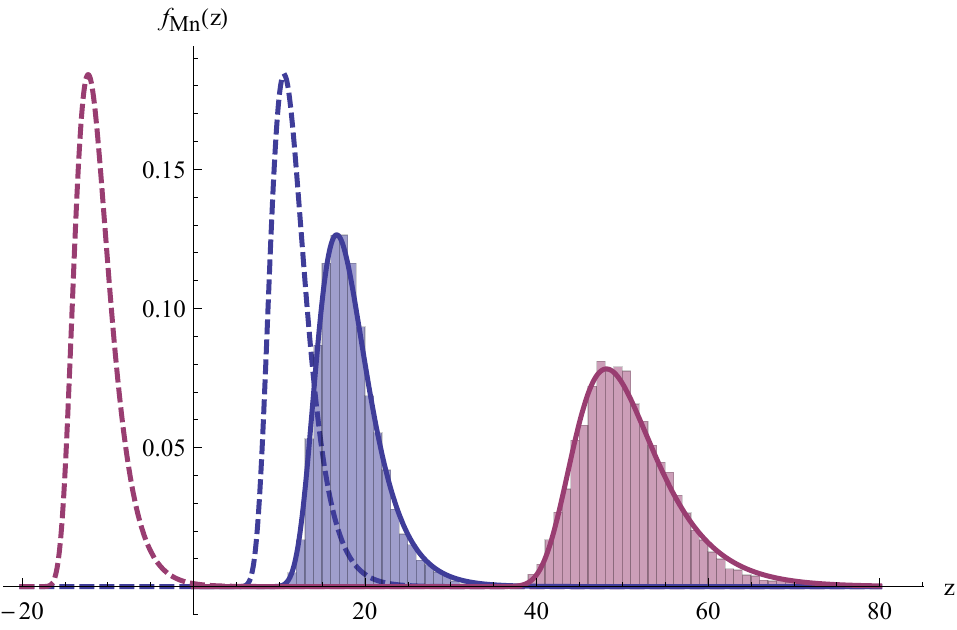}
	\caption{Convergence of the distribution of the maximum to the analytical result. Bars depict the maximal channel norm values among $K=30$ channels, according to our simulation results, while the solid lines depict the max norm Gumbel distribution with $4$ and $16$ degrees of freedom, under the normalizing constants derived in \eqref{eqn: a_n normalized} and \eqref{eqn: b_n normalized}. The dashed lines depict the max norm Gumbel distribution with the same degrees of freedom, but under the normalizing constants that are found in the literature, and are given in \eqref{eqn: a_n normalized slow} and \eqref{eqn: b_n normalized slow}.  }
	\label{fig:max_norm_dist30}
\end{figure}
} 

\off{
However, for i.i.d.\ $\chi^2_{2r}$ random variables, the convergence of the maxima to the Gumbel distribution using the above normalizing constants is quite slow. That is, the approximation of the maximal value will not be tight for moderate  values of $r,K \in \mathbb{N}$.
Hence, a more appropriate set of normalizing constants for the $\chi^2_{2r}$ distribution, which takes into account both $r$ and $K$  should be derived.

\blue{Accordingly, new  normalizing constants are derived in the following for the $\chi^2$-distribution maxima that satisfies the same asymptotic limit of the constants in \eqref{eqn: a_n normalized slow} and \eqref{eqn: b_n normalized slow}, yet,  as will be shown experimentally,  these constants attain a better agreement when approximating an extremal value distribution, rather than \eqref{eqn: a_n normalized slow} and \eqref{eqn: b_n normalized slow}, for finite $K$ and $r$.  In the rest of the paper, we use the following constants. 
\begin{eqnarray}\label{eqn: a_n normalized}
   a_{\{K,r\}} &=& \frac{2}{K} \Gamma(r) \exp\left\{Q^{-1}\left(r,\frac{1}{K}\right)\right\} Q^{-1}\left(r,\frac{1}{K}\right)^{1-r}
\\
\label{eqn: b_n normalized}
  b_{\{K,r\}} &=& 2Q^{-1}\left(r,\frac{1}{K}\right) + o(a_{\{K,r\}}),
\end{eqnarray}
where $Q^{-1}\left(r,\frac{1}{K}\right)$ is the inverse of the regularized upper incomplete gamma function, that is, $Q(r,z)= \frac{\Gamma(r,z)}{\Gamma(r)}$, and the inverse is defined with respect to $z$. A proof for the validity of these constants is deferred to \Cref{sec:proofs}.}

\red{In the following we derive new normalizing constants $a_{\{K,r\}}$ and $b_{\{K,r\}}$ which takes into account the number of degrees-of-freedom, hence, enables faster convergence to the Gumbel distribution. In particular, using these quantities to normalize and scale the maximum, and plugging these into the Gumbel distribution, the scaled Gumbel distribution is  a more accurate approximation of the actual maximum histogram for finite $K$ and $r$.  In \mbox{Figures~\ref{fig:max_norm_dist30} - \ref{fig:an_convergence}}, it can be seen that with the normalizing constants derived herein, the maximum value distribution is well approximated for various numbers of degrees-of-freedom $r$, even for moderate dataset size $K$. Furthermore, we see in \mbox{Figures~\ref{fig:an_convergence}} and \mbox{\ref{fig:convergence of b}}  that at the limit, our constants coincide with the constants that are found in the literature. The proof of the following \mbox{\Cref{lem. EVT normalizing constants} is deferred to \Cref{sec:proofs}}}.
\redo{\begin{lemma}\label{lem. EVT normalizing constants}
For the $\chi^2_{2r}$-distribution, the following normalizing constants apply.
\begin{eqnarray}\label{eqn: a_n normalized}
   a_{\{K,r\}} &=& \frac{2}{K} \Gamma(r) \exp\left\{Q^{-1}\left(r,\frac{1}{K}\right)\right\} Q^{-1}\left(r,\frac{1}{K}\right)^{1-r}
\\
\label{eqn: b_n normalized}
  b_{\{K,r\}} &=& 2Q^{-1}\left(r,\frac{1}{K}\right) + o(a_{\{K,r\}}),
\end{eqnarray}
where $Q^{-1}\left(r,\frac{1}{K}\right)$ is the inverse of the regularized upper incomplete gamma function, that is, $Q(r,z)= \frac{\Gamma(r,z)}{\Gamma(r)}$, and the inverse is defined with respect to $z$.
\end{lemma}}

\off{ 
\begin{proof}
The $\chi^2$ distribution is a special case of the gamma distribution. I.e., if $\bx \sim \chi^2_{2r}$ then $\bx \sim \Gamma(r ,\beta = 2)$, where $\Gamma(r ,\beta = 2)$ is the Gamma distribution with shape parameter $r$ and rate parameter $\beta$.
 According to the EVT, $b_n$ is the $1-1/n$ quantile, i.e., $1-F_{\chi^2}(b_n)=1/n$, and the corresponding $a_n$ is equals to $h(b_n)$, where $h(x)$ is the reciprocal hazard function \cite{de2006extreme,leadbetter1983}
\begin{equation*}\label{eqn: h(x)}
  h(x)=\frac{1-F(x)}{f(x)} \textmd{ for } x_F \leq x \leq x^F,
\end{equation*}
where \mbox{$x_F = \inf \{x:F(x)>0\}$ and $x^F = \sup \{x: F(x)< 1\}$} are the lower and upper endpoints of the ancestor distribution, respectively.
Accordingly, for the $b_{\{n,r\}}$ constant we consider the $1-1/n$ quantile of the  Gamma distribution, which can be obtained by using the inverse of the regularized upper incomplete gamma function. In particular, $b_{\{n,r\}} = \beta Q^{-1}\left(r,\frac{1}{n}\right)$ yields the $1-1/n$ quantile of the  Gamma distribution. To attain the $a_{\{n,r\}}$ constant, let us examine the hazard function $h(x)$ of the Gamma distribution.
\begin{eqnarray*}
h(x/ \beta) &=& \frac{1-F_{\Gamma}(x/\beta)}{f_{\Gamma}(x/\beta)}\\
     &=&  \beta \e^{x/\beta } (x/\beta)^{1-r}\Gamma(r)(1-F_{\Gamma}(x/\beta)).
\end{eqnarray*}
Accordingly,  for $x = b_{\{n,r\}}$ we obtain,
\begin{eqnarray*}
a_{\{n,r\}} &=& h\left(b_{\{n,r\}}/\beta\right)
\nonumber
\\
 &=& \frac{\beta}{n} \Gamma(r) \exp\left\{Q^{-1}\left(r,\frac{1}{n}\right)\right\} Q^{-1}\left(r,\frac{1}{n}\right)^{1-r}.
\end{eqnarray*}
\end{proof}
} 

\red{\mbox{Figure \ref{fig:max_norm_dist30}} depicts the maximal value out of $30$ random values, drawn from the $\chi^2$-distribution, according to the simulation results (bars) versus the analytical results of the Gumbel distribution with the new normalizing constants derived herein, $a_{\{K,r\}}$ and $b_{\{K,r\}}$ given in \mbox{\eqref{eqn: a_n normalized} and \eqref{eqn: b_n normalized}}, respectively (solid line).} 

\blue{Figure \ref{fig:max_norm_dist30} evaluates the normalizing constants in \eqref{eqn: a_n normalized} and \eqref{eqn: b_n normalized} (with $o(a_{\{K,r\}})=0$). That is, starting from \eqref{eq: Gumbel distribution},  one can make variable change, such that  $a_K z + b_K = \zeta$ and express this distribution as 
$$\Pr\left(\bM_K \leq \zeta \right) \stackrel{d}{\to} \e^{-\e^{-(\zeta- b_K)/a_K}}$$ 
Then, differentiating this w.r.t.\ $\zeta$, one obtains analytically a density function, which is given in the figure, compared with a simulated maximal value out of $30$ random values, drawn from the $\chi^2$-distribution.}
To see how the new normalizing constants relate to the previous ones reported in the literature, the dashed line depicts the resulting Gumbel distribution under the normalizing constants given in \eqref{eqn: a_n normalized slow} and \eqref{eqn: b_n normalized slow}. We point out that that the maximal value distribution is not well approximated by the Gumbel distribution under these normalizing constants while the values of $K$ are moderate. This is since the maximal value distribution is well approximated by the Gumbel distribution under these constants only as $K$ gets asymptotically large. Of coarse, the constants derived herein and the constants in the literature are equal at the limit $K \to \infty$.   \red{ To see this, \mbox{\Cref{fig:an_convergence}} depicts the value of $a_{\{K,r\}}$ for several values of $r$, as $K$ increases. While the new constant converges to $2$ slowly, it is constant for moderate values of $K$, hence the tight convergence of \emph{the distribution of the maximum to the Gumbel distribution}}.
\begin{figure}
	\centering
		\includegraphics[scale=0.6]{../max_norm_dist30.pdf}
	\caption{Convergence of the distribution of the maximum to the analytical result. Bars depict the maximal channel norm values among $K=30$ channels, according to our simulation results, while the solid lines depict the max norm Gumbel distribution with $4$ and $16$ degrees of freedom, respectively, under the normalizing constants derived in \eqref{eqn: a_n normalized} and \eqref{eqn: b_n normalized}. The dashed lines depict the max norm Gumbel distribution with the same degrees of freedom, however, under the normalizing constants that are found in the literature, and are given in \eqref{eqn: a_n normalized slow} and \eqref{eqn: b_n normalized slow}.  }
	\label{fig:max_norm_dist30}
\end{figure}
} 
\subsection{Linear Receivers}\label{subsection linear receivers}
\blue{
In this paper we aim to analyse the ergodic sum-rate of a practical system in which only a subset of the users is scheduled at each time slot and only linear decoders are considered at the BS. In particular,  we focus on either the ZF receiver (Section~\ref{sec. dist alg}) or the MMSE receiver (Section~\ref{sec. MMSE rec}). We further assume optimal coding of the resulting single user Gaussian channels, given the effective Signal to Noise Ratio (SNR).

Specifically, for the ZF receiver, focusing on the signal received from the $i$-th user, rewrite \eqref{eq: channel model} as:
\begin{equation*}
\by = \bh_i \bx_i + \sum_{\ell \neq i}^{j} \bh_\ell \bx_\ell + \bw.
\label{eq: channel model2}
\end{equation*}
Let $\bV_i$ be a unitary matrix representing the null space of the subspace spanned by $\{\bh_\ell\}_{\ell \neq i}$. Since the entries of the channel vectors are i.i.d., when $j$ users transmit the subspace spanned by the vectors $\{\bh_\ell\}_{\ell \neq i}$ has rank $j-1$ with probability one \cite[Ch. 8.3.1]{tse2005fundamentals}. Thus, to null the inter-stream interference, the receiver projects the received vector $\by$ on the subspace spanned by $\bV_i$, and detects stream $i$ by match filtering the resulting vector $\bV_i \bh_i$.  Consequently, the ZF receiver is the vector $\bV_i^\dagger \bV_i \bh_i$, which is the closest direction within the subspace $\bV_i$ to $\bh_i$. Note that this channel inversion technique is simply the $i$-th row vector of the pseudo-inverse matrix $(\bH^\dagger \bH)^{-1}\bH^\dagger$. The effective channel gain in this case is $\Vert \bV_i \bh_i \Vert^2$. Note also that a full degrees-of-freedom gain is attained when  $r$ users transmit. In this case, $\dim(\bV_{i})=1$,  and $\Arrowvert \bV_i \bh_i \Arrowvert^2 \sim \chi^2_{2}$, e.g., \cite{tse2005fundamentals}. Accordingly, when using a ZF receiver, we aim at algorithms which select at most $r$ users (of the available $K$) at each time slot. The above discussion also stresses out the point that the number of transmitting users, $j$, influences the distribution of $\bV_i$. This will be crucial in our future derivations. Thus, hereafter, we will use the notation $\bV_i(j)$ to emphasise this fact.

\label{commentr23}In general, a scheduler with CSI is defined by a function $\mathcal{I} = \mathcal{I}(\bH)$ that maps a set of $K$ channels $\bH$ to an index subset $\mathcal{I} \subset [1:K]$ of cardinality less than or equal to $r$, which reflects the set of selected users. Assume $|\mathcal{I}(\bH)|=j$. The ergodic rate of user $i$ under such scheduling is
\begin{equation}
\mathcal{R}_i = \E\brkt{\log\prnt{1+P \Vert \bV_i(j) \bh_i \Vert^2 }\times \indicator{i \in \mathcal{I}(\bH)}},
\label{eqn:ergodic_rate}
\end{equation}
where $\indicator{i \in \mathcal{I}(\bH)}$ is the indicator function of the event that user $i$ is selected with channels state $\bH$, along with other users in the subset $\mathcal{I}(\bH)$. In the ZF detection technique,  $\bV_i(j)$ captures the influence of the other scheduled users on user $i$'s instantaneous rate, according to the correlation between the channels of the users in $\mathcal{I}(\bH)$.  
 Consequently, the ergodic rate of user $i$, as defined in \eqref{eqn:ergodic_rate}, not only depends on whether user $i$ is scheduled for transmission or not, but also on the channel state of all other scheduled users. Both aspects are determined based on the scheduler policy. For example, in a round robin scheduler, user $i$ is selected whenever its turn has arrived. As far as user $i$ is concerned, the other users are selected arbitrarily, independent from its channel. On the other hand, a different scheduler may select users based on some channel criteria (e.g., channels correlation, channels gain, fairness, etc.), which results in a different (ergodic) rate. 

Summing over all users and taking the expectation with respect to the channel state matrix ($\bH$) distribution, the ergodic sum rate is
\begin{align*}
\mathcal{R} &= \sum_{i=1}^K \E \brkt{\log\prnt{1+ P \Vert \bV_i(j)  \bh_i \Vert^2} \times \indicator{i \in \mathcal{I}(\bH)}}&\\
&= \E \brkt{\sum_{i=1}^K  \log\prnt{1+ P \Vert \bV_i(j) \bh_i \Vert^2} \times \indicator{i \in \mathcal{I}(\bH)}}&\\
&= \E  \brkt{\sum_{i \in \mathcal{I}(\bH)} \log\prnt{1+ P \Vert \bV_i(j)  \bh_i \Vert^2}}.
\end{align*}
\label{commentr19}Accordingly, considering all index subsets $\mathcal{I}$ with size $\vert \mathcal{I} \vert =j \leq r$, the optimal scheduler maximizes the above expectation. That is, the maximization is over all \emph{selection functions} $\mathcal{I}(\bH)$, which yield subsets of cardinality $\leq r$. Accordingly, the (ergodic) sum rate of the optimal scheduler in this case is
\[
\mathcal{R} = \max_{\mathcal{I}(\bH)} \E  \brkt{\sum_{i \in \mathcal{I}(\bH)} \log\prnt{1+ P \Vert \bV_i(j)  \bh_i \Vert^2}}.
\]
Obviously, this maximization is hard to solve as it needs to examine all possible sets of users which are smaller than or equal $r$. Furthermore, such scheduler relies on knowledge of the matrix $\bH$, i.e., acquiring the channel state from all users. Accordingly, substitute schedulers are considered in practice, which try to minimize the overhead and complexity of such a centralized process, and select a group of users, preferably in a distributed manner, approximating the optimal selection.
}
\off{
As we aim to analyse the \BLUE{ergodic} \RED{achievable} sum-rate under practical constraints such as scheduling only a subset of the users in each time slot,  we focus our attention on linear decoding at the BS. \RED{Such decoders} Detectors, such as the ZF or the MMSE  detector, are indeed \red{widely used in practice} \blue{ the building blocks of advanced MIMO transceiver designs.} Hence, the analysis in this paper will be based on either ZF receiver (\Cref{sec. dist alg}) or the MMSE receiver (\Cref{sec. MMSE rec}), assuming optimal coding of the resulting single user Gaussian channels, given the effective Signal to Noise Ratio (SNR).

Specifically, for the ZF receiver, focusing on the signal received from the $i$th user, rewrite \eqref{eq: channel model} as:
\begin{equation*}
\by = \bh_i \bx_i + \sum_{\ell \neq i}^{j} \bh_\ell \bx_\ell + \bw.
\label{eq: channel model2}
\end{equation*}
Let $\bV_i$ be a unitary matrix representing the null space of the subspace spanned by $\{\bh_\ell\}_{\ell \neq i}$. Since the entries of the channel vectors are i.i.d., when $j$ users transmit the subspace spanned by the vectors $\{\bh_\ell\}_{\ell \neq i}$ has rank $j-1$ with probability one \cite[Chapter 8.3.1]{tse2005fundamentals}. Thus, to  null the inter-stream interference, the receiver projects the received vector $\by$ on the subspace spanned by $\bV_i$, \BLUE{and detects stream $i$ by match filtering the resulting vector $\bV_i \bh_i$. Consequently, the ZF receiver is the vector $\bV_i^\dagger \bV_i \bh_i$, which is the closest direction within the subspace $\bV_i$ to $\bh_i$.  Note that this channel inversion technique is simply the  $i_{th}$ row vector of the pseudo-inverse matrix $(\bH^\dagger \bH)^{-1}\bH^\dagger$. The effective channel gain in this case is $\Arrowvert \bV_i \bh_i \Arrowvert^2$.} \RED{, and nulls the inter-stream interference. Finally, the signal of user $i$ can be demodulated using a matched filter (i.e., maximal ratio combiner).}

Note that a full degrees-of-freedom gain is attained when  $r$ users transmit. In this case, $\dim(\bV_{i})=1$,  and $\Arrowvert \bV_i \bh_i \Arrowvert^2 \sim \chi^2_{2}$, e.g., \cite{tse2005fundamentals}. Accordingly, when using a ZF receiver, we aim at algorithms which select at most $r$ users (of the available $K$) in each time slot.
\red{As mentioned, since we focus on the scenario in which $K \gg r$, the set of selected users has a crucial effect on the system capacity. 
Optimally, a BS would receive CSI from all users, and schedule the $r$ best users for transmission. Under the linear decorrelation above, the resulting ergodic sum-rate is}\redo{
\begin{equation*}\label{eq. ld expected capacity}
\max_{\cI \subset \{1,\ldots K\}, |\cI|=r} \sum_{i\in \cI} \E\log\left( 1 + P \Arrowvert \bV_i \bh_i \Arrowvert^2\right).
\end{equation*}}
\blue{Essentially, a scheduler with CSI is defined by a function $\mathcal{I} = \mathcal{I}(\bH)$ that maps a set of $K$ channels $\bH$ to an index subset $\mathcal{I} \subset [1:K]$ of cardinality $\leq r$, which reflects the set of selected users. The ergodic rate of user $i$ in under such scheduling is
\[
\mathcal{R}_i = \E\brkt{\log\prnt{1+P \Vert \bV_i \bh_i \Vert^2 \times \indicator{i \in \mathcal{I}(\bH)}}}
\] 
where $\indicator{i \in \mathcal{I}(\bH)}$ is the indicator function of the event that user $i$ is selected with channels state $\bH$, along with other users in the subset $\mathcal{I}(\bH)$. Accordingly, the ergodic sum rate is
\begin{align*} 
&\mathcal{R} = \sum_{i=1}^K \E \brkt{\log\prnt{1+ P \Vert \bV_i \bh_i \Vert^2} \times \indicator{i \in \mathcal{I}(\bH)}}&\\
&\mathcal{R} = \E \brkt{\sum_{i=1}^K  \log\prnt{1+ P \Vert \bV_i \bh_i \Vert^2} \times \indicator{i \in \mathcal{I}(\bH)}}&\\
&\mathcal{R} = \E  \brkt{\sum_{i \in \mathcal{I}(\bH)} \log\prnt{1+ P \Vert \bV_i \bh_i \Vert^2}} &
\end{align*}
Finally, considering all index subsets $\mathcal{I}$ with size $\vert \mathcal{I} \vert \leq r$, the optimal scheduler maximizes the above expectation. That is, the maximization is over all \emph{selection functions} $\mathcal{I}(\bH)$, which yield subsets of cardinality $\leq r$. Accordingly,  
\[
\mathcal{R} = \max_{\mathcal{I}(\bH)} \E  \brkt{\sum_{i \in \mathcal{I}(\bH)} \log\prnt{1+ P \Vert \bV_i \bh_i \Vert^2}}
\]
Of course, this maximization is hard to solve in such general form. 
}
\red{However} \blue{Furthermore}, we wish to avoid the overhead and complexity of such a centralized process, and select a group of users, approximating the optimal selection, distributively.
} 

While simple and intuitive, the ZF receiver is limited in its performance. The MMSE receiver, however, although still linear, maximizes the mutual information and hence achieves better performance  (e.g. \cite{tse2005fundamentals,li2006distribution,kim2008performance}). In this receiver, to  detect data stream $i$, the receiver treats the rest of the  streams as noise. It then whitens the resulting colored noise and uses a matched filter to obtain maximum SINR.

Let $\bH$ be a  matrix whose columns are the channel vectors of the transmitting users. Similarly, let $\bH_{(-i)}$ be the matrix  $\bH$ with its $i_{th}$ column removed and define
\begin{equation}\label{R in mmse}
\bR = \left(\blue{P} \bH_{(-i)}\bH_{(-i)}^{\dagger}	 + I\right)^{-1}.
\end{equation}
Then, the corresponding output $\text{SINR}_i$ on the stream $i$ can be expressed by \cite{kim2008performance}:
\begin{equation}\label{sinr in mmse}
\text{SINR}_i = \blue{P} \bh_i^{\dagger}\bR \bh_i.
\end{equation}
This SINR value will be at the basis of our analysis in \Cref{sec. MMSE rec}.

\section{A Distributed Algorithm}\label{sec. dist alg}
\blue{The performance of the system, which only schedules a subset of users in each transmission opportunity, is highly dependent on the user selection procedure. In particular, two main aspects will highly influence the expected sum rate of such a system: (i) the quality of the channel between each selected user and the BS (ii) the inter-user interference between the selected users. The challenge is, hence, how to schedule users opportunistically, without the burden of pooling CSIs from many users or a tedious negotiation process prior to each transmission. A common distributed approach for single user selection, which aims to select only a single user opportunistically, is a threshold-based procedure, in which a capacity threshold is set, and only a user whose channel capacity exceeds it transmits (\cite{kampeas2012opportunistic,qin2003exploiting}). The algorithm examined herein, adopts a similar threshold-based approach to select a \emph{group of users}.

Determining the threshold raises several important questions. For example, on which variable should a threshold be set? What should be the threshold value and, specifically, how many users are expected to pass it on average? How will a user which exceeded the threshold determine its transmission rate, etc. Recall that the user attainable rate is influenced not only by its own channel but also by the mutual interference from the other transmitting users. In the sequel, we address the above questions. 

The Threshold-Based (TB) Channel Access algorithm we analyze, denoted {\scshape TB-Channel-Access}, works as follows: given the number of users $K$, we set a threshold $u_k$ on the \emph{channel norm} (gain) $\Vert \bh \Vert^2$. We assume that prior to each transmission opportunity, the BS sends a pilot signal from each of its $r$ antennas. A user with a channel norm greater than the threshold transmits. We further assume that the BS can estimate accurately the channels $\bh_i$ of the above-threshold users directly from their transmitted signals, e.g., by utilizing the signaling schemes in \cite{xu2015analysis,suh2003preamble,magistretti2014802}. 

Yet, a linear receiver cannot recover more than $r$ data streams simultaneously. That is, the event that more than $r$ users begin transmission simultaneously (which we term ``collision") results in an unsuccessful transmission attempt, and zero information is extracted from the received signal. Thus, the threshold should be set such that no more than $r$ users will exceed it on each transmission opportunity. On the other hand, if no user attempts transmission in a slot, it will also be wasted (remain idle). Accordingly, throughout this paper we say that a slot is utilized if at least one user, but no more than $r$ users, are transmitting, and unutilized otherwise. Note that the target number of users that their channel gain exceed the threshold on average (denoted by $k$), which sequentially determines the threshold itself, should be optimized in order to minimize the number of unutilized slot. 

In case that the users are not i.i.d., and specifically if the users' channel distributions are not identical, one might set different thresholds, such that the access probability remains fair. On the other hand, one might use a single threshold, and achieve better performance at the price of possible starvation of the weak users. \label{comment_r13}\blue{Clearly, the EVT results for i.i.d.\ random variables do not apply directly to the non i.i.d.\ case. However, in \cite{kampeas2012capdisthetnet}, a more refined version, using point process approximation for non-identically distributed variables was used to analyse the throughput in such a case. While the model therein assumed single-user scheduling, without intricate interference and multi-user issues, and hence is simpler than the one we consider here, some of the methods used can be applied to this problem, and shed light on the non i.i.d.\ case.}

The next challenge is in determining the rate that an above-threshold user (a self-scheduled user) should transmit. Note that for user $i$ knowing its own channel vector $\bh_i$, is not sufficient to determine its transmission rate, as it must also know the actual number of exceeding users, $j$, and their channel vectors, in order to achieve the rate $\log\prnt{1+P\Vert \bV_i(j) \bh_i \Vert^2}$ at a specific slot. Specifically, the rate that each above-threshold user $i$ can achieve under ZF depends on both its channel gain, $\Vert\bh_i \Vert^2$, \emph{and its correlation with the channels of other above-threshold users}, via the matrix $\bV_i(j)$. In this section, we analyze the {\scshape TB-Channel-Access} performance under two transmission schemes, which address the above challenge. In the first option, users transmit at their ergodic rate \cite{caire1999capacity}. \label{comment E8}That is, each user codes over a sequence of slots (blocks) \emph{in which it has above-threshold channel gain}. The BS waits until it collects the required number of blocks, and decodes based on all blocks of all above-threshold users. While this scheme can allow each user to achieve the ergodic rate $\E\brkt{\log\prnt{1+P\Vert \bV_i \bh_i \Vert^2}\big| \Vert\bh_i\Vert^2 > u_k}$ at the slots in which the user transmitted, without exchanging data about the rate each user should use, it may suffer from a large decoding delay. \label{commentr11part2}Thus, to evaluate the benefit of rate coordination, we provide a second, supplementary scheme, in which prior to each transmission, the BS sends all above-threshold users feedback (via, e.g., a broadcast message) that indicates their rate. Consequently, in this case, the user \emph{selection process} is distributed (i.e., the users still decide whether to transmit or not distributively, without exchanging any information between themselves or with the BS), yet the transmission rates are coordinated between the BS and the \emph{selected users}. Note that under this approach the BS can also avoid the collision slots by singling only a subset of the threshold exceeding users to transmit, in case more than $r$ exceeded, avoiding the high price of a collision slot.}

\off{
A common approach to select a single user distributively, is a threshold-based procedure, in which a capacity threshold is set, and only a user who exceeds it transmits (\cite{kampeas2012opportunistic,qin2003exploiting}). Of course, the events in which none of the users or several users exceed the threshold should be taken into account. In this paper, however, we wish to select a group of users, and analyse the resulting sum-rate.

At the heart of the algorithm we examine herein, stands a similar threshold-based procedure. However, the challenge is twofold. First, in selecting a threshold such that a favorable \emph{group of users} exceed it. Second, in analysing the results under the various detection procedures and at the limit of large $K$.
When doing this, a few important questions arise: On which variable should a threshold be set and how many users will pass it? How can one assess the mutual interference between the users which passed? What will be the loss in this distributed procedure compared to the optimal, centralized one?

In the next three sections, we answer the above questions. We set a threshold on the \emph{channel norms}, and analyse the resulting \RED{exceedance} \BLUE{exceeding} rate. We further analyse the mutual interference, in terms of the \emph{angles} between the exceeding users, and conclude by analysing the resulting \emph{sum-rate}, showing that a distributed algorithm can achieve the \emph{same scaling laws} as a centralized one.

The \blue{Threshold-Based (TB) Channel Access algorithm}, which we denote by {\scshape TB-Channel-Access} works as follow: given the number of users $K$, we set a threshold $u_k$ on the norm $\Vert \bh \Vert^2$, such that \blue{the number of users that exceed it on average is $k\leq r$.} \red{$k \leq r$ strongest users exceed it on average.} We assume that prior to each transmission opportunity, the BS sends a pilot signal from each of its $r$ antennas. A user with a channel norm greater than the threshold transmits. We further assume that the receiver (BS) can estimate accurately the channels $\bh_i$ of the above-threshold users directly from their transmitted signals, e.g., by utilizing the signaling schemes in \cite{xu2015analysis,suh2003preamble,magistretti2014802}. \red{Other techniques, such as including the channel vector as a low-rate preamble, are possible as well.}

\off{ 
\begin{figure}
\begin{algorithm}[H]
 \SetAlgoLined
\TitleOfAlgo{Channel-Access}
 \KwData{$\bh_i, u_k$}
\eIf{$\Arrowvert \bh_i \Arrowvert^2 > u_k$}{transmit}{keep silent}
\end{algorithm}
\caption{A simple channel access algorithm.}
\label{alg. channel access}
\end{figure}
} 
\red{Note that the} \blue{Yet, a linear} receiver cannot recover more than $r$ data streams. That is, since for more than $r$ users the performance (both under ZF decoding and MMSE decoding in Section \ref{sec. MMSE rec}) deteriorates significantly, if more than $r$ users begin transmission simultaneously, we assume a collision occurs and the whole slot is lost (zero rate). Similarly, since users act independently, a slot might be idle, if no user exceeded the threshold. Thus, we say that a slot is utilized if at least one user, but no more than $r$ users, are transmitting.

Note that even if user $i$ has its own \blue{channel} vector \red{\mbox{$h_i$}} \blue{$\bh_i$}, \red{the user} \blue{it} must know \blue{the actual number of exceeding users, $j$, and their} \red{the} channels \red{of the other scheduled users} in order achieve the rate \red{\mbox{$\log\prnt{1+P\Vert \bV_i h_i \Vert^2}$}} \blue{$\log\prnt{1+P\Vert \bV_i \bh_i \Vert^2}$}.  Specifically,  the 
 rate that each above-threshold user $i$ can achieve under ZF depends on both its channel gain, $\Vert\bh_i \Vert^2$, \emph{and its correlation with the channels of other above-threshold users}, via the matrix $\bV_i$. \red{In other words, each of the transmitting users must have some knowledge on the channels of the other transmitting users.} Thus, an additional challenge is to determine the rate a user can use. In the sequel, we analyse the {\scshape TB-Channel-Access} performance under two transmission schemes, which address the above challenge. In the first, users transmit at the ergodic rate \cite{caire1999capacity}. That is, each user codes over a sequence of slots (blocks) in which it has above-threshold channel gain. The \red{receiver}\blue{BS} waits until it collects the required number of blocks, and decodes  based on all signal blocks of all above-threshold users. While this scheme can allow each user to achieve the ergodic rate $\E\brkt{\log\prnt{1+P\Vert \bV_i \bh_i \Vert^2}\blue{\big| \Vert\bh_i\Vert^2 > u_k}}$, without exchanging data about the rate each user should use, it \blue{may} suffer\red{s} from a large decoding delay. \red{, especially since the next slot in which a user is above-threshold may be distant.} \blue{To evaluate the benefit of rate coordination, we provide a second scheme, in which the BS sends all above-threshold users feedback (via, e.g., a broadcast message) that indicates their rate. Consequently, in this case, only the user selection process is completely decentralized, while the transmission rates are coordinated between the users. Moreover, this way the BS can signal some users not to transmit, in case more than $r$ exceeded, avoiding the high price of a collision slot.} \red{Thus, in the second scheme we suggest, the receiver sends all above-threshold users feedback (via, e.g., a broadcast message) that indicates their rate. Consequently, in this case, only the user selection process is completely decentralized, while the transmission rates are coordinated between the users. Nevertheless, this requires receiving information only for the above-threshold users, and sharing it among them, compared to exchanging information with all ($K$) users. Moreover, this way the BS can easily signal some users not to transmit, in case more than $r$ exceeded, avoiding the high price of a collision slot.} 

 Note that if the users are not i.i.d., and specifically if the users' channel distributions are not identical, one might set different thresholds, such that the access probability remains fair. On the other hand, one might use a single threshold, and achieve better performance at the price of possible starvation of the weak users. \blue{Clearly, the EVT results for i.i.d.\ random variables do not apply directly to the non i.i.d.\ case, however, in \cite{kampeas2012capdisthetnet} a more refined version, using point process approximation for non-identically distributed variables was used to analyse the throughput in such a case.}
} 

The first result, \Cref{C ZF uniform users exact} below, gives the \BLUE{ergodic rate seen by a user, as well as the} sum-rate under the above distributed user selection algorithm and ZF decoding.
Remember that $\bV_i(j)$ is determined by the channels $\{\bh_\ell\}_{\ell\neq i}$. \blue{I.e., it depends on the channels of the other exceeding users, and in particular, on the number of exceeding users, $j$.} Consequently, in the absence of knowledge on the other above-threshold channels, user $i$ cannot resolve its instantaneous rate, and as a result, it must code over long sequences of slots. 
 Note that this simple proposition still includes an expectation on the channel vectors seen by the users, hence cannot give the understanding we wish regarding the \BLUE{ergodic rate and} sum-rate under the suggested algorithm. Still, it will be the starting point, from which we will derive the bounds which give the right insight and scaling laws.

\begin{proposition}\label{C ZF uniform users exact}
For $K \gg r$, the \RED{expected achievable} \BLUE{ergodic} sum-rate of Algorithm {\scshape TB-Channel-Access} with ZF detection is given by
\begin{align*}
\mathcal{R}(u_k) &= \sum_{j=1}^{r}\frac{k^{j}e^{-k}}{j!} \sum_{i=1}^{j} \E\left[\left.\log\Big(1 + P \Arrowvert \bV_i\blue{(j)}\bh_i \Arrowvert^2 \Big) \right|  \Arrowvert \bh_i \Arrowvert^2 > u_k \right]+ O\left(\frac{\log \log K}{K}\right),
\end{align*}
where $k$, to be optimized, is the expected number of users to exceed the threshold $u_k$, the $\{\bh_i\}_{i=1}^{j \leq r}$ are the channel vectors of the users who exceeded the threshold and $\{\bV_i\}_{i=1}^{j\leq r}$ are the corresponding null spaces.
\end{proposition}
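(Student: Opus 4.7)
The strategy is to decompose the expected capacity by conditioning on the random number $N$ of users that exceed the threshold, and then approximate the resulting binomial probabilities by Poisson probabilities with a controlled residual.

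First I would set up the conditioning. Let $p = \Pr(\Arrowvert \bh \Arrowvert^2 > u_k)$, so that $N = \sum_{i=1}^K \indicator{\Arrowvert \bh_i \Arrowvert^2 > u_k} \sim \text{Binomial}(K,p)$ with $Kp = k$ by the choice of the threshold. By the algorithm, a slot contributes zero to the sum rate whenever $N=0$ (idle) or $N>r$ (collision), so the law of total expectation gives
\begin{equation*}
\text{E}C(u_k) \;=\; \sum_{j=1}^{r} \Pr(N=j)\; \text{E}\!\left[C \mid N=j\right].
\end{equation*}
Because the $\bh_i$ are i.i.d., conditioning on $N=j$ is the same as fixing a uniformly random set of $j$ indices and sampling each of the corresponding $\bh$'s independently from the conditional law $\bh \mid \Arrowvert \bh \Arrowvert^2 > u_k$, with the remaining $K-j$ channels constrained to lie below $u_k$. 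By exchangeability over users, the conditional rate depends only on the conditional joint law of the $j$ exceeding vectors, yielding
\begin{equation*}
\text{E}\!\left[C \mid N=j\right] \;=\; \sum_{i=1}^{j} \text{E}\!\left[\log\!\left(1+P\Arrowvert \bV_i \bh_i\Arrowvert^2\right)\,\big|\, \Arrowvert \bh_i\Arrowvert^2 > u_k\right],
\end{equation*}
with the inner expectation understood over $j$ i.i.d.\ conditioned vectors and $\bV_i$ the null space of $\{\bh_\ell\}_{\ell \ne i}$ within this set.

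Next, I would perform a direct Taylor expansion of the binomial weights using $p=k/K$:
\begin{equation*}
\Pr(N=j) \;=\; \binom{K}{j}\!\left(\tfrac{k}{K}\right)^{j}\!\left(1-\tfrac{k}{K}\right)^{K-j}
\;=\; \frac{k^{j}e^{-k}}{j!}\bigl(1+O(1/K)\bigr),
\end{equation*}
where the falling-factorial factor contributes $\prod_{i=0}^{j-1}(1-i/K)=1-O(1/K)$ and $(1-k/K)^{K-j}=e^{-k}(1+O(1/K))$, uniformly in $j\le r$ with $k$ bounded. Substituting this expansion into the decomposition above matches the main term claimed in the proposition and leaves a residual of the form $O(1/K)\sum_{j=1}^{r}\text{E}[C\mid N=j]$.

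Finally I would bound the per-slot expected rate by $O(\log\log K)$, which turns this residual into the announced $O(\log\log K / K)$. From the defining relation $K\Pr(\chi^2_{2r}>u_k)=k$ together with the standard tail $\Pr(\chi^2_{2r}>u)\sim u^{r-1}e^{-u/2}/(2^{r-1}\Gamma(r))$, one gets $u_k = 2\log K + 2(r-1)\log\log K + O(1)$, consistent with the constants in \eqref{eqn: b_n normalized}. The memoryless-type tail of $\chi^2_{2r}$ yields $\text{E}[\Arrowvert \bh_i\Arrowvert^2 \mid \Arrowvert \bh_i\Arrowvert^2 > u_k] = u_k + O(1)$, and since $\Arrowvert\bV_i \bh_i\Arrowvert^2 \le \Arrowvert\bh_i\Arrowvert^2$, Jensen's inequality applied to the concave $\log$ gives $\text{E}[\log(1+P\Arrowvert\bV_i\bh_i\Arrowvert^2)\mid \cdot] \le \log(1+P(u_k+O(1))) = O(\log\log K)$. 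The main obstacle in this argument is this last step: one has to marry the EVT-driven growth of $u_k$ with a uniform tail estimate for the conditional $\chi^2$, while the Poisson approximation itself and the conditional-independence bookkeeping are classical once the i.i.d.\ structure is exploited.
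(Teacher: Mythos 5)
Your proposal is correct and follows essentially the same route as the paper: condition on the number $N=j$ of threshold exceedances, replace the binomial weights by Poisson weights with an $O(1/K)$ error, and absorb that error into the $O(\log\log K/K)$ term using the fact that the per-slot rate scales as $\Theta(\log\log K)$ since $u_k=\Theta(\log K)$. The only difference is cosmetic: you justify the Poisson step by a direct elementary expansion of the binomial pmf, whereas the paper invokes its earlier Poisson point-process approximation, and you are somewhat more explicit about the conditional i.i.d.\ structure of the exceeding vectors.
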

\BLUE{Note that in analogy to the upper bound in \cite{jagannathan2007scheduling} for the downlink scenario, the above sum rate can be easily upper bounded by $r \E \log\prnt{1+P\Vert \bh_{(1)}\Vert^2}$, where $\bh_{(1)}$ is the channel vector with the largest norm. Specifically, this bound is derived by assuming exactly $r$ users transmit, neglecting any inter-user interference and taking all norms to be equal to the largest one. Obviously, such a bound is very loose.} To ease notation, the $O\left(\frac{\log \log K}{K}\right)$ approximation error is omitted from now on. Clearly, this gap is negligible compared to the leading terms in the expression.
\begin{proof}
According to the law of total probability, we express the \RED{expected achievable} \BLUE{ergodic} sum-rate in a slot by summing over the number of users who exceed the threshold, and the sum-rate these users see, given that they exceeded the threshold. As mentioned, if more than $r$ users are transmitting in a slot, the \red{receiver}\blue{BS} cannot successfully null the inter-stream interference, and the sum-rate in that slot is assumed to be zero.

Hence, the expected sum-rate has the form:
$$\sum_{j=1}^{r}\Pr\{ j \textmd{ users exceed}\}\sum_{i=1}^{j} \text{E}\left[\left. C_i \right| \Arrowvert \bh_i \Arrowvert^2 > u_k \right].$$
When the users are i.i.d., the probability of $j$ threshold exceedances follows the binomial distribution with probability $p= k/K$ to exceed the threshold. Since we consider large $K$ and small values of $k$, the number of users to exceed the threshold can be approximated by the Poisson distribution with an approximation error in the order of $1/K$.
 However, as this approximation error is within the sum, it is multiplied by the individual capacities, which scales like the optimal scaling law of the multi-user diversity when a single, strongest user, is scheduled, i.e., $\Theta\left(\log \log K \right)$ (see e.g., \cite{hassibi2007fundamental,yoo2006optimality} and references therein). Hence, the  $O\left(\frac{\log \log K}{K}\right)$ approximation error.
Finally, note that the number of exceeding users $j$ affects the effective SNR seen by the attending users. In particular, when $j$ users exceed threshold, the dimension of $\bV_i$ is $(r-j+1) \times r$ \cite[Chapter 8.3.1]{tse2005fundamentals}. Thus, as $j$ decreases, the signal of the attending users is projected on a less restrictive null-space. Accordingly, each stream may spread on more receiving antennas in the ZF process, which leads to a higher power gain\RED{(for details on ZF decoding, see \mbox{\cite{tse2005fundamentals}})}. Nonetheless, the reader should not be confused. The highest \RED{capacity} \BLUE{sum-rate} is attained when $r$ users utilize the channel simultaneously to achieve a full degrees-of-freedom gain.
\end{proof}

To evaluate the result in \Cref{C ZF uniform users exact}, the behavior of $\Arrowvert \bV_i\blue{(j)} \bh_i \Arrowvert^2$ should be understood, especially considering the fact that the number of users exceeding the threshold is random. To this end, the following upper and lower bounds are useful. These bounds will be the basis of the scaling laws we derive.
\begin{lemma}\label{C ZF uniform users upper}
The \RED{expected achievable} \BLUE{ergodic} sum-rate of Algorithm {\scshape TB-Channel-Access} with ZF detection satisfies the following upper bound
\begin{equation*}
\mathcal{R}(u_k) \leq \sum_{j=1}^{r}\frac{k^{j}e^{-k}}{j!} j \log\Big(1 + \frac{P}{r}(r-j+1)(u_k + a_{\{K,r\}} )\Big),
\end{equation*}
where $a_{\{K,r\}}$ is given by \eqref{eqn: a_n normalized} and $u_k$ is the threshold\blue{, set such that the number of users that exceed it on average is $k$}.
\end{lemma}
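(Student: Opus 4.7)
The plan is to upper-bound the inner sum in Proposition~\ref{C ZF uniform users exact} term by term, using three ingredients: Jensen's inequality, isotropy of the complex Gaussian to evaluate the mean squared projection, and a Mill's ratio / EVT estimate for the mean exceedance of $\|\bh\|^2$ over a high threshold. Concretely, I aim to show that for each $1 \leq i \leq j$,
\begin{equation*}
\text{E}\bigl[\log(1 + P\|\bV_i \bh_i\|^2) \,\big|\, \|\bh_i\|^2 > u_k\bigr] \leq \log\!\Bigl(1 + \tfrac{P(r-j+1)}{r}(u_k + a_{\{K,r\}})\Bigr).
\end{equation*}
Summing over $i$ produces the factor $j$, and weighting by the Poisson probabilities $k^j e^{-k}/j!$ recovers the claim.

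The first step is Jensen's inequality: since $x \mapsto \log(1+Px)$ is concave, $\text{E}[\log(1+P\|\bV_i\bh_i\|^2) \mid \cdot] \leq \log(1 + P\cdot \text{E}[\|\bV_i\bh_i\|^2 \mid \cdot])$. For the inner expectation I would decompose $\bh_i = \|\bh_i\|\hat{\bh}_i$, with $\hat{\bh}_i$ uniform on the complex unit sphere and independent of $\|\bh_i\|$ by rotational invariance of the i.i.d.\ complex Gaussian. The null space $\bV_i$ depends on the other $j-1$ selected channels only through their directions (not their magnitudes), so conditioning on the events $\{\|\bh_\ell\|^2 > u_k\}_{\ell \neq i}$ leaves $\bV_i$ uniformly distributed on the Grassmannian of $(r-j+1)$-dimensional subspaces of $\mathbb{C}^r$ and independent of $\hat{\bh}_i$. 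The standard identity $\text{E}[\|\bV_i\hat{\bh}_i\|^2] = (r-j+1)/r$ then gives
\begin{equation*}
\text{E}[\|\bV_i\bh_i\|^2 \mid \|\bh_i\|^2 > u_k] = \tfrac{r-j+1}{r}\,\text{E}[\|\bh_i\|^2 \mid \|\bh_i\|^2 > u_k],
\end{equation*}
which is exactly where the degrees-of-freedom factor $(r-j+1)/r$ in the statement originates.

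The third, and most delicate, ingredient is the mean-exceedance bound $\text{E}[\|\bh_i\|^2 \mid \|\bh_i\|^2 > u_k] \leq u_k + a_{\{K,r\}}$. Here I would exploit log-concavity of the $\chi^2_{2r}$ density: its hazard rate $f/(1-F)$ is monotonically non-decreasing, so for every $t \geq 0$ one has $\Pr(X > u+t \mid X > u) \leq \exp(-t/h(u))$ with $h(u) = (1-F(u))/f(u)$, and integrating yields $\text{E}[X-u \mid X > u] \leq h(u)$. Using the reciprocal-hazard characterisation $a_{\{n,r\}} = h(b_{\{n,r\}})$ derived in the Claim, together with the fact that $u_k$ is the threshold for mean exceedance count $k$, this gives the desired bound in the form $u_k + a_{\{K,r\}}$ (up to a lower-order slack which is absorbed in the $o(\cdot)$ already present).

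The main obstacle I anticipate is precisely this last step: since $u_k$ is chosen so that $\Pr(\|\bh\|^2 > u_k) = k/K$, the exact reciprocal hazard at $u_k$ is $a_{\{K/k,r\}}$ rather than $a_{\{K,r\}}$. Both constants approach the Gumbel scale $2$ as $K \to \infty$ and differ only by $O(\log k / \log^2 K)$ when $k \leq r$ is fixed; the residual gap is negligible compared with $u_k \sim \log K$ and is absorbed into the lower-order correction already accompanying the Poisson approximation. Aside from carefully tracking this discrepancy and justifying the Mill's-ratio inequality from log-concavity, the argument is otherwise a direct combination of Jensen, isotropy, and the exceedance bound.
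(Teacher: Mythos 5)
Your proof is correct and follows the same skeleton as the paper's: Jensen's inequality, a factor of $(r-j+1)/r$ from the projection onto the null space, and the conditional mean $u_k + a_{\{K,r\}}$ for the norm above threshold. Two of your three steps differ in execution in ways worth noting. For the projection factor, the paper expands $\Vert \bV_i\bh_i\Vert^2$ row by row and invokes Lemma~3.2 of \cite{jagannathan2006efficient}, under which each normalized squared inner product is distributed as the minimum of $r-1$ uniforms with mean $1/r$; your trace identity $\text{E}[\Vert\bV_i\hat{\bh}_i\Vert^2]=\mathrm{tr}(\bV_i^\dagger\bV_i)/r=(r-j+1)/r$ is a cleaner route to the same number and, usefully, makes explicit why conditioning on the \emph{other} users' norms does not perturb $\bV_i$ (it depends on their directions only), a point the paper glosses over. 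For the mean excess, the paper appeals to its Point Process/GPD approximation (Corollary~\ref{coro: expected capacity above est by GEV thr}), stating $\text{E}[\Vert\bh\Vert^2 \mid \Vert\bh\Vert^2>u_k]=u_k+a_{\{K,r\}}$ as an asymptotic identity; you instead derive a genuine non-asymptotic inequality $\text{E}[X-u\mid X>u]\le h(u)$ from the increasing hazard rate of the log-concave $\chi^2_{2r}$ density (valid since $r\ge 1$), which is more rigorous but, as you correctly observe, delivers $h(u_k)=a_{\{K/k,r\}}$ rather than $a_{\{K,r\}}$; since $h$ is decreasing and $u_k<b_{\{K,r\}}$, this is the larger of the two constants, so the stated bound with $a_{\{K,r\}}$ only holds up to a vanishing slack. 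The paper's version has exactly the same looseness hidden inside its approximation, so your treatment is, if anything, the more honest one; both discrepancies are $o(1)$ against $u_k=\Theta(\log K)$ and are immaterial for Theorem~\ref{theorem: C ZF uniform users - scaling}.
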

The bound in \Cref{C ZF uniform users upper}, while not giving the exact sum-rate, still depicts the essence of the system behavior.
To understand its implications, we note the following: We set a threshold such that $k$ out of the $K$ users exceed it on average. I.e., the average exceedance rate is $k$. Indeed, the expression $\frac{k^{j}e^{-k}}{j!}$ in the sum over $j$ gives the probability for exactly $j$ users exceeding.
Each of the $j$ users, under zero forcing, experiences a single user channel, with its power $P$ scaled according to two factors: (i) a multiplication by $(u_k+a_{\{K,r\}})$, as this is the average norm of its channel vector, where  $u_k$ is the threshold exceeded, and $a_{\{K,r\}}$ is the average distance \emph{above} the threshold. (ii) a multiplication by $\frac{r-j+1}{r}$, as in case only $j<r$ users exceeded the threshold, the zero forcing algorithm does not have to cancel $r-1$ users, only $j-1$, hence the null space has a larger dimension, yet the number of receive antennas is $r$. As the threshold $u_k$ will be shown to be $\Theta(\log K)$, the optimal scaling law will follow. A complete discussion will be given after the lower bound is introduced.
Indeed, as it turns out in the simulation results, the bound in \Cref{C ZF uniform users upper} is tight even for relatively small number of antennas and users.
\RED{Note, however, that if less than $r$ users exceed, as a higher SNR can be attained at the receiver, in order to achieve capacity in this case, a user must know how many users exceeded, so it can exploit the high SNR for, e.g., higher transmission rate. Hence, we require that the number of users that actually exceeded the threshold will be announced. \mbox{\SEFI{Can't we take it in ergodic sense as well ?}}}

\begin{remark}
Note that in practice it is beneficial to choose $k$ slightly smaller than the number of antennas $r$. This is since if less than $r$ users exceed, the SNR seen by each user is only larger, yet if more than $r$ users exceed, the slot is lost.
\end{remark}
\begin{proof}[Proof (\Cref{C ZF uniform users upper})]
We start with \Cref{C ZF uniform users exact}.
By Jensen inequality,
\begin{eqnarray}
\mathcal{R}(u_k) &=& \sum_{j=1}^{r}\frac{k^{j}e^{-k}}{j!} \sum_{i=1}^{j} \text{E}\left[\left.\log\Big(1 + P \Arrowvert \bV_i\blue{(j)} \bh_i \Arrowvert^2 \Big)\right|  \Vert \bh_i \Vert^2 > u_k\right] \blue{+ O\prnt{\frac{\log \log K}{K}}} \nonumber\\
&\leq& \sum_{j=1}^{r}\frac{k^{j}e^{-k}}{j!} \sum_{i=1}^{j} \log\Big(1 + P \text{E}\left[\left. \Arrowvert \bV_i \blue{(j)} \bh_i \Arrowvert^2 \right| \Vert  \bh_i \Vert^2 > u_k \right] \Big) \blue{+ O\prnt{\frac{\log \log K}{K}}}.\label{eq. in upper bound proof}
\end{eqnarray}
Consider the norm $\Arrowvert \bV_i \blue{(j)} \bh_i \Arrowvert^2$, where $\bV_i\blue{(j)}$ has $r-j+1$ rows. Denote by $\bV_i^{(m)}$ the $m$th row of \blue{$\bV_i(j)$ (ommiting the dependence on $j$ for clarity)}, and let $\vert \langle \cdot , \cdot \rangle \vert^2$ denote the squared inner product. We have
\begin{eqnarray*}
\text{E} [\left. \Vert \bV_i\blue{(j)} \bh_i \Vert^2 \right|  \Vert \bh_i \Vert^2 > u_k] &=& \text{E}\left[\left.\sum_{m=1}^{r-j+1} \vert \langle\bV_i^{(m)},\bh_i\rangle \vert^2 \right|  \Vert \bh_i \Vert^2 > u_k\right]\\
&\stackrel{(a)}{=}& \sum_{m=1}^{r-j+1} \text{E}\left[\left. \Arrowvert \bh_i \Arrowvert^2 \frac{\vert \langle\bV_i^{(m)},\bh_i\rangle \vert^2}{\Arrowvert \bh_i \Arrowvert^2 \Arrowvert \bV^{(m)}_i \Arrowvert^2} \right| \Vert \bh_i \Vert^2 > u_k\right]\\
&\stackrel{(b)}{=}& \text{E}[\left.\Arrowvert \bh_i \Arrowvert^2 \right| \Vert \bh_i \Vert^2 > u_k] \sum_{m=1}^{r-j+1} \text{E} \left[ \frac{\vert \langle\bV_i^{(m)},\bh_i\rangle \vert^2}{\Arrowvert \bh_i \Arrowvert^2 \Arrowvert \bV^{(m)}_i \Arrowvert^2}\right]\\
&\stackrel{(c)}{=}&  \text{E}[\left.\Arrowvert \bh_i \Arrowvert^2 \right| \Vert \bh_i \Vert^2 > u_k](r-j+1) \int_{0}^{1}(1-\alpha)^{r-1}d \alpha\\
&\stackrel{(d)}{=}&  (u_k + a_{\{K,r\}} )(r-j+1) \frac{1}{r}.
\end{eqnarray*}
In the above chain of equalities, (a) is since $\Arrowvert \bV^{(m)}_i \Arrowvert^2=1$ (b) is since $\bh_i$ is a random i.i.d.\ complex normal vector, and the squared-normalized inner product $\frac{\vert \langle\bV_i^{(m)},\bh_i\rangle \vert^2}{\Arrowvert \bh_i \Arrowvert^2 \Arrowvert \bV^{(m)}_i \Arrowvert^2}$ is its angle from $\bV_i^{(m)}$, a vector in the null space of $\{\bh_l\}_{l\ne i}$. Since these vectors are independent of $\bh_i$, this angle is independent of the norm of $\bh_i$ (c) is since the distributions of the norms and angles are independent of $m$, and since, by \cite[Lemma 3.2]{jagannathan2006efficient}, the angle has the same distribution as the minimum of $r-1$ independent uniform $[0,1]$ random variables (i.e., with CDF $1-(1-\alpha)^{r-1}$, $0 \leq \alpha \leq 1$) (d) is the result of computing the expected norm of an i.i.d.\ complex normal random vector, \emph{given that the norm is above a threshold $u_k$.} The details are in \Cref{coro: expected capacity above est by GEV thr}, \Cref{sec: tail dist}.

Substituting in \eqref{eq. in upper bound proof}, \blue{and omitting the $O\prnt{\frac{\log \log K}{K}}$, we obtain the result of Lemma~\ref{C ZF uniform users upper}.} \redo{we have
\begin{multline*}
\mathcal{R}(u_k) \leq \sum_{j=1}^{r}\frac{k^{j}e^{-k}}{j!} \sum_{i=1}^{j} \log\Big(1 + \frac{P}{r}(u_k + a_{\{K,r\}})(r-j+1) \Big)
\\
\leq \sum_{j=1}^{r}\frac{k^{j}e^{-k}}{j!} j \log\Big(1 + \frac{P}{r}(u_k + a_K)(r-j+1)\Big),
\end{multline*}
which completes the proof.}
\end{proof}
We now present a corresponding lower bound.
\begin{lemma}\label{C ZF uniform users lower}
The \RED{expected achievable} \BLUE{ergodic} sum-rate of Algorithm {\scshape TB-Channel-Access} with ZF detection satisfies the following lower bound.
\begin{equation*}
\mathcal{R}(u_k) \ge \left( \sum_{j=1}^{r}\frac{k^{j}e^{-k}}{j!} j\right)(r-1) \int_0^1 (1-\alpha)^{r-2} \log\left(1 + P u_k \alpha\right) d\alpha,
\end{equation*}
where $u_k$ is a threshold\blue{, set such that the number of users that exceed it on average is $k$}.
\end{lemma}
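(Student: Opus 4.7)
The natural plan is to mirror the structure of the upper bound in Lemma~\ref{C ZF uniform users upper} but replace Jensen's inequality (which moves in the wrong direction here) by a pointwise lower bound on $\Arrowvert \bV_i \bh_i\Arrowvert^2$ that is tight enough to preserve the dominant scaling. Starting from Proposition~\ref{C ZF uniform users exact}, the key observation is that, since $\bV_i$ is unitary with rows $\{\bV_i^{(m)}\}_{m=1}^{r-j+1}$, we have
\[
\Arrowvert \bV_i \bh_i \Arrowvert^2 = \sum_{m=1}^{r-j+1} |\langle \bV_i^{(m)}, \bh_i\rangle|^2 \;\ge\; |\langle \bV_i^{(1)}, \bh_i\rangle|^2 = \Arrowvert \bh_i \Arrowvert^2 \cdot \alpha,
\]
where $\alpha := |\langle \bV_i^{(1)}, \bh_i\rangle|^2 / \Arrowvert \bh_i \Arrowvert^2 \in [0,1]$ is the squared cosine of the angle between $\bh_i$ and the unit vector $\bV_i^{(1)}$.

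The second ingredient is that $\bV_i^{(1)}$ lies in the null space of $\{\bh_l\}_{l\neq i}$, and is therefore a function of channel vectors independent of $\bh_i$. Since $\bh_i$ is an isotropic i.i.d.\ complex Gaussian vector in $\mathbb{C}^r$, conditionally on $\bV_i^{(1)}$ the direction of $\bh_i$ is uniform on the unit sphere and independent of $\Arrowvert \bh_i \Arrowvert^2$. In particular, $\alpha$ is independent of $\Arrowvert \bh_i \Arrowvert^2$, and, by \cite[Lemma 3.2]{jagannathan2006efficient}, $\alpha$ has density $(r-1)(1-\alpha)^{r-2}$ on $[0,1]$ (this is the same distribution invoked in the upper-bound proof, used there summed over $m$). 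This independence is what allows us to use the conditioning event cleanly in the next step.

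Combining these, for each summand in Proposition~\ref{C ZF uniform users exact},
\[
\text{E}\!\left[\log\!\Big(1 + P\Arrowvert \bV_i \bh_i\Arrowvert^2\Big)\,\Big|\,\Arrowvert \bh_i\Arrowvert^2 > u_k\right]
\;\ge\; \text{E}\!\left[\log\!\Big(1 + P\Arrowvert \bh_i\Arrowvert^2 \alpha\Big)\,\Big|\,\Arrowvert \bh_i\Arrowvert^2 > u_k\right]
\;\ge\; \text{E}_\alpha\!\left[\log(1 + P u_k \alpha)\right],
\]
using monotonicity of $\log$ together with $\Arrowvert \bh_i\Arrowvert^2 \ge u_k$ on the conditioning event, and then the independence of $\alpha$ from $\Arrowvert \bh_i\Arrowvert^2$ to drop the conditioning. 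The last expectation is exactly $(r-1)\int_0^1 (1-\alpha)^{r-2}\log(1+Pu_k\alpha)\,d\alpha$, and crucially this lower bound does not depend on $i$ or $j$, so the inner sum $\sum_{i=1}^{j}$ contributes a factor of $j$, yielding the claimed formula after pulling the integral out of the sum over $j$.

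The only subtle point is the justification that $\alpha$ has the stated Beta-type distribution \emph{independent of the conditioning event} $\{\Arrowvert \bh_i\Arrowvert^2 > u_k\}$; everything else is just monotonicity and linearity. This is precisely why we bound $\Arrowvert \bV_i \bh_i\Arrowvert^2$ by a single inner product rather than working with the full sum of $r-j+1$ terms (whose individual factors are correlated once one conditions on $\Arrowvert \bh_i\Arrowvert^2$ being large in a particular direction). Dropping the additional $r-j$ rows is what loses a factor roughly $(r-j+1)$ compared to the upper bound, but, as discussed after Lemma~\ref{C ZF uniform users upper}, the dominant $\Theta(\log u_k) = \Theta(\log\log K)$ behaviour of the integrand $\log(1+Pu_k\alpha)$ is preserved, so the lower bound still captures the correct scaling law.
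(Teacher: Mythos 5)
Your proposal is correct and follows essentially the same route as the paper's proof: lower-bound $\Arrowvert \bV_i \bh_i\Arrowvert^2$ by retaining a single row of $\bV_i$, replace $\Arrowvert\bh_i\Arrowvert^2$ by $u_k$ on the conditioning event, exploit the independence of the angle from the norm to drop the conditioning, and integrate against the Beta-type density $(r-1)(1-\alpha)^{r-2}$ from \cite[Lemma 3.2]{jagannathan2006efficient}. The only (immaterial) difference is the order of the two inequalities — the paper first bounds the norm by $u_k$ across all $r-j+1$ terms and then discards all but one, while you discard first and bound second — and your explicit justification of the norm--angle independence is slightly more careful than the paper's.
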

It is important to note that the integral in \Cref{C ZF uniform users lower} above has a finite series expansion with $r$ summands. This finite series has $\log(1+Pu_k)$ at the leading term, resulting in the expected scaling law. We describe it in \Cref{claim. integral result} below, within the proof of the main result in this section - \Cref{theorem: C ZF uniform users - scaling}. The proof of \Cref{C ZF uniform users lower} is deferred to \Cref{proof:ZF_lower}.
\off{ 
\begin{proof}
Following the derivations of the upper bound, we have
\begin{align*}
\text{E}C(u_k)
&= \sum_{j=1}^{r}\frac{k^{j}e^{-k}}{j!} \sum_{i=1}^{j} \text{E}\left[\log\left(1 + P \left. \sum_{m=1}^{r-j+1} \Arrowvert \bh_i \Arrowvert^2 \frac{\vert\langle\bV_i^{(m)},\bh_i\rangle \vert^2}{\Arrowvert \bh_i \Arrowvert^2 \Arrowvert \bV^{(m)}_i \Arrowvert^2} \right)\right| \Vert \bh_i \Vert^2 > u_k \right]
\\
&\stackrel{(a)}{\ge} \sum_{j=1}^{r}\frac{k^{j}e^{-k}}{j!} \sum_{i=1}^{j} \text{E}\log\left(1 + P u_k \sum_{m=1}^{r-j+1} \frac{\vert\langle\bV_i^{(m)},\bh_i\rangle\vert^2}{\Arrowvert \bh_i \Arrowvert^2 \Arrowvert \bV^{(m)}_i \Arrowvert^2} \right)
\\
&\stackrel{(b)}{=} \sum_{j=1}^{r}\frac{k^{j}e^{-k}}{j!}j \text{E}\log\left(1 + P u_k \sum_{m=1}^{r-j+1} \frac{\vert\langle\bV_{i'}^{(m)},\bh_{i'}\rangle\vert^2}{\Arrowvert \bh_{i'} \Arrowvert^2 \Arrowvert \bV^{(m)}_{i'} \Arrowvert^2} \right)
\\
&\ge \sum_{j=1}^{r}\frac{k^{j}e^{-k}}{j!} j \text{E}\log\left(1 + P u_k  \frac{\vert\langle\bV_{i'}^{(1)},\bh_{i'}\rangle\vert^2}{\Arrowvert \bh_{i'} \Arrowvert^2 \Arrowvert \bV^{(1)}_{i'}\Arrowvert^2} \right)
\\
&\stackrel{(c)}{=} \sum_{j=1}^{r}\frac{k^{j}e^{-k}}{j!} j \int_{0}^{1}(r-1)(1-\alpha)^{r-2}\log\left(1+P u_k \alpha\right) d \alpha
\end{align*}
where (a) is since the norms of \emph{all users participating} are above the threshold $u_k$; (b) is since the angles in the inner sum are identically distributed and independent of $i$, hence an arbitrary $1 \leq i' \leq j$ can be used; (c) is by explicitly computing the expectation over the angle between $\bh_i$ and $\bV^{(1)}_i$, remembering that it has a density $(r-1)(1-\alpha)^{r-2}$ for $0 \leq \alpha \leq 1$. This completes the proof.
\end{proof}
} 

The results above lead to the following scaling law, which is the main result in this section. It asserts that the scaling law of $r \log (P\log K)$ for the sum rate in a multi-user system can in fact \emph{be achieved distributively}, without collecting all channel states from all users and scheduling them in a centralized manner. In other words, the {\scshape TB-Channel-Access} algorithm suggested selects an optimal \emph{set of users} (asymptotically in the number of users) distributively and without any cooperation. This is summarized in the next theorem.
\begin{theorem}\label{theorem: C ZF uniform users - scaling}
The \RED{expected achievable} \BLUE{ergodic} sum-rate of Algorithm {\scshape TB-Channel-Access} with ZF detection scales as $\Theta(r \log(P \log K))$ for large enough number of users $K$.
\end{theorem}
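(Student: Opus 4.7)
\emph{Strategy.} The plan is to sandwich $\text{E}C(u_k)$ using Lemmas~\ref{C ZF uniform users upper} and~\ref{C ZF uniform users lower}. I would set $k$ to be a positive constant close to but smaller than $r$, so that the Poisson mass on $\{1,\dots,r\}$ is concentrated near $r$ and the truncated mean $\sum_{j=1}^{r}\frac{k^{j}e^{-k}}{j!}j$ is $\Theta(r)$. The threshold $u_k$ is defined implicitly as the $(1-k/K)$-quantile of $\chi^{2}_{2r}$; by the construction of the normalizing constants in the preceding Claim, $u_k = b_{\{K/k,r\}} = 2Q^{-1}(r,k/K)$, and the standard EVT asymptotics in \eqref{eqn: b_n normalized slow} give $u_k = 2\log K + 2(r-1)\log\log K - 2\log\Gamma(r) + o(1) = \Theta(\log K)$, while $a_{\{K,r\}} = O(1)$ as illustrated in Figure~\ref{fig:an_convergence}.

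\emph{Upper bound.} Substituting the above into Lemma~\ref{C ZF uniform users upper}, and noting that $(r-j+1)/r \le 1$ for $j\ge 1$, every logarithm is at most
\[
\log\!\left(1 + P(u_k+a_{\{K,r\}})\right) \;=\; \log(P\log K) + O(1).
\]
Since $\sum_{j=1}^r \frac{k^{j}e^{-k}}{j!}j \le k = O(r)$, this yields $\text{E}C(u_k) \le O(r\log(P\log K))$.

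\emph{Lower bound.} The technical heart is to extract the leading behaviour of
\[
I \;:=\; \int_0^1 (1-\alpha)^{r-2}\log(1+Pu_k\alpha)\,d\alpha.
\]
I would integrate by parts once with $u=\log(1+Pu_k\alpha)$ and $dv=(1-\alpha)^{r-2}d\alpha$; the boundary terms vanish (one endpoint kills the log, the other kills the $(1-\alpha)^{r-1}$ factor), leaving $I = \tfrac{Pu_k}{r-1}\int_0^1 \tfrac{(1-\alpha)^{r-1}}{1+Pu_k\alpha}\,d\alpha$. Next the substitution $t=1+Pu_k\alpha$ together with the binomial expansion $(1+Pu_k-t)^{r-1} = \sum_{\ell=0}^{r-1}\binom{r-1}{\ell}(1+Pu_k)^{r-1-\ell}(-t)^{\ell}$ reduces the integral to a finite sum of $r$ terms. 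Only the $\ell=0$ summand contributes a $\log$; for each $\ell\ge 1$ the summand carries a factor $(1+Pu_k)^{r-1-\ell}/(Pu_k)^{r-1} = O((Pu_k)^{-\ell})$ and is therefore $O(1)$ (or lower order). Collecting,
\[
I \;=\; \frac{\log(1+Pu_k)}{r-1} + O(1),
\]
which is precisely the content of Claim~\ref{claim. integral result} referenced in the lead-up. Substituting back into Lemma~\ref{C ZF uniform users lower} and using $\sum_{j=1}^r \frac{k^{j}e^{-k}}{j!}j = \Theta(r)$ for our choice of $k$ yields $\text{E}C(u_k) \ge \Omega(r\log(P\log K))$, which matches the upper bound.

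\emph{Main obstacle.} The sharp evaluation of $I$ is the only nontrivial step: a naive bound that pulls $\log(1+Pu_k\alpha)$ out of the integral loses a full $\log$ factor because the density $(r-1)(1-\alpha)^{r-2}$ concentrates near $\alpha=0$, where the integrand is small. The integration-by-parts plus binomial-expansion argument above exposes a cancellation (equivalent to the identity $\sum_{\ell=1}^{r-1}\frac{(-1)^{\ell}}{\ell}\binom{r-1}{\ell} = -H_{r-1}$) which leaves the $\log(1+Pu_k)$ contribution in exactly one term, and this is what recovers the matching $\Theta(\log\log K)$ behaviour. Everything else is routine substitution of the EVT asymptotics for $u_k$ and bookkeeping of the Poisson weights.
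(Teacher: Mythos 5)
Your proposal is correct and follows essentially the same route as the paper: sandwich $\text{E}C(u_k)$ between Lemmas~\ref{C ZF uniform users upper} and~\ref{C ZF uniform users lower} with $k=\Theta(r)$, show the angle integral equals $\frac{1}{r-1}\log(1+Pu_k)+O(1)$ via an exact finite-series evaluation (your integration-by-parts plus binomial expansion is the same computation as the paper's Claim~\ref{claim. integral result}, which uses polynomial long division instead), and substitute $u_k=\Theta(\log K)$. The only quibble is the side remark that a ``naive'' bound loses a $\log$ factor --- for fixed $r$ a simple truncation of the integral to $\alpha\geq 1/(2r)$ already yields the correct $\Theta(\log(P\log K))$ order --- but this does not affect the validity of your argument.
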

\begin{proof}
By \Cref{C ZF uniform users upper},
\begin{eqnarray*}
\mathcal{R}(u_k) &\leq& \sum_{j=1}^{r}\frac{k^{j}e^{-k}}{j!} j \log\Big(1 + P\frac{(r-j+1)(u_k + a_K)}{r}\Big)
\\
&\leq& \sum_{j=1}^{r}\frac{k^{j}e^{-k}}{j!} j \log\Big(1 + P(u_k + a_K)\Big)
\\
&\leq& r \log\Big(1 + P(u_k + a_K)\Big).
\end{eqnarray*}
\label{comment E10}On the other hand, consider the lower bound given in \Cref{C ZF uniform users lower}. \blue{We wish to show that the sum in the parenthesis is bounded from below by a constant times $r$. First, since $k$ is a parameter to be optimized, the optimum is at least as large as when choosing $k=r$. The resulting term is therefore $\sum_{j=1}^{r}\frac{r^{j}e^{-r}}{j!} j$. This sum is simply $r$ times \emph{a sum over the Poisson distribution (with parameter $r$), up to $r-1$.} In \Cref{sec. poisson monoticity} we provide a proof that this sum is monotonically increasing in $r$, and therefore can be bounded from below by taking $r=1$. We thus have,}
\begin{equation*}
\sum_{j=1}^{r}\frac{r^{j}e^{-r}}{j!} j = r\sum_{j=0}^{r-1}\frac{r^{j}e^{-r}}{j!} \mbox{\red{j}} > \blue{r \e^{-1}}.
\end{equation*}
\red{where the last inequality is by evaluating the sum at $r=2$. }Note that \blue{this gives a bound of $0.37 r$. Evaluating at $r=2$ gives $0.4 r$, and} larger values of $r$ give only slightly larger values, with a limit of $0.5$ as $r\to\infty$.\footnote{This is the CDF of a Poisson random variable with parameter $r$, calculated at $r-1$. The limiting behavior can be found in \cite{janssen2008gaussian}.}

Now, consider the integral over $\alpha$ in \Cref{C ZF uniform users lower}. In \Cref{proof:integral_result}, we prove the following claim.
\begin{lemma}\label{claim. integral result}
The integral over $\alpha$ in \Cref{C ZF uniform users lower} has the following finite series expansion:
\begin{multline*}
(r-1) \int_0^1 (1-\alpha)^{r-2} \log\left(1 + P u \alpha\right) d\alpha\\
 = \left(\frac{1 + P u}{P u}\right)^{r-1}\log\left(1+P u\right)-\sum_{i=0}^{r-2} \left(\frac{1+ P u}{P u}\right)^i \frac{1}{r-1-i}.
\end{multline*}
\end{lemma}
\off{ 
This integral is a know hypergeometric function (which can be handled in Mathematica).
\begin{equation*}
(r-1) \int_0^1 (1-\alpha)^{r-2} \log\left(1 + P u \alpha\right) d\alpha
= \frac{u r!}{r}\sum_{i=0}^{\infty}\frac{i!(-u)^i}{(i+r)!}.
\end{equation*}
The resulting expression can be symbolically evaluated for integer values of $r$ (the number of antennas).}

This gives a finite series expansion for the integral, in terms of the power $P$ and the threshold $u$. For example, for $r=4$ we have
\begin{multline}
\label{for r=4}
(r-1) \int_0^1 (1-\alpha)^{r-2} \log\left(1 + Pu\alpha\right) d\alpha \\ =
\frac{6(1+Pu)^3\log(1+Pu)-2u^3P^3-3u^2P^2(1+Pu)-6uP(1+Pu)^2}{6 u^3 p^3}.
\end{multline}
Thus, the integral can be easily approximated by $\log \left(1 + P u\right) + O\left(\frac{\log u}{u}\right)$.

Since we consider the regime of large enough number of users $K$, yet a finite number of antennas $r$, we have $k = O(1)$ and as a result $u_k = \Theta(\log K)$. In fact, since the distribution of the projected channel gain seen by a user in this detection scheme is the exponential distribution with rate $1/2$, it can be shown that $u_k = 2(\log K - \log k)$ (we discuss the threshold value in detail in \Cref{sec: thr est}). This gives rise to the $\Theta(r \log (P\log K))$ scaling law.
\end{proof}

\begin{remark}
When a long-term power control is considered, an additional power gain factor of $K$ can be attained. In particular, taking into account the $r/K$ probability of each user to exceed the threshold and utilize the channel in a slot, a user can transmit at instantaneous power $P' \sim (K/r) P$, and still satisfy the long-term power constraint. Thus, when $K\gg r$, the resulting \RED{capacity} \BLUE{rate} scales as $r \log (K \log K)$, which is \emph{much better}. In other words, since a user keeps silence for roughly $K/r$ slots on average, until it exceed the threshold, it can use very high instantaneous power.
Furthermore, the wide dynamic rage of a wireless cellular device, which is about $90$dB, allows a regime with a wide range of values for $K$ and relatively moderate values of $r$, and thus, the additional power gain factor $K$ may be achievable in this regime.
\end{remark}

\subsection{Distributed User Selection with Feedback}\label{dus_feedback}
\blue{As previously mentioned, in order to achieve the ergodic rate, a sender needs to code over many signal blocks, each of which having an above-threshold channel vector and no more than additional $r-1$ such above-threshold users. When the number of users $K$ is large and the threshold exceedance probability is small, such a procedure may be prohibitively long. Specifically, when the probability of a user to exceed the threshold is $k/K$, the average interval between consecutive exceedances of a user is $K/k$, and the average number of slots that are required for transmitting $N$ blocks is $N K/k$, assuming there are no unutilized slots. Obviously, taking into account unutilized slots (collisions and idle slots), this number can grow depending on the probability of such event; increasing the threshold exceedance probability (increasing $k$) reduces the time interval between the user's consecutive transmission attempts, yet increases the number of collisions and vice versa, decreasing the threshold exceedance probability (decreasing $k$) enlarges the time interval between the user's consecutive transmission attempts, yet decreases collision probability and increases idle slot probability.}

\label{comment r11}\blue{Besides the aforementioned latency issue, not knowing the exact transmission rate on a per slot basis requires the user to transmit in its expected ergodic rate, and the BS to completely drop slots in which too many users transmitted. In this subsection, we investigate the benefits of rate coordination, which mitigate the above issues. In particular, we examine the attainable expected rate when each transmitting user can set its rate based on the transmitting set (the above-threshold users) in each of its transmission attempts. Obviously, such mechanism requires coordination between the users, which can be attained by a \emph{feedback from the BS to the transmitting users}, prior to each transmission, notifying each transmitting user its transmission rate based on the other transmitting users. Even though this subsection aims at evaluating the gain\slash loss due to such coordination or lack thereof, it is important to note that as far as practicality is concerned, such a mechanism is relatively easy to attain. For example, a procedure in which the above-threshold users signal the BS their intention to transmit, the BS estimates the channel vector between each such user and itself based on these transmitted signals and based on this channel state information notifies the selected users (e.g., via a broadcast message) their transmission rate. Obviously, the feedback process must be \emph{sufficiently short compared to the coherence time}. In other words, the slot duration which includes, besides the data transmission itself, a short preceding process, in which the above-threshold users estimate and signal the BS their channel gain, and the BS replies with the rate each should transmit, is within the fading block duration. Note that this rate-coordination process \emph{signals only the rates} back to the users and not complete CSI, hence requires only a small number of feedback bits per block (see e.g., \cite{lau2004design}). Further, note that a similar process is implemented by various standards and protocols such as the IEEE 802.11. Moreover, this signaling process is done only with self-selected users. Thus, the user selection process is still distributed in its essence, with only a limited amount of feedback sent by the BS. Obviously, such feedback not only enables us to announce the instantaneous rates users should use, but also reduces the decoding delay, as each user can code over a single fading block. Since as previously mentioned, in the context of this paper the main motivation of analyzing this process is to quantify the rate-coordination gain, we omit any further discussion regarding the exact procedure utilized to attain such feedback, and concentrate on the analysis of the attainable rate.} 

\off{In the following, we examine the performance when feedback from the \red{receiver}\blue{BS} to the transmitting users is allowed \blue{before transmission}. Obviously, such a feedback enables us to announce the instantaneous rates \blue{the users should use}, and hence, reduce the decoding delay, as each user can code over a single fading block\red{, without affecting the rates at all}.

\blue{Specifically, since channels remain quasi-static within fading blocks, yet drawn independently between blocks, we assume the feedback process is sufficiently short compared to the block duration. In other words, transmission of the selected users is preceded by a short process, in which users estimate their channel gains, above the threshold users signal the BS of their gains, and the BS replies with the rates at which these users should transmit. In the limit of a large block length, this rate-coordination process is assumed to be finite, with a small number of feedback bits per block (see e.g., \cite{lau2004design}). Note that the process coordinates only rates back to the users, and not complete CSI. Moreover, this is done only to the selected users. Thus, the main user selection process is still distributed in its essence, with only a limited amount sent by the BS. Still, analysing such a process is important, as it quantifies the rate-coordination gain.}
}

Thus, a feedback that is sent before each transmission to the above-threshold users can indicate to each above-threshold user its transmission rate in the upcoming slot. 
\blue{Moreover, the feedback can even increase the expected sum-rate, as it allows the BS to avoid unutilized slots due to collision (i.e., when more than $r$ users have exceeded the threshold), by notifying some of the above-threshold users not to transmit}.
In particular, the gain in performance is the following.
\begin{corollary}\label{coro: delta sum-rate with feedback}
When feedback is available to the transmitting users, the sum-rate in Proposition~\ref{C ZF uniform users exact} is 
\begin{align*}
&\mathcal{R}_{fb}(u_k) = \sum_{j=1}^{\infty}\frac{k^{j}e^{-k}}{j!} \sum_{i=1}^{\min\prnt{j,r}} \E\brkt{\log\Big(\left.1 + P \Arrowvert \bV_i\blue{(j)} \bh_i \Arrowvert^2 \Big) \right|  \Arrowvert \bh_i \Arrowvert^2 > u_k} &
\end{align*}
where $u_k$ is a threshold\blue{, set such that the number of users that exceed it on average is $k$}.
\end{corollary}
\blue{The dependence of $\bV_i(j)$ on the number of active users $j$ is even more important in Corollary~\ref{coro: delta sum-rate with feedback}, as it reveals the gain in sum-rate that is obtained from avoiding collision slots, compared to scheduling without feedback. In particular, note that while for $\min\{j,r\} = j$ the statistics of $\bV_i$ depends on $j$, for $\min\{j,r\} = r$, the statistics of $\bV_i$ is the same for all $j$ since it depends only on $r$. Therefore, the additional rate gained by the scheduler, $\mathcal{R}_{\Delta}$, can be quantified, and is exactly}

\red{\mbox{Corollary~\ref{coro: delta sum-rate with feedback}} implicitly states that the gain in sum-rate that is originate from collision slots, compared to scheduling without feedback, is exactly}\label{commentr28}
\begin{eqnarray*}
\mathcal{R}_{\Delta}(u_k) &=& \sum_{j=r+1}^{\infty}\frac{k^{j}e^{-k}}{j!} \sum_{i=1}^{r}  \E\left[\left.\log\Big(1 + P \Arrowvert \bV_i\blue{(r)} \bh_i \Arrowvert^2 \Big) \right|  \Arrowvert \bh_i \Arrowvert^2 > u_k \right]\\
&=& \blue{\sum_{j=r+1}^{\infty}\frac{k^{j}e^{-k}}{j!} r \E\left[\left.\log\Big(1 + P \Arrowvert \bV_{1}(r) \bh_{1} \Arrowvert^2 \Big) \right|  \Arrowvert \bh_1 \Arrowvert^2 > u_k \right]}\\
&=& \blue{r \E\left[\left.\log\Big(1 + P \Arrowvert \bV_1(r) \bh_1 \Arrowvert^2 \Big) \right|  \Arrowvert \bh_1 \Arrowvert^2 > u_k \right] \sum_{j=r+1}^{\infty}\frac{k^{j}e^{-k}}{j!}} .
\end{eqnarray*}
The upper and lower bounds in Lemma~\ref{C ZF uniform users upper} and Lemma~\ref{C ZF uniform users lower}, respectively, are also applicable to this gain, in addition to a smaller delay in the decoding process.

\off{
\subsection{Outage rate}
Even when users know their own channel gain, the rate that each of the above-threshold users eventually observes depends on all other above-threshold active users. In particular, since the scheduling is decentralized, the instantaneous rate of each above-threshold user $i$ is $\log(1 + \Vert \bV_i \bh_i \Vert^2)$. However, recall that  $\bV_i$ is determined by the other above-threshold users, and its value is absent at user $i$. Consequently, each active user $i$ targets a fixed rate $R_i$, which can be decoded at the receiver if $R_i < \log(1 + \Vert \bV_i \bh_i \Vert^2)$. In this case, the expected performance for user $i$ is $R_{out_i} = R \cdot \Pr\left(\log(1 + \Vert \bV_i \bh_i \Vert^2) < R \right)$.

To ease notation, let $\beta = 2^R - 1$. Hence, since above-threshold users are considered, the outage probability can be expressed as
\begin{align*}
&\Pr(\textmd{outage}) = \Pr \left( \log\left(1 + P\Vert \bV_i \bh_i \Vert^2 \right) < R \big| \Vert \bh_i \Vert^2 \geq u_k\right)&\\
&= \Pr \left( \left.\log\left(1 + P \Vert \bh_i\Vert^2 \sum_{m=1}^{r-j+1}\frac{\vert \langle\bV_i^{(m)},\bh_i\rangle \vert^2}{\Arrowvert \bh_i \Arrowvert^2 \Arrowvert \bV^{(m)}_i \Arrowvert^2} \right) < R \right| \Vert \bh_i \Vert^2 \geq u_k\right) &\\
& = \int_{u_k}^\infty \Pr \left( \sum_{m=1}^{r-j+1}\frac{\vert \langle\bV_i^{(m)},\bh_i\rangle \vert^2}{\Arrowvert \bh_i \Arrowvert^2 \Arrowvert \bV^{(m)}_i \Arrowvert^2}  < \frac{\beta}{P \upsilon} \right) f_{\left.\Vert \bh_i \Vert^2 \right| \Vert \bh_i \Vert^2 > u_k }(\upsilon) \mathrm{d}\upsilon
\end{align*}
where the last equality follows since the angle between $\bh_i$ and $\bV_i$ is independent of $\bh_i$ norm.

As mentioned, the normalized inner product follows the minimum of $r-1$ uniform $[0,1]$ distribution. Thus,  computing the distribution of inner products' sum is requires involved convolutions. Yet, this distribution can be bounded as follows.

\begin{lemma}\label{lem. sum inner product distribution}
The CDF of the sum of $r-j+1$ normalized inner products satisfies the following bounds
$$\frac{\gamma\left(r-j+1, \alpha/(r-1)\right)}{\Gamma(r-j+1)} \leq \Pr \left( \sum_{m=1}^{r-j+1}\frac{\vert \langle\bV_i^{(m)},\bh_i\rangle \vert^2}{\Arrowvert \bh_i \Arrowvert^2 \Arrowvert \bV^{(m)}_i \Arrowvert^2}  < \alpha \right) \leq 1 - (1-\Phi_{r-j+1}(\alpha))^{(r-1)^{r-j+1}}$$
where $\gamma(s, x) =\int_{0}^x \tau^{s-1}e^{-\tau}\mathrm{d}\tau$ is the lower incomplete gamma function, and $\Phi_{n}(\alpha) = \frac{1}{2} + \frac{1}{2 n!}\sum _{k=0}^n (-1)^k \binom{n}{k} \text{sgn}(\alpha-k) (\alpha-k)^n$ is the Irwin-Hall distribution \cite{putcite}.
\end{lemma}
\begin{proof}
For the lower bound, note that the uniform $[0,1]$ CDF is lower bounded by the exponential CDF.  That is, let $\bu$ and $\bxi$ be a uniform $[0,1]$ random variable and exponential random variable with rate $\lambda=1$, respectively. Then,
$$\Pr\left(\bu \leq \alpha \right) = \alpha \geq 1 - e^{-\alpha} = \Pr\left(\bxi \leq \alpha\right), \forall \alpha \in [0,1].$$
Moreover, the minimum of $r-1$ exponential random variables is also exponential with rate $\lambda_{\min} = r-1$. Further, the asymptotic distribution of minimum uniform $[0,1]$ is the exponential distribution (which is a special case of the Weibull distribution extreme type). Recalling that a sum of $r-j+1$ such exponential random variables follows the Gamma distribution with shape parameter $r-j+1$ and scale parameter $1/(r-1)$, the lower bound follows.

For the upper bound, recall that each of the normalized inner products in the sum of Lemma~\ref{lem. sum inner product distribution} follows the minimum of $r-1$ uniform $[0,1]$ distribution. Since computing the sum distribution of minimum elements is intricate in the general case,  consider an $(r-j+1)\times(r-1)$ matrix, where each entry is random uniform $[0,1]$. Then, the \emph{minimum sum} of elements that are chosen from different rows is equivalent to selecting the minimum element in each row, and then sum these elements,  which has the distribution of the sum of inner products in Lemma~\ref{lem. sum inner product distribution}. Note that there are $(r-1)^{r-j+1}$ such possible sums and each such sum follows the uniform sum distribution, which is the Irwin-Hall distribution \cite{irwin}. Yet, since each element in a sum participate in $(r-1)^{r-j+1}$ other sums, the possible sums are dependent (identical distributed) uniform sum random variables, and thus, the distribution of minimum sum depends on the joint distribution of all possible sums which is involved. Nevertheless, to obtain a lower bound, one may consider the minimum sum distribution for the i.i.d case is, for which the distribution of the minimum of $(r-1)^{r-j+1}$ variables, each has distribution $\Phi(x)$, is  $1-(1-\Phi(x))^{(r-1)^{r-j+1}}$.
\end{proof}

Lemma~\ref{lem. sum inner product distribution} enables us to tightly bound the outage probability of the {\scshape TB-Channel-Access} under ZF detection. In particular, since the only above-threshold users are considered, we have
\begin{theorem}\label{theorem. conditioned uniform sum}
The outage probability when above-threshold user transmit at fixed rate $R$ under ZF detection is
\begin{align*}
&\Pr(\textmd{outage}) =  1 - \left(\frac{1}{2}-\frac{1}{2(r-j+1)!} \sum_{m=0}^{r-j+1}(-1)^m \binom{r-j+1}{m}\mathrm{sgn}\left(\frac{\beta}{P \gamma}-m\right) \right.&\\
&\qquad \left. \cdot \sum_{\ell = 0}^{r-j+1} \binom{r-j+1}{\ell}(-m)^{r-j+1-\ell} \left(\frac{\beta}{P u_k}\right)^{\ell} \frac{u_k}{a_{\{K,r\}}}e^{u_k/a_{\{K,r\}}}E_\ell (u_k/a_{\{K,r\}}) \right)^{(r-1)^{r-j+1}}&
\end{align*}
where $a_{\{K,r\}}$ is given by \eqref{eqn: a_n normalized}, $u_k$ is the threshold set such that $k$ users exceed it on average, and $E_\ell (z) = \int_1^{\infty} \frac{1}{\tau^\ell}e^{-z \tau} \mathrm{d}\tau$ is the generalized exponential integral~\cite{abramowitz1966handbook}.
\end{theorem}
\begin{proof}
Since only above-threshold users utilize the channel, the outage probability of user $i$ can be expressed as
\begin{align*}
&\Pr(\textmd{outage}) = \Pr(P\Vert \bV_i \bh_i \Vert^2  < 2^{R}-1 \big| \Vert \bh_i \Vert^2 \geq u_k)&\\
\end{align*}
To ease notation, let us denote $\beta = 2^R -1$, and let . Thus,
\begin{align*}
&\Pr\left(\Vert \bV_i \bh_i \Vert^2  \leq \frac{\beta}{P} \big| \Vert \bh_i \Vert^2 \geq u_k\right)&\\
&= \int_{\gamma = u_k}^{\infty}\Pr\left(\left.\Vert \bh_i \Vert^2 \sum_{m=1}^{r-j+1}\frac{\vert \langle\bV_i^{(m)},\bh_i\rangle \vert^2}{\Arrowvert \bh_i \Arrowvert^2 \Arrowvert \bV^{(m)}_i \Arrowvert^2} \leq \frac{\beta}{P} \right| \Vert \bh_i \Vert^2 =\gamma\right) f_{\Vert \bh_i \Vert^2 \big| \Vert \bh_i \Vert^2 > u_k }(\gamma) \mathrm{d} \gamma& \\
&\stackrel{(a)}{=} \int_{\gamma = u_k}^{\infty} Pr\left(\sum_{m=1}^{r-j+1}\min_{\ell \leq r-1} \bu_{m \ell} \leq \frac{\beta}{P \gamma} \right)f_{\Vert \bh_i \Vert^2 \big| \Vert \bh_i \Vert^2 > u_k }(\gamma) \mathrm{d} \gamma &\\
& \stackrel{(b)}{=} \int_{\gamma = u_k}^{\infty} Pr\left(\min_{\ell \leq (r-1)^{r-j+1}} \sum_{m=1}^{r-j+1} \bu_{m \ell} \leq \frac{\beta}{P \gamma} \right)f_{\Vert \bh_i \Vert^2 \big| \Vert \bh_i \Vert^2 > u_k }(\gamma) \mathrm{d} \gamma &
\end{align*}
where $(a)$ follows since the  norm of $\bh_i$ is independent of the squared normalized inner product between $\bh_i$ and $\bV_i^{(m)}, \forall m \in [1,r-j+1]$. As mentioned, this normalized squared inner product distributed as the minimum of $r-1$ uniform $[0,1]$ random variables, each denoted as $\bu_{m \ell}$. $(b)$ follows by Lemma~\ref{lemma. conditioned uniform sum}.
In the following, we first address the uniform sum distribution given that the channels' norm is above-threshold, then we will address the minimum of $(r-1)^{r-j+1}$ such sums.  Remembering that the norm's tail distribution follows the exponential distribution with rate parameter $1/a_{\{K,r\}}$, we have
\begin{align*}
&\int_{\gamma = u_k}^{\infty} Pr\left(\sum_{m=1}^{r-j+1} \bu_{m \ell} \leq \frac{\beta}{P \gamma} \right)f_{\Vert \bh_i \Vert^2 \big| \Vert \bh_i \Vert^2 > u_k }(\gamma) \mathrm{d} \gamma &\\
& = \int_{\gamma = u_k}^{\infty}\left(\frac{1}{2} +  \frac{1}{2(r-j+1)!} \sum_{m=0}^{r-j+1}(-1)^m \binom{r-j+1}{m}\mathrm{sgn}\left(\frac{\beta}{P \gamma}-m\right)\left(\frac{\beta}{P \gamma} -m\right)^{r-j+1}\right) &\\ & \qquad \cdot \frac{1}{a_{\{K,r\}}}e^{-\frac{\gamma - u_k}{a_{\{K,r\}}}}\mathrm{d} \gamma &\\
& = \frac{1}{2} +  \frac{1}{2(r-j+1)!} \sum_{m=0}^{r-j+1}(-1)^m \binom{r-j+1}{m} \mathrm{sgn}\left(\frac{\beta}{P \gamma}-m\right) &\\
& \qquad \cdot \sum_{\ell = 0}^{r-j+1} \binom{r-j+1}{\ell}(-m)^{r-j+1-\ell} \left(\frac{\beta}{P}\right)^{\ell} \int_{\gamma = u_k}^{\infty} \frac{1}{\gamma^{\ell}}\frac{e^{-\frac{\gamma - u_k}{a_{\{K,r\}}}}}{a_{\{K,r\}}} \mathrm{d} \gamma &\\
& = \frac{1}{2}+\frac{1}{2(r-j+1)!} \sum_{m=0}^{r-j+1}(-1)^m \binom{r-j+1}{m}\mathrm{sgn}\left(\frac{\beta}{P \gamma}-m\right) &\\
&\qquad \cdot \sum_{\ell = 0}^{r-j+1} \binom{r-j+1}{\ell}(-m)^{r-j+1-\ell} \left(\frac{\beta}{P}\right)^{\ell} \frac{1}{a_{\{K,r\}}u_k^{\ell - 1}}e^{u_k/a_{\{K,r\}}}E_\ell (u_k/a_{\{K,r\}}) &\\
\end{align*}
Noting that the CDF of a minimum of $n$ random variables, each with CDF $F(x)$,  is $1-(1-F(x))^n$, the theorem follows.
\end{proof}

\subsubsection{Outage Rate}
\BLUE{
Even when users know their own channel gain, the rate that each of the above-threshold users eventually observes depends on all other above-threshold active users. In particular, since the scheduling is decentralized, the instantaneous rate of each above-threshold user $i$ is $\log(1 + \Vert \bV_i \bh_i \Vert^2)$. However, recall that  $\bV_i$ is determined by the other above-threshold users, and its value is absent at user $i$. Consequently, each active user $i$ targets a fixed rate $R_i$, which can be decoded at the receiver if $R_i < \log(1 + \Vert \bV_i \bh_i \Vert^2)$. In this case, the expected performance for user $i$ is $R_{out_i} = R \cdot \Pr\left(\log(1 + \Vert \bV_i \bh_i \Vert^2) < R \right)$.

Since only users whose channels' gain is above high threshold are considered, we define $\Pr(\textmd{outage}) = \Pr \left(\log(1 + P\Vert \bV_i \bh_i \Vert^2 ) < R \big| \Vert \bh_i \Vert^2 \geq u_k\right)$. Thus, distribution of an  outage for the ZF detection satisfies the following bounds.
\begin{lemma}\label{lemma. outage upper}
The outage probability when above-threshold user transmit at fixed rate $R$ under ZF detection satisfies the following upper bound
\begin{align*}
&\Pr(\textmd{outage}) \leq 1- \left( 1 - \frac{2^R - 1}{P u_k}\right)^{r-1},& \forall 0 \leq R \leq \log(1 + P u_k) &
\end{align*}
where  $u_k$ is the threshold set such that $k$ users exceed it on average.
\end{lemma}
As follows from the proof below, for all  values $R > \log(1+ P u_k)$ outage occurs in probability that is equals to $1$.
\begin{proof}
Since only above-threshold users utilize the channel, the outage probability of user $i$ can be expressed as
\begin{align*}
&\Pr(\textmd{outage}) = \Pr(P\Vert \bV_i \bh_i \Vert^2  < 2^{R}-1 \big| \Vert \bh_i \Vert^2 \geq u_k)&\\
\end{align*}
To ease notation, let us denote $\beta = 2^R -1$. Thus,
\begin{align*}
&\Pr(\Vert \bV_i \bh_i \Vert^2  \leq \frac{\beta}{P} \big| \Vert \bh_i \Vert^2 \geq u_k)&\\
&= \int_{\gamma = u_k}^{\infty}\Pr\left(\left.\Vert \bh_i \Vert^2 \sum_{m=1}^{r-j+1}\frac{\vert \langle\bV_i^{(m)},\bh_i\rangle \vert^2}{\Arrowvert \bh_i \Arrowvert^2 \Arrowvert \bV^{(m)}_i \Arrowvert^2} \leq \frac{\beta}{P} \right| \Vert \bh_i \Vert^2 =\gamma\right) f_{\Vert \bh_i \Vert^2 \big| \Vert \bh_i \Vert^2 > u_k }(\gamma) \mathrm{d} \gamma& \\
&\stackrel{(a)}{=} \int_{\gamma = u_k}^{\infty} Pr\left(\sum_{m=1}^{r-j+1}\frac{\vert \langle\bV_i^{(m)},\bh_i\rangle \vert^2}{\Arrowvert \bh_i \Arrowvert^2 \Arrowvert \bV^{(m)}_i \Arrowvert^2} \leq \frac{\beta}{P \gamma} \right)f_{\Vert \bh_i \Vert^2 \big| \Vert \bh_i \Vert^2 > u_k }(\gamma) \mathrm{d} \gamma &\\
& \stackrel{(b)}{\leq} \int_{\gamma = u_k}^{\infty} Pr\left(\frac{\vert \langle\bV_i^{(1)},\bh_i\rangle \vert^2}{\Arrowvert \bh_i \Arrowvert^2 \Arrowvert \bV^{(1)}_i \Arrowvert^2} \leq \frac{\beta}{P \gamma} \right)f_{\Vert \bh_i \Vert^2 \big| \Vert \bh_i \Vert^2 > u_k }(\gamma) \mathrm{d} \gamma &
\end{align*}
where $(a)$ follows since the angle between $\bh_i$ and $\bV_i$ is independent of the norm of $\bh_i$. $(b)$ follows by considering a single term in the sum, and the equality holds when $j=r$.
Recall that the norm's tail distribution follows the exponential distribution with rate parameter $1/a_{\{K,r\}}$, and that the angle distributes as the minimum of $r-1$ independent uniform $[0,1]$ random variables, we have the following.
\begin{align*}
&\Pr(\Vert \bV_i \bh_i \Vert^2  \leq \frac{\beta}{P} \big| \Vert \bh_i \Vert^2 \geq u_k)&\\
&\leq \int_{u_k}^{\infty}(1-(1- \beta/(P \gamma))^{r-1}) \frac{1}{a_{\{K,r\}}}e^{-\frac{\gamma - u_k}{a_{\{K,r\}}}}\mathrm{d} \gamma&\\
&= \int_{u_k}^{\infty}\left(1- \sum_{n=0}^{r-1} \binom{r-1}{n}\left(-1\right)^{n}\left(\frac{\beta}{P\gamma}\right)^n \right) \frac{1}{a_{\{K,r\}}}e^{-\frac{\gamma - u_k}{a_{\{K,r\}}}}\mathrm{d} \gamma&\\
&= 1 - \sum_{n=0}^{r-1} \binom{r-1}{n} \left(-1\right)^{n}\left(\frac{\beta}{P}\right)^n\int_{u_k}^\infty \frac{1}{\gamma^n}\frac{e^{-\frac{\gamma - u_k}{a_{\{K,r\}}}}}{a_{\{K,r\}}} \mathrm{d}\gamma &\\
&=  1 - \sum_{n=0}^{r-1} \binom{r-1}{n}\left(-1\right)^{n}\left(\frac{\beta}{P}\right)^n \frac{1}{a_{\{K,r\}} u_k^{n-1}}e^{u_k/a_{\{K,r\}}}E_n(u_k/a_{\{K,r\}})&\\
&\stackrel{(a)}{=}   \sum_{n=1}^{r-1} \binom{r-1}{n}\left(-1\right)^{n+1}\left(\frac{\beta}{P}\right)^n \frac{1}{a_{\{K,r\}} u_k^{n-1}}e^{u_k/a_{\{K,r\}}}E_n(u_k/a_{\{K,r\}})&\\
& \stackrel{(b)}{\leq} \sum_{n=1}^{r-1} \binom{r-1}{n}\left(-1\right)^{n+1}\left(\frac{\beta}{P}\right)^n \frac{1}{a_{\{K,r\}}u_k^{n-1}}\frac{1}{u_k/a_{\{K,r\}} + n - 1 }&\\
& \stackrel{(c)}{\leq} \sum_{n=1}^{r-1} \binom{r-1}{n}\left(-1\right)^{n+1}\left(\frac{\beta}{P u_k}\right)^n &\\
&=   1- \left( 1 - \frac{\beta}{P u_k}\right)^{r-1} &
\end{align*}
where $(a)$ follows since $E_0(z) = e^{-z}/z$ \cite[eq. 5.1.24]{abramowitz1966handbook}, $(b)$ follows since $e^z E_n(z) \leq 1/(z+n-1)$ for $n>0$ \cite[eq. 5.1.24]{abramowitz1966handbook}, and $(c)$ follows by setting $n=1$ in the right term.
\end{proof}
Similarly, the outage probability satisfies the following lower bound.
\begin{lemma}\label{lemma. outage lower}
The outage probability when above-threshold user transmit at fixed rate $R$ under ZF detection satisfies the following lower bound
\begin{align*}
&\Pr(\textmd{outage})  > \frac{1}{2} + \frac{1}{2(r-j+1)!} \sum_{m=0}^{r-j+1}(-1)^m \binom{r-j+1}{m}\left( \frac{2^R-1}{P u_k} - m \right)^{r-j+1}&
\end{align*}
where  $u_k$ is the threshold set such that $k$ users exceed it on average.
\end{lemma}
\begin{proof}
To ease notation, let us denote $\beta = 2^R -1$. Thus,
\begin{align*}
&\Pr\left(\Vert \bV_i \bh_i \Vert^2  \leq \frac{\beta}{P} \big| \Vert \bh_i \Vert^2 \geq u_k\right)&\\
&= \int_{\gamma = u_k}^{\infty}\Pr\left(\left.\Vert \bh_i \Vert^2 \sum_{m=1}^{r-j+1}\frac{\vert \langle\bV_i^{(m)},\bh_i\rangle \vert^2}{\Arrowvert \bh_i \Arrowvert^2 \Arrowvert \bV^{(m)}_i \Arrowvert^2} \leq \frac{\beta}{P} \right| \Vert \bh_i \Vert^2 =\gamma\right) f_{\Vert \bh_i \Vert^2 \big| \Vert \bh_i \Vert^2 > u_k }(\gamma) \mathrm{d} \gamma& \\
&\stackrel{(a)}{=} \int_{\gamma = u_k}^{\infty} Pr\left(\sum_{m=1}^{r-j+1}\frac{\vert \langle\bV_i^{(m)},\bh_i\rangle \vert^2}{\Arrowvert \bh_i \Arrowvert^2 \Arrowvert \bV^{(m)}_i \Arrowvert^2} \leq \frac{\beta}{P \gamma} \right)f_{\Vert \bh_i \Vert^2 \big| \Vert \bh_i \Vert^2 > u_k }(\gamma) \mathrm{d} \gamma &\\
& \geq \int_{\gamma = u_k}^{\infty} Pr\left(\sum_{m=1}^{r-j+1} U_m \leq \frac{\beta}{P \gamma} \right)f_{\Vert \bh_i \Vert^2 \big| \Vert \bh_i \Vert^2 > u_k }(\gamma) \mathrm{d} \gamma &
\end{align*}
where $\{U_m\}$ is a sequence of independent random variables, each with the uniform distribution on the interval $[0,1]$. Sum of $r-j+1$ uniform i.i.d random variables follows the Irwin–Hall distribution with CDF $F_{r-j+1}(x) = \frac{1}{2 } + \frac{1}{2(r-j+1)!} \sum_{m=0}^{r-j+1}(-1)^m \binom{r-j+1}{m}\mathrm{sgn}(x-m)(x-m)^{r-j+1}$ \cite{puturl}. Accordingly, we have

\begin{align*}
&\Pr\left(\Vert \bV_i \bh_i \Vert^2  \leq \frac{\beta}{P} \big| \Vert \bh_i \Vert^2 \geq u_k\right)&\\
& \geq \int_{\gamma = u_k}^{\infty}\left(\frac{1}{2} +  \frac{1}{2(r-j+1)!} \sum_{m=0}^{r-j+1}(-1)^m \binom{r-j+1}{m}\mathrm{sgn}\left(\frac{\beta}{P \gamma}-m\right)\left(\frac{\beta}{P \gamma} -m\right)^{r-j+1}\right)  \frac{1}{a_{\{K,r\}}}e^{-\frac{\gamma - u_k}{a_{\{K,r\}}}}\mathrm{d} \gamma &\\
& = \frac{1}{2} +  \frac{1}{2(r-j+1)!} \sum_{m=0}^{r-j+1}(-1)^m \binom{r-j+1}{m} \mathrm{sgn}\left(\frac{\beta}{P \gamma}-m\right) &\\
& \qquad \cdot \sum_{\ell = 0}^{r-j+1} \binom{r-j+1}{\ell}(-m)^{r-j+1-\ell} \left(\frac{\beta}{P}\right)^{\ell} \int_{\gamma = u_k}^{\infty} \frac{1}{\gamma^{\ell}}\frac{e^{-\frac{\gamma - u_k}{a_{\{K,r\}}}}}{a_{\{K,r\}}} \mathrm{d} \gamma &\\
& = \frac{1}{2}+\frac{1}{2(r-j+1)!} \sum_{m=0}^{r-j+1}(-1)^m \binom{r-j+1}{m}\mathrm{sgn}\left(\frac{\beta}{P \gamma}-m\right) &\\
&\qquad \cdot \sum_{\ell = 0}^{r-j+1} \binom{r-j+1}{\ell}(-m)^{r-j+1-\ell} \left(\frac{\beta}{P}\right)^{\ell} \frac{1}{a_{\{K,r\}}u_k^{\ell - 1}}e^{u_k/a_{\{K,r\}}}E_\ell (u_k/a_{\{K,r\}}) &\\
& \stackrel{(a)}{>}\frac{1}{2} + \frac{1}{2(r-j+1)!} \sum_{m=0}^{r-j+1}(-1)^m \binom{r-j+1}{m} \mathrm{sgn}\left(\frac{\beta}{P \gamma}-m\right) &\\
&\qquad \cdot\left((-m)^{r-j+1}+  \sum_{\ell = 1}^{r-j+1} \binom{r-j+1}{\ell}(-m)^{r-j+1-\ell} \left(\frac{\beta}{P}\right)^{\ell} \frac{1}{u_k^{\ell - 1}(u_k/a_{\{K,r\}} + \ell)} \right) &\\
& \stackrel{(b)}{\geq}\frac{1}{2} +  \frac{1}{2(r-j+1)!} \sum_{m=0}^{r-j+1}(-1)^m \binom{r-j+1}{m} \mathrm{sgn}\left(\frac{\beta}{P \gamma}-m\right) &\\
&\qquad \cdot\left((-m)^{r-j+1} +  \sum_{\ell = 1}^{r-j+1} \binom{r-j+1}{\ell}(-m)^{r-j+1-\ell} \left(\frac{\beta}{P u_k}\right)^{\ell} \right) & \\
& \stackrel{(d)}{\geq}\frac{1}{2} +  \frac{1}{2(r-j+1)!} \sum_{m=0}^{r-j+1}(-1)^m \binom{r-j+1}{m} \mathrm{sgn}\left(\frac{\beta}{P \gamma}-m\right) &\\
&\qquad \cdot \left((-m)^{r-j+1} + \sum_{\ell = 1}^{r-j+1} \binom{r-j+1}{\ell}(-m)^{r-j+1-\ell} \left(\frac{\beta}{P u_k}\right)^{\ell} \right) & \\
& \stackrel{}{=} \frac{1}{2} + \frac{1}{2(r-j+1)!} \sum_{m=0}^{r-j+1}(-1)^m \binom{r-j+1}{m} \mathrm{sgn}\left(\frac{\beta}{P \gamma}-m\right)\left( \frac{\beta}{P u_k} - m \right)^{r-j+1}&
\end{align*}
where $(a)$ follows from \cite[eq. 5.1.19 \& 5.1.24]{abramowitz1966handbook}, and $(b)$ follows by adding the $\ell = 0$ element, which is negative.
\end{proof}

Note that it is most likely to obtain $k$ above-threshold users. Furthermore,   given that the actual number of users that exceed the threshold are $1\leq j \leq r$, as the threshold exceeding rate $k$ grows the probability to obtain $j = r$ above-threshold users goes to $\frac{e^{-k} k^r/r!}{e^{-k}\sum_{j=1}^r k^r/r!} \to 1$ \cite[eq. 6.5.32]{abramowitz1966handbook}. In this case, the upper and lower bounds in Lemma~\ref{lemma. outage upper} and Lemma~\ref{lemma. outage lower} coincide. In Figure

}

} 

Extensive simulations were conducted to compare the analytical bounds derived above to real world situations with a finite number of users. The setting we explore is the following: Each block, each of the K users has a new, independent channel vector. As mentioned, these vectors are drawn using a Gaussian distribution. Our algorithm has a parameter $k$, and sets a threshold such that on average $k$ out of $K$ will exceed the threshold for each block. The exceeding users transmit, they are detected using ZF or MMSE, and so on. Now, clearly, as the channel vectors of the users where chosen at random, the resulting sum-rate under this procedure is also random. This is the randomness we explore. Specifically, in the analytical results, we give upper and lower bounds on the \emph{expected} sum-rate, and in Figures~\ref{fig:mac_sim248}-\ref{fig:mac_sim24830} we compare the bounds to simulation results, when vectors are chosen at random, users are "selected" using the algorithm, and the sum-rate is computed. In particular, a random channel was drawn for each user from the complex Gaussian distribution, then, a group of users were selected according to the {\scshape TB-Channel-Access} scheme, and the resulting sum-rate was calculated. This experiment was repeated $1000$ times, such that the average sum-rate is a good approximation for the expected sum-rate (assuming ergodicity).
In Figure~\ref{fig:mac_sim248}, we compare the bounds on the \RED{expected achievable} \BLUE{ergodic} sum-rate of the {\scshape TB-Channel-Access} scheme under ZF-receiver (i.e., Lemma~\ref{C ZF uniform users upper} and Lemma~\ref{C ZF uniform users lower}), to the simulation results, for a different threshold values $u_k$. The bounds and simulation are for $K=300$ users, with $2,4$ and $8$ receiving antennas at the BS. The tightness of the upper bound is clearly visible. While the lower bound is looser, it still gives the correct behavior as a function of the number of users passing the threshold on average, $k$. Indeed, it is clear from the figure that a key factor affecting the system performance is the number of users passing the threshold. This distribution gives the graphs their Poisson-like shape.

To see that the trend holds even for a relatively small number of users, Figure~\ref{fig:mac_sim24830} includes the same plots for $K=30$. Note that even for $30$ users the algorithm manages to achieve a significant rate (compared, e.g., to the one achieved with $300$ users). This means the essence of the multi-user diversity is exploited by the algorithm even for a relatively small number of users. Note, however, that as $r$ approaches $K$ then the accuracy of the bounds decreases.

The results in \Cref{fig:mac_sim248} and \Cref{fig:mac_sim24830} give the expected sum-rate as a function of $k$. The \RED{capacity} \BLUE{rate} distribution, compared to different algorithms, will be given in the subsequent sections.
\begin{figure}
	\centering
		\includegraphics[scale=0.35,angle=-90]{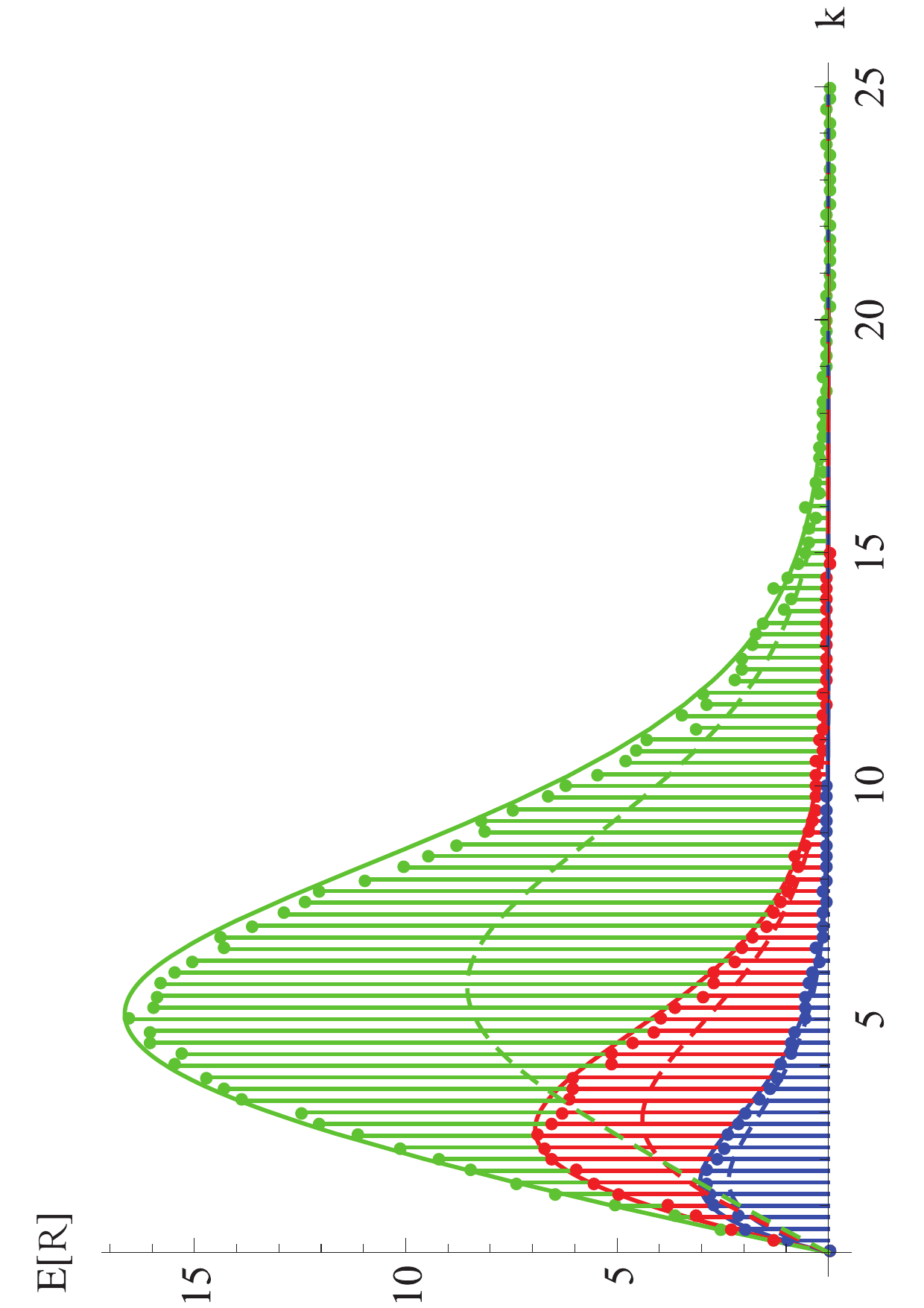}
	\caption{\RED{Expected achievable} \BLUE{Ergodic} sum-rate under the {\scshape TB-Channel-Access} scheme and $K=300$. Bars are simulation results, while the solid and dashed lines represent the upper and lower bounds, respectively. The threshold $u_k$ is set such that $k$ users exceed it on average. The green (upper), red (middle) and blue (lower) lobes are for $r=8,4$ and $2$ receive antennas, respectively. Note that the optimal $k$ is smaller than $r$. This is since it is better to aim at slightly less than $r$ users passing, as the performance deteriorates significantly for more than $r$ users transmitting concurrently.}
	\label{fig:mac_sim248}
\end{figure}

\begin{figure}
	\centering
		\includegraphics[scale=0.35,angle=-90]{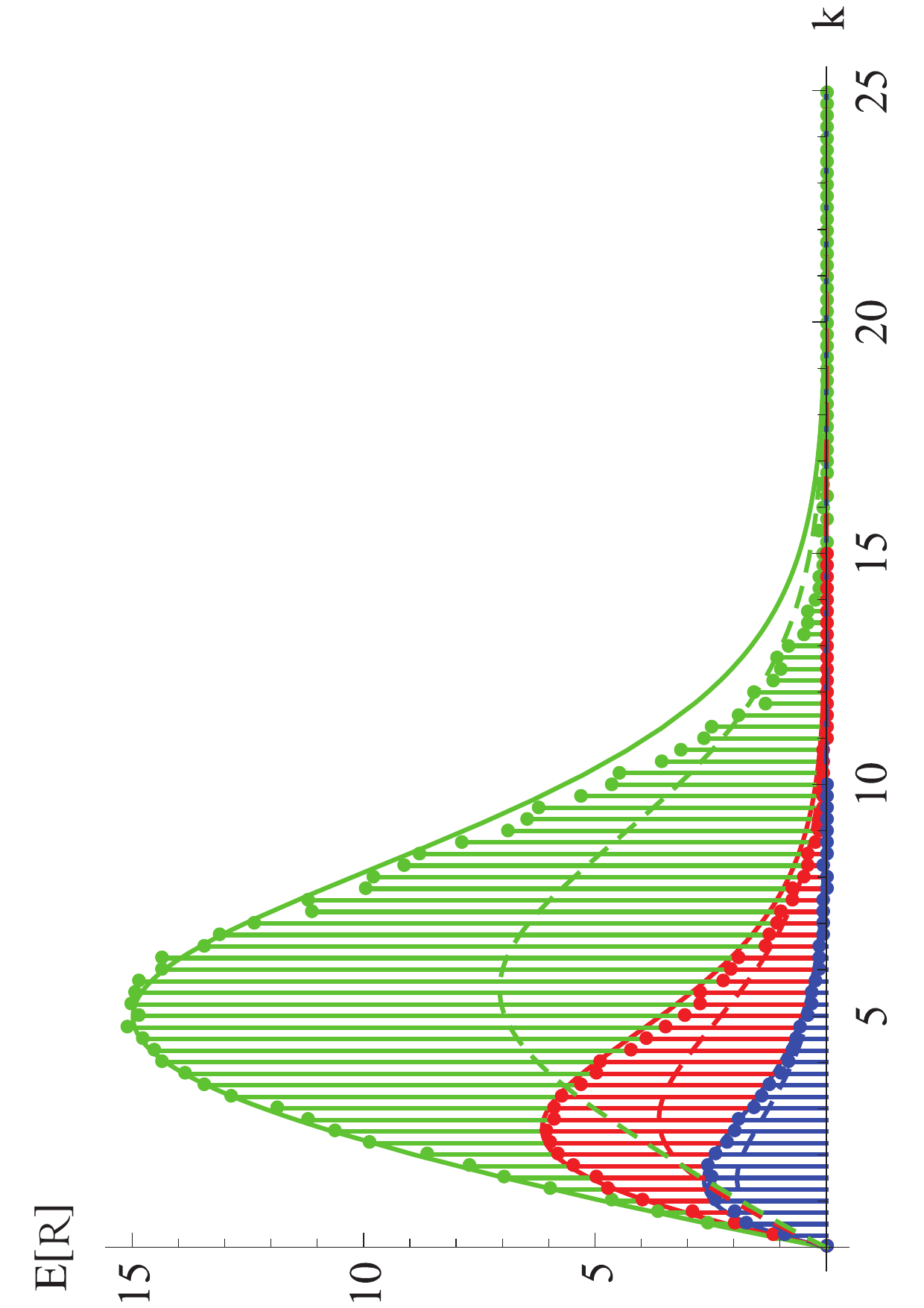}
	\caption{Expected achievable sum-rate under the {\scshape TB-Channel-Access} scheme  and $K=30$. Bars are simulation results, while the solid and dashed lines represent the upper and lower bounds, respectively. The threshold $u_k$ is set such that $k$ users exceed it on average. The green (upper), red (middle) and blue (lower) lobes are for $r=8,4$ and $2$ receive antennas, respectively. Note that the essence of the analysis holds for even a relatively small number of users as $30$, as the bounds still depict the key elements affecting the performance.}
	\label{fig:mac_sim24830}
\end{figure}

\section{MMSE receiver}\label{sec. MMSE rec}
The ZF receiver discussed thus far is on the one hand simple enough to facilitate rigorous analysis, yet, as shown in the previous sections, powerful enough in the sense that with intelligent user selection (in this paper, distributed) can achieve the optimal scaling laws. Still, this is not the optimal linear receiver.
In this section, we explore the scaling laws of the \RED{expected achievable} \BLUE{ergodic} sum-rate of the MMSE receiver.
\BLUE{Note that, similar to the ZF detection, also in MMSE detection, the $SINR$ of user $i$ depends on all other above-threshold users, i.e., on the whole channel matrix $\bH$, via matrix $\bR$. } 

As mentioned, in this case we let $\bH$ denote a matrix whose columns are the channel vectors of the \emph{transmitting users}. That is, when using the {\scshape TB-Channel-Access} algorithm, vectors with norm greater than the  threshold. $\bH_{(-i)}$ is the matrix $\bH$ with its $i_{th}$ column removed. Under these definitions, the SINR seen at the $i$th stream was given in \eqref{sinr in mmse}.
Let $\mathcal{S}$ denote the set of channels with norm greater than a threshold. Then, the expected achievable sum-rate under the {\scshape TB-Channel-Access} algorithm in \Cref{sec. dist alg} is as follows.
\begin{proposition}\label{C MMSE uniform users}\label{prop. C mmse}
For $K \gg r$, the \RED{expected achievable} \BLUE{ergodic} sum-rate with MMSE detection is
\begin{align*}
	\mathcal{R}(u_k) =& \sum_{j=1}^{r}\frac{k^{j}\e^{-k}}{j!} \sum_{i=1}^j \E\left[\left.\log\left(1 + P \bh_i^{\dagger}\bR \bh_i \right)\right| \Vert \bh_s \Vert^2 > u_k, \forall s \in \mathcal{S} \right] + O\left(\frac{\log \log K}{K}\right)
\end{align*}
where $u_k$ is a threshold\blue{, set such that the number of users that exceed it on average is $k$}.
\end{proposition}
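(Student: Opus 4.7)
The plan is to mirror the structure of the proof of Proposition~\ref{C ZF uniform users exact} (the ZF case), replacing the per-user ZF rate $\log(1 + P\|\bV_i \bh_i\|^2)$ by the MMSE per-stream rate $\log(1 + P\, \bh_i^{\dagger}\bR\bh_i)$ from \eqref{sinr in mmse}. The sum-capacity expression is obtained by conditioning on the number of users who exceed the threshold, handling the collision event for $j>r$ as a zero-capacity slot (by the standing assumption), and then identifying the exceedance count as Poisson up to the stated approximation error.

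Concretely, the first step is to write, by the law of total probability,
\[
\text{E}[C(u_k)] \;=\; \sum_{j=0}^{K} \Pr\{ j \text{ users exceed } u_k\}\cdot \text{E}\!\left[ C \,\big|\, j \text{ users exceed } u_k \right].
\]
The $j=0$ term vanishes (idle slot), and for $j>r$ the assumption that a slot with more than $r$ simultaneous transmitters is lost makes those terms vanish as well. For $1\le j\le r$, the MMSE decoder yields, per stream $i$, the SINR expression $\bh_i^{\dagger}\bR\bh_i$ defined through \eqref{R in mmse}. Hence the conditional sum-capacity is exactly $\sum_{i=1}^{j}\log(1+P\,\bh_i^{\dagger}\bR\bh_i)$, averaged over the joint law of the $j$ conditioned channel vectors $\{\bh_i\}_{i=1}^{j}$, each constrained to lie in $\{\|\bh\|^2>u_k\}$.

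The second step is to evaluate $\Pr\{j \text{ users exceed } u_k\}$. Since the users are i.i.d., this count is binomial with parameter $p=k/K$ (by the choice of $u_k$ as the threshold with expected exceedance rate $k$). Appealing to the Poisson point-process approximation from \cite{kampeas2012opportunistic} (already invoked in the ZF proof), the binomial law is approximated by the Poisson mass $k^{j}e^{-k}/j!$ with total variation error $O(1/K)$, uniformly in the relevant range $1\le j\le r$. Substituting this into the total-probability decomposition yields the claimed leading term.

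The third step accounts for the error term. The Poisson approximation error $O(1/K)$ appears inside the outer sum and multiplies a conditional expected capacity bounded by the single-user multi-user-diversity scaling $O(\log\log K)$ (same argument as in the ZF proof: even the centralized MMSE sum rate with at most $r$ streams cannot exceed $r\log(1+P\|\bh_{(1)}\|^2)=\Theta(\log\log K)$). Combining yields an overall $O(\log\log K/K)$ residual, matching the statement. The only subtlety relative to the ZF derivation, and the main thing to be careful about, is that the MMSE SINR couples all $j$ active streams through $\bR$, so the inner expectation cannot be factored as $\sum_i \text{E}[\cdot\,|\,\|\bh_i\|^2>u_k]$ with a single-user conditioning as in the ZF case; one must keep the joint conditioning $\|\bh_s\|^2>u_k$ for all $s\in\mathcal{S}$. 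This is already reflected in the statement. The Poisson approximation itself, however, depends only on the marginal exceedance probability $k/K$ and is insensitive to the channel directions, so it applies verbatim and the argument closes.
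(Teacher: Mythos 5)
Your proposal is correct and follows essentially the same route as the paper, which itself only remarks that the proof is identical to the ZF case (Proposition~\ref{C ZF uniform users exact}) with the per-stream rate replaced by the MMSE SINR term $\bh_i^{\dagger}\bR\bh_i$. Your observation that the joint conditioning on all of $\mathcal{S}$ must be retained (since $\bR$ couples the active streams, unlike the ZF case) is precisely the one substantive difference from the ZF derivation, and you handle it correctly.
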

This expected sum-rate should be optimized over $k$. The proof is similar to the ZF setting. The only change is in the SNR seen by the users, as reflected by the term within the $\log$.
\off{
\begin{proof}
Similar to the ZF setting, let $J$ be a random variable representing the number of users exceeding $u_k$. Clearly, $\text{E}\{C(u_k)\} = \text{E}_J \text{E}\{C(u_k)|J\}$. That is, the expected sum-rate in a slot is determined by the number of users who exceed the threshold and the sum of capacities these users see, given that they exceeded the threshold. In this paper, we restrict our attention to $j \leq r$ transmitting users. Thus, if more than $r$ user begin transmission,  we say that a collision occurred, and the sum-rate in that slot is considered as zero. However, note that in this scheme only users above the threshold may interfere with each other.

When the users are i.i.d., the probability of $j$ threshold exceedances follows the binomial distribution with probability $p= k/K$ to exceed the threshold $u_k$. Since we consider large $K$ and small values of $p$, the number of users to exceed threshold can be approximated by the Poisson distribution with an approximation error in the order of $1/K$. Hence the Poisson expression for $J=j$. The sum-rate given $J=j$ is the standard capacity expression under MMSE receiver as users are decoded separately.
\end{proof}
}
\begin{remark}\label{scaling remark}
\blue{When $r$ strongest users are selected out of $K$ users, and $K$ is large, the channel gain (of these strong users) scales as $O(\log K)$. It is interesting to see how this affects the SNR in both ZF and MMSE. Rewrite $\bh_i$ as $\alpha \bw_i$, where $\bw_i = \frac{\bh_i}{\Vert \bh_i \Vert}$. Thus, $\alpha^2$ represents the scaling of the channel norm $\Vert \bh_i \Vert ^2$.

When using ZF, the effective SNR seen by a user is $P\Vert \bV_i \bh_i \Vert^2 = \alpha^2 P\Vert \bV_i \bw_i \Vert^2$. As the matrix $\bV_i$ is unitary, the scaling of the SNR with $\alpha$ is clear. In the MMSE case, however, the analysis is intricate. The SNR is $P \bh_i^{\dagger}\bR \bh_i = \alpha^2 P \bw_i^{\dagger}\bR \bw_i$. However, at first sight, $\bR = (\blue{P}\bH_{(-i)}\bH_{(-i)}^\dagger+I)^{-1}$ appears to contribute a $\frac{1}{\alpha^2}$ factor, as the eigenvalues of $\bH_{(-i)}\bH_{(-i)}^\dagger$ increase like $\alpha^2$. Fortunately, $\bH_{(-i)}\bH_{(-i)}^\dagger$ is \emph{not full rank}, hence has at least one zero eigenvalue, which does not increase with $\alpha$. As a result, $\bR$ has at least one eigenvalue which remains constant and does not scale down when $\alpha$ increases, and therefore the SNR under MMSE scales up like $\alpha^2$ as well.}
\end{remark}

To evaluate the \RED{expected achievable} \BLUE{ergodic} sum-rate in \Cref{prop. C mmse}, the characteristics of the random variables $\bh_i^{\dagger} \bR \bh_i$  should be understood, especially when the number of transmitting users is random. Further, the influence of the norms  $\Vert \bh_s \Vert^2$ on $\bh_i^{\dagger} \bR \bh_i$ should be evaluated.
The key technical challenge, however, is due to the norm condition \emph{inducing dependence on the matrix elements}, hence the random matrix theory usually used in the MIMO literature does not hold. Part of the contribution in this section, is by bringing new tools to tackle this problem.

Note that in this section, regular type letters  represent random variables as well as scalar variables. The difference will be clear from the context.
As a first tool to handle the dependence within the matrix entries, we start with the following lemma:
\begin{lemma}\label{claim. entries var of above thr vec}
Assume the norms $\Vert \bh_s \Vert^2, \forall s \in \mathcal{S}$ are above a given threshold $u_k$. Then the following properties hold:
\begin{itemize}
	\item[(i)]  The entries of the channel vector remain zero mean.
	\item[(ii)] The variance of each entry in the vector  scales. In particular, is equals to
\begin{equation*}
\E\left[\left. \vert h_{i,n} \vert^2 \right| \Vert \bh_i \Vert^2 > u_k \right] = (u_k + a_{\{K,r\}})/r.
\end{equation*}
\item[(iii)] The vector elements remain uncorrelated in pairs.
\end{itemize}
\end{lemma}
Note that, to begin with, the entries of $\bH$ are i.i.d. The lemma states that conditioned on exceeding a threshold, while not i.i.d., they sustain the zero correlation. This property will be useful throughout the remainder of this paper.
The proof of \Cref{claim. entries var of above thr vec} is deferred to \Cref{proof:above_thr_vec_entries}.
\subsection{{Threshold-based MMSE Upper Bound}}
Now we are ready to derive the scaling-law of the MMSE receiver, using the following upper and lower bounds on the {\scshape TB-Channel-Access} scheme \RED{expected  achievable} \BLUE{ergodic} sum-rate.   
\begin{lemma}\label{lem. C mmse upper}
The \RED{expected  achievable} \BLUE{ergodic} sum-rate of Algorithm {\scshape TB-Channel-Access} with MMSE detection satisfies the following upper bound.
\begin{multline*}
	\mathcal{R}(u_k) \leq \sum_{j=1}^{r}\frac{k^{j}\e^{-k}}{j!} j \\ \log\left(1 + P\bigg( 1 - \frac{(j-1) u_k^2}{r\left((1+\frac{j}{r})(u_k + a_{\{K,r\}})^2 +a_{\{K,r\}}^2 +(a_{\{K,r\}}+u_k)/\blue{P}\right)}\bigg) (u_k + a_{\{K,r\}})  \right). 
\end{multline*}
where $a_{\{K,r\}}$ is given by (\ref{eqn: a_n normalized}) and $u_k$ is a threshold\blue{, set such that the number of users that exceed it on average is $k$}.
\end{lemma}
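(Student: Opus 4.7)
The plan is to apply Jensen's inequality to pass the expectation inside the logarithm and then upper-bound the conditional $\text{E}[\text{SINR}_i]$. Starting from Proposition~\ref{prop. C mmse}, concavity of $\log$ gives $\text{E}[\log(1 + P\,\bh_i^{\dagger}\bR\bh_i)] \leq \log(1 + P\,\text{E}[\bh_i^{\dagger}\bR\bh_i])$. Applying the matrix-inversion (Woodbury) identity to $\bR$ yields
\[
\bh_i^{\dagger}\bR\bh_i = \Arrowvert\bh_i\Arrowvert^2 - \bv^{\dagger}A\bv,
\]
where $\bv = \bH_{(-i)}^{\dagger}\bh_i$ and $A = (I + \bH_{(-i)}^{\dagger}\bH_{(-i)})^{-1}$. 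By Corollary~\ref{coro: expected capacity above est by GEV thr}, $\text{E}[\Arrowvert\bh_i\Arrowvert^2 \mid \Arrowvert\bh_s\Arrowvert^2 > u_k\,\forall s \in \mathcal{S}] = u_k + a_{\{K,r\}}$, so the remaining task reduces to a matching \emph{lower} bound on $\text{E}[\bv^{\dagger}A\bv]$.

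For this lower bound I would use the Cauchy--Schwarz inequality on the pair of quadratic forms, $(\bv^{\dagger}A\bv)(\bv^{\dagger}A^{-1}\bv) \geq \Arrowvert\bv\Arrowvert^4$, obtaining the deterministic estimate $\bv^{\dagger}A\bv \geq \Arrowvert\bv\Arrowvert^4/(\Arrowvert\bv\Arrowvert^2 + \Arrowvert\bH_{(-i)}\bv\Arrowvert^2)$. Taking expectations and applying Titu's inequality $\text{E}[X^2/Y] \geq (\text{E}X)^2/\text{E}Y$ with $X = \Arrowvert\bv\Arrowvert^2$ transfers the problem to second- and fourth-order moment calculations. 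The $u_k^2$ factor in the final numerator will come from the further weakening $(\text{E}\Arrowvert\bv\Arrowvert^2)^2 \geq \text{E}\Arrowvert\bv\Arrowvert^2 \cdot (j-1)u_k^2/r$, where the pointwise bound $\text{E}\Arrowvert\bv\Arrowvert^2 \geq (j-1)u_k^2/r$ follows by writing $|\langle \bh_s,\bh_i\rangle|^2 = \Arrowvert\bh_s\Arrowvert^2 \Arrowvert\bh_i\Arrowvert^2\cos^2\theta_{si}$, using $\Arrowvert\bh_s\Arrowvert^2, \Arrowvert\bh_i\Arrowvert^2 > u_k$ on the conditioning event, and invoking the independence of the (isotropic) angle from the norm-conditioning together with $\text{E}[\cos^2\theta_{si}] = 1/r$ for complex Gaussian vectors in $\mathbb{C}^r$.

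The main obstacle is the fourth-order moment $\text{E}\Arrowvert\bH_{(-i)}\bv\Arrowvert^2 = \text{E}[\bh_i^{\dagger}(\bH_{(-i)}\bH_{(-i)}^{\dagger})^2\bh_i]$, where Claim~\ref{claim. entries var of above thr vec} is the key workhorse. Writing $\bH_{(-i)}\bH_{(-i)}^{\dagger} = \sum_{s\neq i}\bh_s\bh_s^{\dagger}$ and squaring produces diagonal ($s=t$) contributions involving $m_4 := \text{E}[\Arrowvert\bh_s\Arrowvert^4 \mid \Arrowvert\bh_s\Arrowvert^2 > u_k]$, which under the EVT exponential-excess approximation evaluates to $(u_k+a_{\{K,r\}})^2 + a_{\{K,r\}}^2$, and off-diagonal ($s\neq t$) contributions which, via the zero-mean and uncorrelated-pairs properties of Claim~\ref{claim. entries var of above thr vec}(i)--(iii), collapse component-wise to $(u_k+a_{\{K,r\}})^3/r^2$ per pair. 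Substituting these moments together with the exact second moment $\text{E}\Arrowvert\bv\Arrowvert^2 = (j-1)(u_k+a_{\{K,r\}})^2/r$, relaxing the combinatorial count $(j-2)$ up to $j$ for a clean form, and simplifying with the identity $a_{\{K,r\}}^2 + (u_k+a_{\{K,r\}}) = a_{\{K,r\}}(a_{\{K,r\}}+1) + u_k$, will then yield precisely the denominator $(1+j/r)(u_k+a_{\{K,r\}})^2 + a_{\{K,r\}}(a_{\{K,r\}}+1) + u_k$; the $(j-1)(u_k+a_{\{K,r\}})/r$ factor cancels across numerator and denominator to leave the target $(j-1)u_k^2(u_k+a_{\{K,r\}})$. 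Plugging the resulting upper bound on $\text{E}[\text{SINR}_i]$ into the Jensen step, observing that the inner sum contributes $j$ identical copies, and summing against the Poisson weights $k^j e^{-k}/j!$ completes the proof.
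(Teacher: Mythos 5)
Your proposal is correct, and it reaches the paper's bound by a genuinely different route. The paper first uses Claim~\ref{claim. entries var of above thr vec} to collapse $\text{E}[\bh_i^{\dagger}\bR\bh_i]$ to $\frac{u_k+a_{\{K,r\}}}{r}\,\text{E}[\text{tr}(\bR)]$ and then bounds $\text{tr}(\bR)$ via the Bai--Golub trace inequality \eqref{eq: trace inequality} for the inverse of a positive definite matrix, exploiting $\lambda_{\min}(\bR^{-1})=1$; a Jensen step on $-1/x$ then reduces everything to $\text{E}[\text{tr}(\bH_{(-i)}^{\dagger}\bH_{(-i)})]$ and $\text{E}[\text{tr}((\bH_{(-i)}^{\dagger}\bH_{(-i)})^2)]$. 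You instead isolate the interference-suppression term via Woodbury, $\bh_i^{\dagger}\bR\bh_i=\Vert\bh_i\Vert^2-\bv^{\dagger}A\bv$, and lower-bound it with the pointwise Cauchy--Schwarz estimate $(\bv^{\dagger}A\bv)(\bv^{\dagger}A^{-1}\bv)\ge\Vert\bv\Vert^4$ followed by the probabilistic Cauchy--Schwarz $\text{E}[X^2/Y]\ge(\text{E}X)^2/\text{E}Y$. Since $\text{E}[\bv^{\dagger}A^{-1}\bv]=\frac{u_k+a_{\{K,r\}}}{r}\,\text{E}[\text{tr}(\bH_{(-i)}^{\dagger}\bH_{(-i)})+\text{tr}((\bH_{(-i)}^{\dagger}\bH_{(-i)})^2)]$ and $\text{E}\Vert\bv\Vert^2=\frac{(j-1)(u_k+a_{\{K,r\}})^2}{r}\ge\frac{(j-1)u_k^2}{r}$, your numerator and denominator reduce to exactly the same conditional moments the paper computes (the exponential-excess fourth moment $(u_k+a_{\{K,r\}})^2+a_{\{K,r\}}^2$, the angle moment $1/r$, and the same $(j-2)\to j$ loosening of the cross-term count), and I have checked that the algebra yields the identical closed form. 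What your route buys is self-containment: it replaces the cited trace inequality with an elementary two-line Cauchy--Schwarz argument and makes the ``signal minus projected interference'' structure explicit; what the paper's route buys is the transparent power-gain interpretation $\text{tr}(\bR)\approx r-\zeta$ discussed after \eqref{eq: inverse trace bound}. The one place where your sketch is only a plan rather than a proof is the evaluation of $\text{E}\Vert\bH_{(-i)}\bv\Vert^2$, but the ingredients you name (independence of $\bh_i$ from $\bH_{(-i)}$, Claim~\ref{claim. entries var of above thr vec}, and the diagonal/off-diagonal split) do carry it through.
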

Before we prove the lemma, it is interesting to compare the scaling law under MMSE decoding to that achieved with ZF decoding. In both \Cref{C ZF uniform users upper} and \Cref{lem. C mmse upper}, the rate seen by each user is approximately $\log(1+cPu_k)$, for some constant $c$. Asymptotically, it follows that $0<c<1$, and in both cases, the rate gain comes from the threshold value $u_k$, which is, as mentioned, $O(\log K)$. This gives the growth rate of $\log \log K$ per user. However, note that while $c = \frac{r-j+1}{r}$ in \Cref{C ZF uniform users upper}, for large enough $u_k$ the gain in \Cref{lem. C mmse upper} is asymptotically $c=\frac{r+1}{r+j}$, that is, a larger gain for any $j \geq 1$. This is not surprising, as an MMSE detector does give a better power gain, but does not improve the already optimal scaling law.
\begin{proof}[Proof (\Cref{lem. C mmse upper})]
The rate seen by user $i$ is bounded by:
\begin{flalign}\label{eq: mmse upper bound}
\mathcal{R}_i (u_k) &\stackrel{(a)}{\leq} \log\left(1+ P \text{E}\left[\left. \bh_i^{\dagger}\bR \bh_i\right| \Vert \bh_s \Vert^2 > u_k, \forall s \in \mathcal{S} \right]  \right)\nonumber\\
&\stackrel{(b)}{=} \log\left(1+ P \text{E}\left[\left. \sum_{n=1}^r\sum_{m=1}^{r} h_{i,m}^{\ast}h_{i,n}\left[\bR\right]_{mn}\right| \Vert \bh_s \Vert^2 > u_k, \forall s \in \mathcal{S} \right]  \right)\nonumber\\
&\stackrel{(c)}{=} \log\left(1+ P \sum_{n=1}^r\sum_{m=1}^{r} \text{E}\left[  h_{i,m}^{\ast}h_{i,n}\Big| \Vert \bh_i \Vert^2 > u_k \right] \text{E}\left[ \left[\bR\right]_{mn} \Big| \Vert \bh_s \Vert^2 > u_k, \forall s \in \mathcal{S}\right]  \right) \nonumber\\
&\stackrel{(d)}{=} \log\left(1+ P \sum_{n=1}^{r} \text{E}\left[ \vert h_{i,n} \vert^2 \Big| \Vert \bh_i \Vert^2 > u_k \right] \text{E}\left[ \left[\bR\right]_{nn} \Big| \Vert \bh_s \Vert^2 > u_k, \forall s \in \mathcal{S}\right]  \right) \nonumber\\
&\stackrel{(e)}{=} \log\left(1+ P  \frac{(u_k + a_{\{K,r\}})}{r}\text{E}\left[ \text{tr}\left(\bR\right) \Big| \Vert \bh_s \Vert^2 > u_k, \forall s \in \mathcal{S}\right]  \right)
\end{flalign}
In the above chain of equalities, (a) follows from Jensen inequality. (b) is the equivalent quadric form of $\bh_i^{\dagger}\bR \bh_i$. (c) \blue{is since $R$ is a matrix involving only random variables in $\bH_{(-i)}$, that is, excluding $\bh_i$, hence the random variables in $\bR$ and $\bh_i$ are independent.}\red{the vector $\bh_i$ is independent in the column vectors of the matrix $\bR$.} (d) is since by \Cref{claim. entries var of above thr vec} part (iii), the elements of the vector $\bh_i$ are uncorrelated. (e) follows from \Cref{claim. entries var of above thr vec} part (ii).

 Now, to address $\text{tr}\left(\bR\right)$, conditioned on $\Vert \bh_s \Vert^2 > u_k, \forall s \in \mathcal{S}$,  we first use the following upper bound on the trace of the inverse of a matrix.
 For a symmetric $r \times r$ positive definite matrix  A \cite{bai1996bounds},
\begin{equation}
\text{tr}\left(A^{-1}\right) \leq \begin{pmatrix} \text{tr}(A) & r \end{pmatrix} \begin{pmatrix} \text{tr}(A^{\dagger}A) &  \text{tr}(A) \\ \lambda_{\min}^2(A) & \lambda_{\min}(A) \end{pmatrix}^{-1} \begin{pmatrix} r \\ 1 \end{pmatrix},
\label{eq: trace inequality}
\end{equation}
where $\lambda_{\min}(A)$ is the smallest eigenvalue of the matrix $A$.
In our case, take $A^{-1} = \bR$. To obtain $\lambda_{\min}(\bR^{-1})$, we recall that $\bR^{-1} = \left(\blue{P}\bH_{(-i)}\bH_{(-i)}^{\dagger} + I\right)$ and that $\bH_{(-i)}\bH_{(-i)}^{\dagger}$ is an $r \times r$ matrix with rank $j-1$, with $j \leq r$, thus, by the eigenvalues decomposition theorem, $\bH_{(-i)}\bH_{(-i)}^{\dagger}$ has $r-j+1$ of its eigenvalues equal to zero. Hence, the smallest eigenvalue of $\bR^{-1}$ is $\lambda_{\min}(\bR^{-1}) = 1$.
Thus, using (\ref{eq: trace inequality}), it follows that
\begin{eqnarray*}
\text{tr}(\bR) &\leq& \frac{r \left[\text{tr}\left(\bR^{-1}\right) + \text{tr}\left(\bR^{-\dagger}\bR^{-1}\right) \right] - \text{tr}^2\left(\bR^{-1}\right) - r^2}{\text{tr}\left(\bR^{-\dagger}\bR^{-1}\right) - \text{tr}\left(\bR^{-1}\right)}\\
&=& \frac{r\left[\blue{P^2}\text{tr}\left(\bH_{(-i)}\bH_{(-i)}^{\dagger} \bH_{(-i)}\bH_{(-i)}^{\dagger}\right) + \blue{P}\text{tr}\left(\bH_{(-i)}\bH_{(-i)}^{\dagger}\right) \right] - \blue{P^2}\text{tr}^2\left(\bH_{(-i)}\bH_{(-i)}^{\dagger}\right)}{\blue{P^2}\text{tr}\left(\bH_{(-i)}\bH_{(-i)}^{\dagger} \bH_{(-i)}\bH_{(-i)}^{\dagger}\right) + \blue{P}\text{tr}\left(\bH_{(-i)}\bH_{(-i)}^{\dagger}\right)}\\
&=& r - \frac{\text{tr}^2\left(\bH_{(-i)}^{\dagger}\bH_{(-i)}\right)}{\text{tr}\left(\bH_{(-i)}^{\dagger}\bH_{(-i)}\bH_{(-i)}^{\dagger} \bH_{(-i)}\right) + \blue{\frac{1}{P}}\text{tr}\left(\bH_{(-i)}^{\dagger}\bH_{(-i)}\right)}.
\label{eq: inverse trace bound}
\end{eqnarray*}
Note that the expression  in~(\ref{eq: mmse upper bound}) had a power loss of factor $r$, that is, $P$ is divided by $r$. However, it is multiplied by $E\left[\text{tr}(\bR) \Big| \Vert \bh_s \Vert^2 > u_k\right]$. In the last equality above we actually see that in this multiplication we gain back a factor of the form $(r- \zeta)$. This suggests that the MMSE detector is superior to the ZF detector in terms of power gain, as the expression in the ZF detector performance does not have this multiplicative factor (this in consistent with, e.g., \cite{tse2005fundamentals}). We will later evaluate the value of $\zeta$ explicitly.
Furthermore, note that $\text{tr}\left(\bR^{-\dagger}\bR^{-1}\right)$ is actually the squared Frobenius norm of $\bR^{-1}$, which is equal to the sum of squares of it entries.
   Thus, we have,
\begin{flalign*}
&\text{E}\left[\left. \text{tr}(\bR)\right| \Vert \bh_s \Vert^2 > u_k, \forall s \in \mathcal{S} \right] & \\
&\leq r - \text{E}\left[\left.\frac{\text{tr}^2\left(\bH_{(-i)}^{\dagger}\bH_{(-i)}\right)}{\text{tr}\left(\bH_{(-i)}^{\dagger}\bH_{(-i)}\bH_{(-i)}^{\dagger} \bH_{(-i)}\right) + \blue{\frac{1}{P}}\text{tr}\left(\bH_{(-i)}^{\dagger}\bH_{(-i)}\right)}\right| \Vert \bh_s \Vert^2 > u_k, \forall s \in \mathcal{S} \right] &\\
&= r - \text{E}\left[\left.\frac{\left(\sum_{\substack{n=1 \\ n \neq i}}^{j}\Vert\bh_n\Vert^2\right)^2}{\sum_{\substack{n=1 \\ n \neq i}}^{j}\left(\Vert \bh_n \Vert ^2\right)^2 + 2\sum_{\substack{n=1 \\ n \neq i}}^{j}\sum_{\substack{m=n+1 \\ m \neq i}}^{j}\vert \langle\bh_n,\bh_m \rangle \vert^2 +\blue{\frac{1}{P}}\sum_{\substack{n=1 \\ n \neq i}}^{j}\Vert\bh_n\Vert^2} \right| \Vert \bh_s \Vert^2 > u_k, \forall s \in \mathcal{S}\right]&\\
&=  r - \text{E}\left[\frac{\left(\sum_{\substack{n=1 \\ n \neq i}}^{j}\Vert\bh_n\Vert^2\right)^2}{\sum_{\substack{n=1 \\ n \neq i}}^{j}\left(\Vert \bh_n \Vert ^2\right)^2 + 2\sum_{\substack{n=1 \\ n \neq i}}^{j}\sum_{\substack{m=n+1 \\ m \neq i}}^{j} \frac{\vert \langle\bh_n,\bh_m \rangle \vert^2}{\Vert\bh_n\Vert^2 \Vert\bh_m\Vert^2}\Vert\bh_n\Vert^2 \Vert\bh_m\Vert^2 +\blue{\frac{1}{P}}\sum_{\substack{n=1 \\ n \neq i}}^{j}\Vert\bh_n\Vert^2} \right. &\\
&  \quad \hspace{2cm}\left.  \Bigg| \Vert \bh_s \Vert^2 > u_k, \forall s \in \mathcal{S}\right]&
\end{flalign*}
Note that the numerator inside the expectation is at least $(j-1)^2 u_k^2$. Thus,
\begin{flalign*}
&\text{E}\left[\left. \text{tr}(\bR)\right| \Vert \bh_s \Vert^2 > u_k, \forall s \in \mathcal{S} \right] & \\
&\leq r +  \text{E}\left[\frac{-(j-1)^2 u_k^2}{\sum_{\substack{n=1 \\ n \neq i}}^{j}\left(\Vert \bh_n \Vert ^2\right)^2 + 2\sum_{\substack{n=1 \\ n \neq i}}^{j}\sum_{\substack{m=n+1 \\ m \neq i}}^{j} \frac{\vert \langle\bh_n,\bh_m \rangle \vert^2}{\Vert\bh_n\Vert^2 \Vert\bh_m\Vert^2}\Vert\bh_n\Vert^2 \Vert\bh_m\Vert^2 +\blue{\frac{1}{P}}\sum_{\substack{n=1 \\ n \neq i}}^{j}\Vert\bh_n\Vert^2} \right. &\\
& \quad \hspace{2cm} \left. \Bigg| \Vert \bh_s \Vert^2 > u_k, \forall s \in \mathcal{S}\right]&
\end{flalign*}
Now, we have a concave function of the form $-1/x$ inside the expectation. Accordingly, using Jensen inequality for concave functions, we obtain that

\begin{flalign*}
&\text{E}\left[\left. \text{tr}(\bR)\right| \Vert \bh_s \Vert^2 > u_k, \forall s \in \mathcal{S} \right] & \\
&\leq r - (j-1)^2 u_k^2 \left(\text{E}\left[ \sum_{\substack{n=1 \\ n \neq i}}^{j}\left(\Vert \bh_n \Vert ^2\right)^2  + 2\sum_{\substack{n=1 \\ n \neq i}}^{j}\sum_{\substack{m=n+1 \\ m \neq i}}^{j} \frac{\vert \langle\bh_n,\bh_m \rangle \vert^2}{\Vert\bh_n\Vert^2 \Vert\bh_m\Vert^2}\Vert\bh_n\Vert^2 \Vert\bh_m\Vert^2 \right. \right. &\\
& \quad \hspace{3cm}  + \left. \left. \left. \blue{\frac{1}{P}}\sum_{\substack{n=1 \\ n \neq i}}^{j}\Vert\bh_n\Vert^2 \right| \Vert \bh_s \Vert^2 > u_k, \forall s \in \mathcal{S} \right] \right)^{-1}&\\
&= r - (j-1)^2 u_k^2 \left(\sum_{\substack{n=1 \\ n \neq i}}^{j}\text{E}\left[\left. \left(\Vert \bh_n \Vert ^2\right)^2 \right| \Vert \bh_n \Vert^2 > u_k \right] \right. &\\
& \quad \hspace{0.5cm} + 2\sum_{\substack{n=1 \\ n \neq i}}^{j}\sum_{\substack{m=n+1 \\ m \neq i}}^{j} \text{E}\left[\left. \frac{\vert \langle\bh_n,\bh_m \rangle \vert^2}{\Vert\bh_n\Vert^2 \Vert\bh_m\Vert^2} \right| \Vert \bh_s \Vert^2 > u_k, \forall s \in \mathcal{S} \right] &\\
&\quad \hspace{0.5cm} \cdot \text{E}\left[\left.\Vert\bh_n\Vert^2 \right| \Vert \bh_n \Vert^2 > u_k\right] \text{E}\left[\left.  \Vert\bh_m\Vert^2 \right|\Vert \bh_m \Vert^2 > u_k\right] +  \left. \blue{\frac{1}{P}}\sum_{\substack{n=1 \\ n \neq i}}^{j}\text{E}\left[\left. \Vert\bh_n\Vert^2 \right| \Vert \bh_n \Vert^2 > u_k \right] \right)^{-1}&\\
\end{flalign*}
where the last equality follows from the linearity of the expectation and  since  $\frac{\langle\bh_n,\bh_m \rangle^2}{\Vert\bh_n\Vert^2 \Vert\bh_m\Vert^2}$, which is the angle between two independent vectors, $\bh_m$ and $\bh_n$, is independent of the vectors' norms. Moreover, we can now remove the conditioning on the norm when taking the expectation of the angle.
Accordingly, an upper bound can be evaluated by computing the expectations in the denominator as follows.
\begin{flalign*}
 &r - (j-1)^2 u_k^2 \left(\sum_{\substack{n=1 \\ n \neq i}}^{j}\text{E}\left[\left. \left(\Vert \bh_n \Vert ^2\right)^2 \right| \Vert \bh_n \Vert^2 > u_k \right] \right. &\\
& \quad \hspace{0.5cm} + 2\sum_{\substack{n=1 \\ n \neq i}}^{j}\sum_{\substack{m=n+1 \\ m \neq i}}^{j} \text{E}\left[ \frac{\vert \langle\bh_n,\bh_m \rangle \vert^2}{\Vert\bh_n\Vert^2 \Vert\bh_m\Vert^2}\right]\text{E}\left[\left.\Vert\bh_n\Vert^2 \right| \Vert \bh_n \Vert^2 > u_k\right] \text{E}\left[\left.  \Vert\bh_m\Vert^2 \right|\Vert \bh_m \Vert^2 > u_k\right] &\\
& \quad \hspace{3cm} +  \left. \blue{\frac{1}{P}} \sum_{\substack{n=1 \\ n \neq i}}^{j}\text{E}\left[\left. \Vert\bh_n\Vert^2 \right| \Vert \bh_n \Vert^2 > u_k \right] \right)^{-1}&\\
& \stackrel{(a)}{=}  r - (j-1)^2 u_k^2 &\\
& \quad \hspace{0.5cm} \cdot \left((j-1)(u_k^2 +2a_{\{K,r\}}u_k + 2a_{\{K,r\}}^2) + j(j-1) \frac{1}{r}(u_k + a_{\{K,r\}})^2 +\frac{(j-1)}{\blue{P}}(u_k +a_{\{K,r\}}) \right)^{-1}&\\
& =  r - \frac{(j-1) u_k^2}{\left(1+\frac{j}{r}\right)(u_k + a_{\{K,r\}})^2 +a_{\{K,r\}}^2 + (a_{\{K,r\}}+u_k)/\blue{P}},
\end{flalign*}
where (a) follows since the limit distribution of the channel norm tail is exponentially distributed with rate parameter $1/a_{\{K,r\}}$ given that it is above high threshold. Namely, $ \Vert \bh_n \Vert ^2 \Big| \Vert \bh_n \Vert ^2 >u_k \sim Exp(1/a_{\{K,r\}})$, \cite{leadbetter1983}.  Thus, $\text{E}\left[\left. \left(\Vert \bh_n \Vert ^2\right)^2 \right| \Vert \bh \Vert^2 > u_k \right]$ can be interpreted as a second moment of exponential random variable. Similar to \Cref{sec. dist alg}, by \cite[Lemma 3.2]{jagannathan2006efficient}, the angle has the same distribution as that of the minimum of $r-1$ independent uniform $[0,1]$ random variables (i.e., with CDF $1-(1-\alpha)^{r-1}$, $0 \leq \alpha \leq 1$). The  expectations of the norms, again, follow from the EVT. Details are in \Cref{coro: expected capacity above est by GEV thr}, \Cref{sec: tail dist}.
\end{proof}

\subsection{Threshold-based MMSE Lower Bound}
To completely characterize the performance of the suggested {\scshape TB-Channel-Access} algorithm under MMSE decoding, we proceed to derive a corresponding lower bound. To this end, the following lemmas will be useful. The proofs are deferred to \Cref{proof:R_unitarily_invariant}.

\begin{lemma}\label{claim. R is unitarily invariant}
Let $\bH$ be a complex Gaussian matrix with i.i.d.\ entries. Then,
\begin{itemize}
\item[(i)] $\bR^{-1} = \left(\blue{P}\bH \bH^{\dagger} + I \right)$ is unitary invariant.
In particular, $\bR$ can be decomposed as $\bU \pmb{\Lambda} \bU^{\dagger}$, where $\bU$ is a unitary matrix, independent of the diagonal matrix $\pmb{\Lambda}$.
\item[(ii)] The property above holds conditioned on all norms of the columns $\bh_i$ of $\bH$ being above a threshold $u$. In particular, conditioned on $\Vert \bh_i \Vert^2 > u$, $U \bh_i$ and $\bh_i$ have the same distribution.
\end{itemize}
\end{lemma}
Let $\Gamma(s,x) = \int_{x}^{\infty} \tau^{s-1}\e^{-\tau} \mathrm{d}\tau$ denote the upper incomplete Gamma function.
\begin{claim}[E.g., \cite{GammaBound}]\label{claim. lower bound on upper gamma s geq 2}
For $s \geq 2$, we have $\Gamma(s,x) \geq \e^{-x}(1+x)^{s-1}$.
\end{claim}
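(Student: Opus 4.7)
The plan is to reduce the inequality to an instance of Jensen's inequality applied to an exponential random variable. Starting from the definition
\[
\Gamma(s,x) = \int_x^\infty t^{s-1} e^{-t}\, dt,
\]
I would first perform the substitution $t = x + u$, which gives
\[
\Gamma(s,x) = e^{-x} \int_0^\infty (x+u)^{s-1} e^{-u}\, du = e^{-x}\,\mathrm{E}\!\left[(x+U)^{s-1}\right],
\]
where $U \sim \mathrm{Exp}(1)$. So the claim is equivalent to showing $\mathrm{E}[(x+U)^{s-1}] \ge (1+x)^{s-1}$.

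Next I would invoke convexity. For $s \ge 2$ the function $\varphi(u) = (x+u)^{s-1}$ is convex in $u \ge 0$, since its second derivative $(s-1)(s-2)(x+u)^{s-3}$ is nonnegative (this is precisely where the hypothesis $s \ge 2$ is used). Applying Jensen's inequality with respect to the law of $U$, and using $\mathrm{E}[U] = 1$, yields
\[
\mathrm{E}\!\left[(x+U)^{s-1}\right] \;\ge\; \bigl(x + \mathrm{E}[U]\bigr)^{s-1} = (1+x)^{s-1}.
\]
Multiplying through by $e^{-x}$ gives the claimed bound $\Gamma(s,x) \ge e^{-x}(1+x)^{s-1}$.

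There is essentially no obstacle here; the only subtlety is matching the convexity hypothesis to the exponent $s-1 \ge 1$, which is exactly the range $s \ge 2$ stated in the claim. As a sanity check, at $s = 2$ one obtains $\Gamma(2,x) = (1+x)e^{-x}$ with equality, which matches the bound and confirms that the constant is sharp at the boundary of the stated regime. (An alternative route via integration by parts plus induction on integer $s$ is available, but the Jensen argument handles all real $s \ge 2$ uniformly and exposes the role of the hypothesis.)
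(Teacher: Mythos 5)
Your argument is correct and is essentially identical to the paper's own proof: both rewrite $\Gamma(s,x)=e^{-x}\,\mathrm{E}[(x+U)^{s-1}]$ for $U\sim\mathrm{Exp}(1)$ and apply Jensen's inequality to the convex map $u\mapsto (x+u)^{s-1}$ using $\mathrm{E}[U]=1$. Your added remarks on where $s\ge 2$ enters and on equality at $s=2$ are accurate but do not change the approach.
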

\begin{lemma}\label{claim. lower bound on expectation}
Let $\bx \sim \chi^2_{2(r-j+1)}$ and let $\by \sim \chi^2_{2(j-1)}$, independent of $\bx$, for some integer $1<j\leq r$. Then,
\begin{multline*}
 \E\brkt{\left.\log(1+\bx)\right|\bx +\by > u}
 >  \left(\frac{\Gamma(r) \Gamma(r-j+1,u/2)}{\Gamma(r,u/2)\Gamma(r-j+1)} + \frac{\e^{-u/2}u^{r-1}}{2^{r-1}\Gamma(r,u/2)}\right)\log(u) \\
    \quad \quad  + \frac{\e^{-u/2}u^{r-1}}{2^{r-1}\Gamma(r,u/2)} \left( \psi(r-j+1) - \psi(r) \right) + e \frac{\Gamma(r) \Gamma(r-j,1+u/2)}{\Gamma(r,u/2)\Gamma(r-j+1)},
\end{multline*}
where $\psi(x) = \frac{d}{dx}\ln \Gamma(x)$ is the Digamma function.
\end{lemma}
To understand \Cref{claim. lower bound on expectation}, note that $\Gamma(s,x)=(s-1)\Gamma(s-1,x)+x^{s-1}\e^{-x}$. Hence, it is not hard to show that
$$\lim_{u \to \infty} \frac{\Gamma(r) \Gamma(r-j+1,u/2)}{\Gamma(r,u/2)\Gamma(r-j+1)} = 1.$$
Moreover, using  \Cref{claim. lower bound on upper gamma s geq 2} above, one can also show that
$$0\leq \frac{\e^{-u/2}u^{r-1}}{2^{r-1}\Gamma(r,u/2)} \leq 1.$$
Hence, the importance of \Cref{claim. lower bound on expectation} is in showing that $E[\log(1+\bx)|\bx +\by > u_k]=\Theta(\log u_k)$. Similar to the upper bound, since $u_k = \Theta(\log K)$, the scaling law of $\Theta(\log\log K)$ will result.

To conclude, a lower bound on the performance of the {\scshape TB-Channel-Access} algorithm with MMSE receiver is as follows. For simplicity of the presentation, we assume $P=1$.
\begin{lemma}\label{lem. C mmse lower}
For sufficiently large $K$, the \RED{expected achievable} \BLUE{ergodic} sum-rate under the {\scshape TB-Channel-Access} algorithm and MMSE detection satisfies the following lower bound.
\begin{multline*}
	\mathcal{R}(u_k)
	 > \sum_{j=2}^{r}\frac{k^{j}\e^{-k}}{j!} j \left[\left(\frac{\Gamma(r) \Gamma(r-j+1,u_k/2)}{\Gamma(r,u_k/2)\Gamma(r-j+1)} + \e^{-u_k/2}\frac{u_k^{r-1}}{2^{r-1}\Gamma(r,u_k/2)}\right)\log(u_k) \right. \\
    \quad \quad \left. + \e^{-u_k/2}\frac{u_k^{r-1}}{2^{r-1}\Gamma(r,u_k/2)} \left( \psi(r-j+1) - \psi(r) \right) + e \frac{\Gamma(r) \Gamma(r-j,1+u_k/2)}{\Gamma(r,u_k/2)\Gamma(r-j+1)} \right].
\end{multline*}
\end{lemma}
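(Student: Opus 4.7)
The plan is to start from the exact expression for $\text{E}[C(u_k)]$ in Proposition~\ref{prop. C mmse} and simply discard the $j=1$ summand (it is nonnegative), leaving a lower bound over $j=2,\ldots,r$. For each such $j$, the key observation is that $\bR=(\bH_{(-i)}\bH_{(-i)}^{\dagger}+I)^{-1}$ depends only on $\bH_{(-i)}$ and hence is independent of $\bh_i$; by Claim~\ref{claim. R is unitarily invariant}(i) we may write $\bR=\bU\pmb{\Lambda}\bU^{\dagger}$ with $\bU$ unitary and independent of the diagonal $\pmb{\Lambda}$. Since $\bH_{(-i)}\bH_{(-i)}^{\dagger}$ is $r\times r$ of rank $j-1$, exactly $r-j+1$ of its eigenvalues vanish, so $\pmb{\Lambda}$ contains $r-j+1$ entries equal to $1$ and the remaining $j-1$ entries lie in $(0,1]$.

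Next I would apply Claim~\ref{claim. R is unitarily invariant}(ii): because the conditioning $\Vert \bh_i\Vert^2>u_k$ is unitarily invariant and $\bU$ is independent of $\bh_i$, the vector $\bz:=\bU^{\dagger}\bh_i$ has the same conditional distribution as $\bh_i$ given that its norm exceeds $u_k$, while the other conditions $\Vert \bh_s\Vert^2>u_k$ for $s\neq i$ only shape the laws of $\bU$ and $\pmb{\Lambda}$ and leave the conditional law of $\bz$ untouched. Expressing the quadratic form in the new coordinates and retaining only the $r-j+1$ unit eigenvalues yields the pointwise lower bound
$$\bh_i^{\dagger}\bR\bh_i \;\ge\; \bx \;:=\; \sum_{m=1}^{r-j+1}|z_m|^2 \sim \chi^2_{2(r-j+1)},$$
while the complementary partial sum $\by:=\sum_{m=r-j+2}^{r}|z_m|^2\sim\chi^2_{2(j-1)}$ is independent of $\bx$ and obeys $\bx+\by=\Vert \bh_i\Vert^2>u_k$.

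By monotonicity of $\log(1+\cdot)$, this transforms each inner expectation in Proposition~\ref{prop. C mmse} into
$$\text{E}\!\left[\log(1+\bh_i^{\dagger}\bR\bh_i)\,\big|\,\Vert\bh_s\Vert^2>u_k,\,\forall s\in\mathcal{S}\right] \;\ge\; \text{E}\!\left[\log(1+\bx)\,\big|\,\bx+\by>u_k\right],$$
at which point Claim~\ref{claim. lower bound on expectation} delivers precisely the bracketed expression of the lemma. Summing over $i=1,\ldots,j$ contributes the factor $j$ by symmetry, and restoring the Poisson weights $k^{j}e^{-k}/j!$ and summing over $j\ge 2$ produces the claimed bound.

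The main obstacle I anticipate lies in the second paragraph: one must be careful that the joint conditioning on \emph{all} selected channel norms decouples correctly after the unitary change of coordinates. Concretely, the argument requires (a) that the columns of $\bH$ are independent so that each column's norm conditioning does not affect the others, and (b) that $\bU$, as a measurable function of $\bH_{(-i)}$, is independent of $\bh_i$ so that $\bU^{\dagger}\bh_i$ inherits the same conditional Gaussian law as $\bh_i$. Once this independence is in hand, the remainder is algebraic and reduces to the chi-squared estimate already packaged in Claim~\ref{claim. lower bound on expectation}.
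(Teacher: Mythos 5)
Your proposal is correct and follows essentially the same route as the paper's proof: spectral decomposition of $\bR$ via the unitary-invariance claim, identification of the $r-j+1$ unit eigenvalues of $\bR^{-1}$, dropping the remaining terms to reduce the quadratic form to a $\chi^2_{2(r-j+1)}$ variable conditioned on $\bx+\by>u_k$, and then invoking Claim~\ref{claim. lower bound on expectation}. The decoupling subtlety you flag is exactly the content of Claim~\ref{claim. R is unitarily invariant}(ii), which the paper uses in the same way.
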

\begin{proof}
We assume $1<j\leq r$ users begin their transmission simultaneously and let $\mathcal{S}$ denote the set of channels with norm greater than threshold. Note that when only one user passed the threshold, while its rate is easy to compute, as it is the single-user MISO capacity, within this bound it is negligible as it is multiplied by the probability (under the Poisson distribution) that only one user passed the threshold. Hence, it is omitted for simplicity.

In this case, the rate of stream $i$ under MMSE receiver satisfies the following lower bound.
\begin{flalign*}
 \mathcal{R}_i (u_k) &= \text{E}\left[\left.\log\left(1+ P \bh_i^{\dagger}\bR \bh_i \right)\right| \Vert \bh_s \Vert^2 > u_k, \forall s \in \mathcal{S} \right]&
\\
&\stackrel{(a)}{=} \text{E}\left[\left.\log\left(1+ P \sum_{n=1}^r \lambda_n(\bR) \left\vert(U^{\dagger}\bh_i)_n\right\vert^2  \right)\right| \Vert \bh_s \Vert^2 > u_k, \forall s \in \mathcal{S} \right] &
\\
 &\stackrel{(b)}{=}  \text{E}\left[\left.\log\left(1+ P  \sum_{n=1}^r \frac{\left\vert(U^{\dagger}\bh_i)_n\right\vert^2}{\lambda_n\left(\bR^{-1}\right)}\right)\right| \Vert \bh_s \Vert^2 > u_k, \forall s \in \mathcal{S} \right] &
\\
&\stackrel{(c)}{=}  \text{E}\left[\left.\log\left(1+ P  \left(\sum_{n=1}^{r-j+1} \left\vert\left(U^{\dagger}\bh_i\right)_n\right\vert^2+ \sum_{n=r-j+2}^{r}\frac{\left\vert(U^{\dagger}\bh_i)_n\right\vert^2}{\lambda_n\left(\bR^{-1}\right)}\right)\right)\right| \Vert \bh_s \Vert^2 > u_k, \forall s \in \mathcal{S} \right] &
\\
 &\stackrel{(d)}{>}  \text{E}\left[\left.\log\left(1+ P \sum_{n=1}^{r-j+1} \left\vert\left(U^{\dagger}\bh_i\right)_n\right\vert^2 \right)\right| \Vert \bh_s \Vert^2 > u_k, \forall s \in \mathcal{S} \right] &
\\
 &\stackrel{(e)}{=}  \text{E}\left[\left.\log\left(1+ P \sum_{n=1}^{r-j+1} \left\vert\left(\bh_i\right)_n\right\vert^2 \right)\right| \Vert \bh_s \Vert^2 > u_k, \forall s \in \mathcal{S} \right] &
\\
 &=  \text{E}\left[\left.\log\left(1+ P \sum_{n=1}^{r-j+1} \left\vert\left(\bh_i\right)_n\right\vert^2 \right)\right| \Vert \bh_i \Vert^2 > u_k \right] &
\end{flalign*}
In the above chain of inequalities, (a) is since by \Cref{claim. R is unitarily invariant}, there is a unitary matrix $\bU$ such that $\bR = \bU \pmb{\Lambda} \bU^{\dagger}$ with $\pmb\Lambda = diag(\pmb\lambda_1,..,\pmb\lambda_r)$. As a result, we have the following quadric form:
$$\bh^{\dagger}\bR \bh = \bh^{\dagger}U \Lambda U^{\dagger} \bh = \sum_{n=1}^r \lambda_n(\bR)\left\vert\left(U^{\dagger}\bh\right)_n\right\vert^2 .$$
  (b) is since the eigenvalues of $\bR$ are the reciprocals of those of $\bR^{-1}$. (c) and (d) follow since, first, the dimensions of $\bH_{(-i)}\bH_{(-i)}^{\dagger}$ is $r \times r$ and its rank is $j-1$. Thus, $r-j+1$ of the eigenvalues of $\bR^{-1}$ (corresponding to the zero eigenvalues of $\bH_{(-i)}\bH_{(-i)}^{\dagger}$) are equal to one. Then, as $\bH_{(-i)}\bH_{(-i)}^{\dagger}$ is positive semidefinite, the non-zero eigenvalues are non negative. In fact, the eigenvalues of $\bH_{(-i)}\bH_{(-i)}^{\dagger}+I$ which are not unity, are relatively large as all columns of $\bH$ are above a threshold. (e) is by \Cref{claim. R is unitarily invariant} (ii).

Now, setting $\bx = \sum_{n=1}^{r-j+1} \left\vert\left(\bh_i\right)_n\right\vert^2$ and $\bx + \by = \Vert \bh_i \Vert^2$, \Cref{claim. lower bound on expectation} completes the proof.
\end{proof}

In Figure~\ref{fig: capacity scaling} we simulate the achievable sum-rate of the Channel-Access algorithm versus the number of users $K$, under the MMSE and the ZF receivers. The simulation results are compared to the upper and lower bounds. As can be seen, the MMSE receiver performance is indeed superior to the ZF receiver. Nevertheless, we see that both receivers attain the optimal scaling law. Interestingly, even though the bounds mostly apply to a large number of users, besides when the number of users is extremely small ($K=10$ users) all results fall within the bounds, and even for this small number of users case, the simulation results only slightly exceed the upper-bound and only for the ZF receiver.

In Figure~\ref{fig: mmse_sim} we simulate the  expected sum-rate of the {\scshape TB-Channel-Access} scheme, with MMSE-receiver and compare the simulation results to the upper and lower sum-rate bounds given in \Cref{lem. C mmse upper} and \Cref{lem. C mmse lower}, respectively.

In Figure~\ref{fig: ZFlower_vs_MMSElower} we compare  the MMSE receiver lower bound (dot-dashed line) and the ZF receiver lower bound (dashed line). As expected the MMSE achieves better performance when the threshold is lower, which corresponds to a lower SNR, and further, to a lower idle slot probability. On the other hand, as the threshold gets higher, which translates to high SNR, the ZF yields same results as MMSE. From this observation, since the MMSE receiver requires lower threshold to achieve higher sum-rate, we comprehend that the MMSE receiver is preferable to the ZF receiver, both in terms of sum-rate and idle slot probability, yet, still has a linear decoding complexity.

\begin{figure}
	\centering
		\includegraphics[scale=0.35,angle=-90]{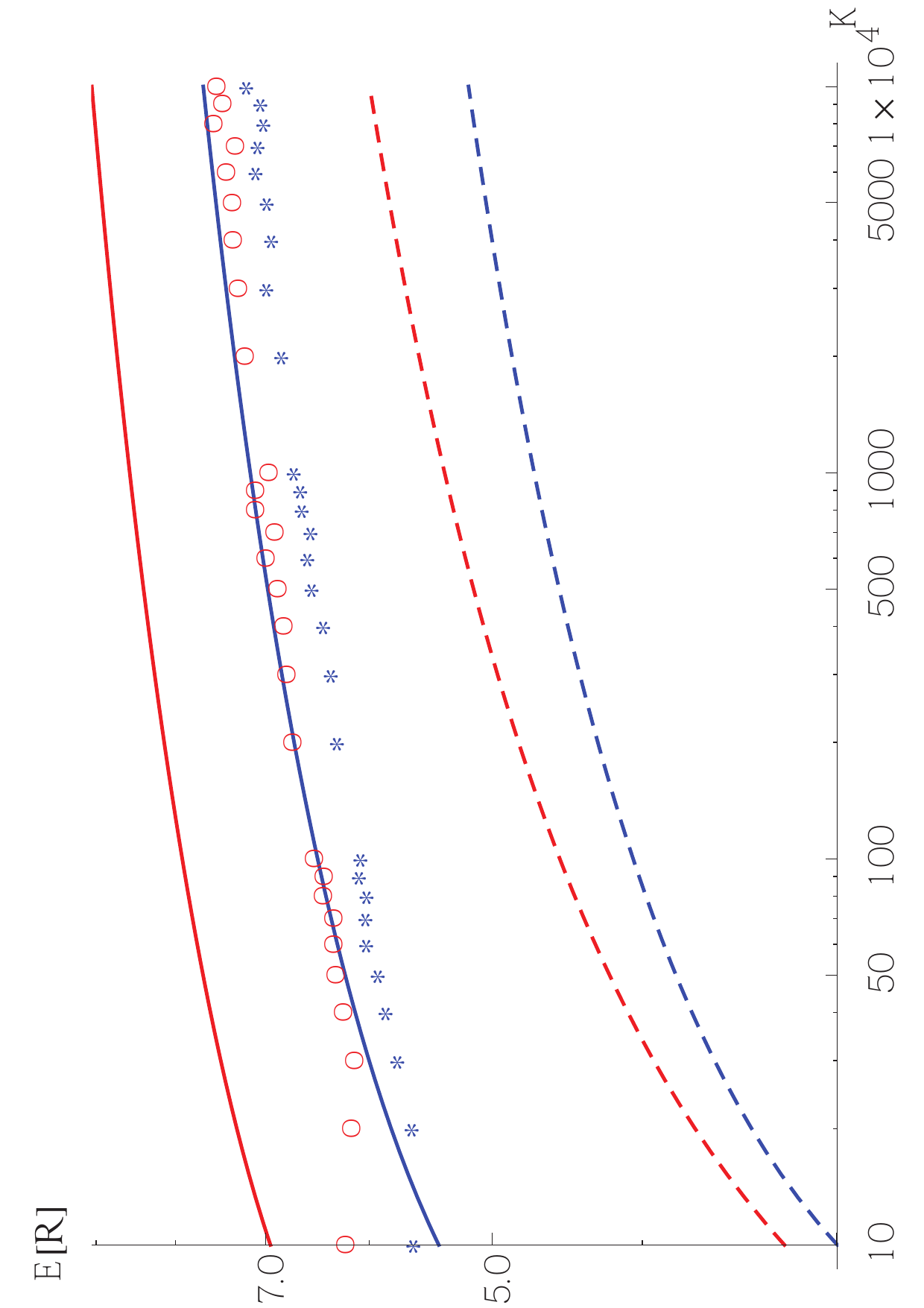}
	\caption{ZF and MMSE scaling-laws. The solid (dashed) red and blue lines represent the upper (lower) bound of the MMSE and ZF achievable rates, respectively, for different values of $K$, with $r=4$ antennas. The red circles and blue stars are based on simulation results, and depict the average sum-rate of the MMSE and ZF, respectively, upon $1500$ experiments. }
	\label{fig: capacity scaling}
\end{figure}

\begin{figure}
	\centering
		\includegraphics[scale=0.4,angle=-90]{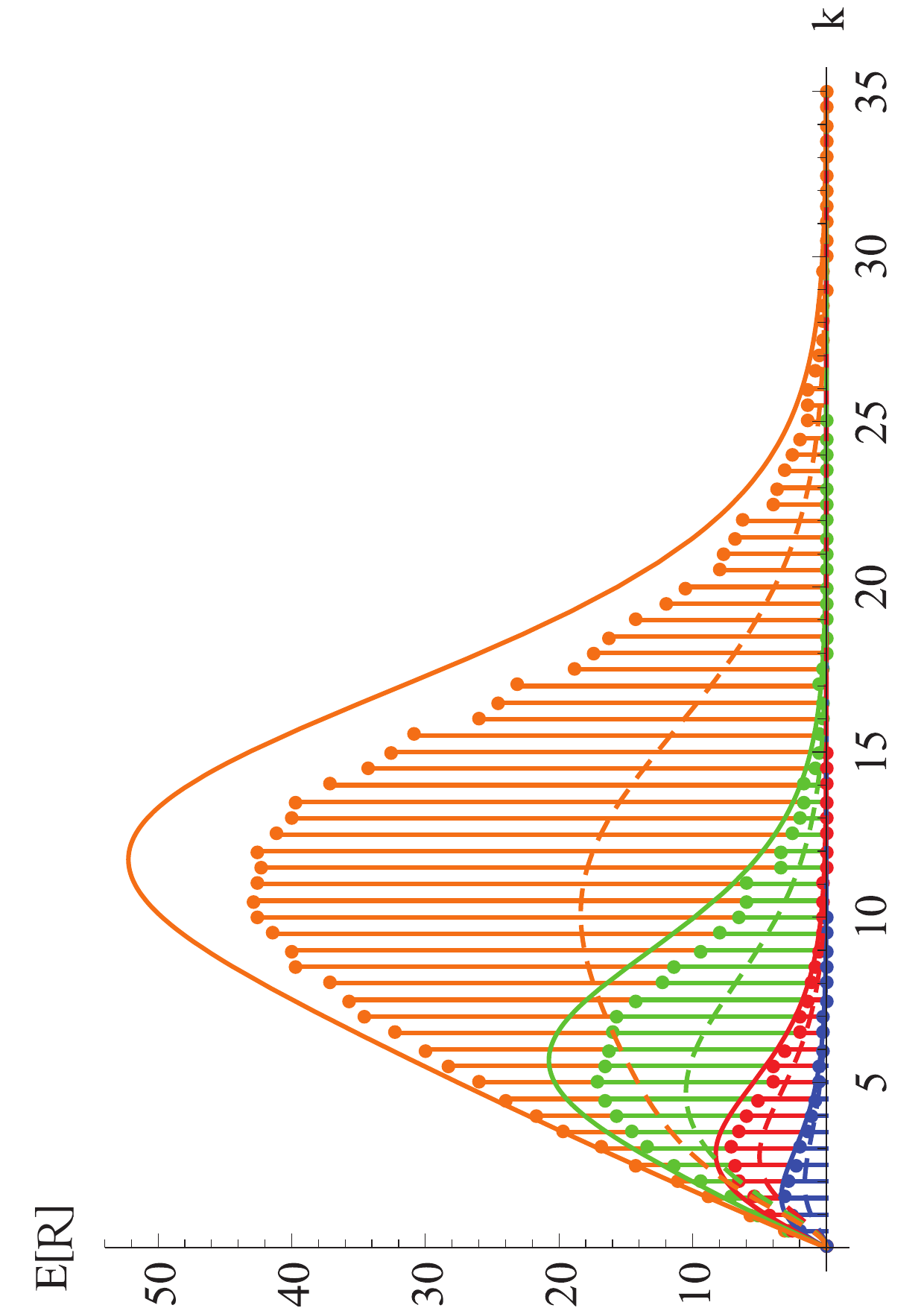}
	\caption{MMSE bounds. Bars are simulation results, while the solid and dashed lines represent upper and lower bound  results respectively. Expected sum-rate under a single threshold algorithm and $K=300$. The threshold $u_k$ is set such that $k$ users exceed it on average. Bars are the sum-rate under the {\scshape TB-Channel-Access} scheme simulation results under MMSE receiver, while the solid and dashed lines represent the upper and lower bounds, respectively. The orange (upper), green (2nd from above), red (3rd from above) and blue (lower) lobes are for $r=16,8,4$ and $2$ receive antennas, respectively. Note that the optimal $k$ is smaller than $r$. }
	\label{fig: mmse_sim}
\end{figure}

\begin{figure}%
	\centering
		\includegraphics[scale=0.35,angle=-90]{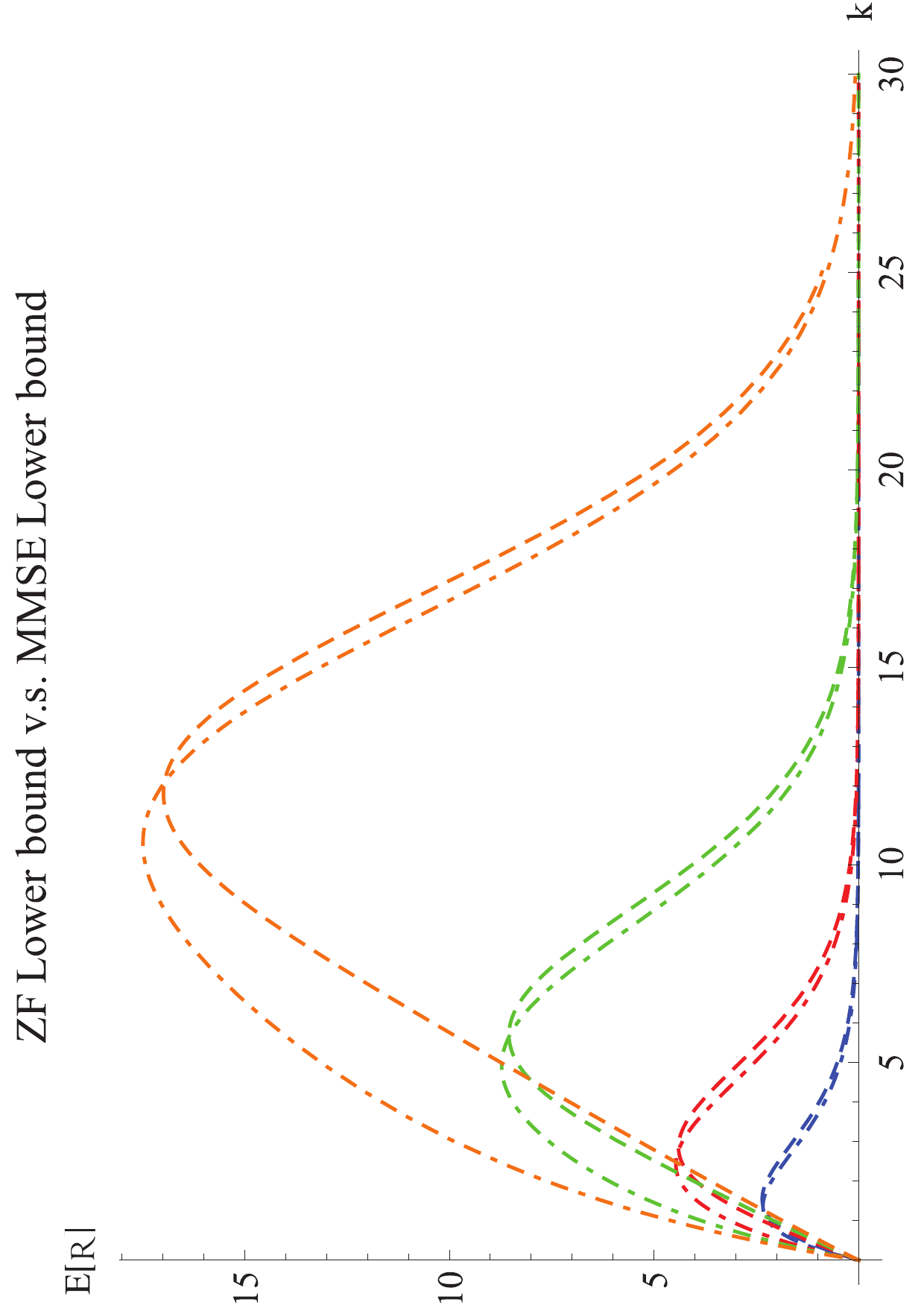}
	\caption{A comparison between the lower bound under an  MMSE receiver (dot-dashed) and the lower bound under the ZF (dashed) receiver. The orange (upper), green (2nd from above), red (3rd from above) and blue (lower) lobes are for $r=16,8,4$ and $2$ receive antennas, respectively. Note that since one should choose $k$ for best results, the MMSE lower bound predicts better results, achieved for a smaller $k$ per $r$ (the number of receive antennas). }
	\label{fig: ZFlower_vs_MMSElower}
\end{figure}
\off{
We have the following claim.
\begin{claim}\label{claim. mmse t greater than r}
For general number of transmitting users $t$, the expected sum-rate of a user is bounded by
\begin{eqnarray*}
	\text{E}[C(u_k)] &\leq& \log\left( 1+ \frac{(u_k+a_{\{K,r\}})}{2 r^2}(j-1)\left[\sqrt{(1 + r a)(1 + r b)} - r\sqrt{a b} - 1\right] \right)
\end{eqnarray*}
with $a=(1-\sqrt{y})^2$ and $b=(1+\sqrt{y})^2$.
\end{claim}
}

\section{Conclusion}\label{sec. conc}
In this paper, we analysed a distributed multiuser scheduling algorithm which utilizes the multiuser diversity and achieves the optimal \RED{capacity} \BLUE{sum-rate} scaling laws (for large number of users).
Specifically, we examined threshold-based algorithm, and characterized the scaling law of the expected sum-rate under linear detection, e.g., ZF and MMSE. To support the results, we provided both tractable analysis which gave insightful results as well as simulations which showed tightness even for moderate number of users.
We concluded that the distributed algorithm achieves the optimal scaling laws, i.e., $\Theta(r \log \log K)$.
\appendix
\section{Appendix}\label{sec:proofs}
The methods discussed in the previous sections are based on a common baseline procedure. First, given that the norm threshold was exceeded, we wish to get a handle on the \RED{capacity} \BLUE{rate} distribution. Then, we wish to express the rate at which users exceed the threshold, in order to understand how likely is that a given number of users exceed a certain threshold. Finally, a threshold is estimated. This threshold is set according to the fraction of users which are required to exceed it on average. In this paper, the threshold is set on the channel vector norm, either directly or after projecting it on the null-space of the already chosen users.
Note that we used a similar approach to analyse the expected rate, when the channel capacity is approximately Gaussian in~\cite{kampeas2012opportunistic}.
 In this section, we discuss the following problems directly for the problem at hand. 
\subsection{Derivation of the Normalizing Constants $a_{\{K,r\}}$ and  $b_{\{K,r\}}$}\label{proof:normalizing_constants}
The $\chi^2$ distribution is a special case of the gamma distribution. I.e., if $\bz \sim \chi^2_{2r}$ then $\bz \sim \Gamma(r ,\beta = 2)$, where $\Gamma(r ,\beta = 2)$ is the Gamma distribution with shape parameter $r$ and rate parameter $\beta$.
 According to the EVT, $b_K$ is the $1-1/K$ quantile, i.e., $1-F_{\chi^2}(b_n)=1/K$, and the corresponding $a_K$ is equals to $h(b_K)$, where $h(z)$ is the reciprocal hazard function \cite{de2006extreme,leadbetter1983}
\begin{equation*}\label{eqn: h(x)}
  h(z)=\frac{1-F(z)}{f(z)} \textmd{ for } z_F \leq z \leq z^F,
\end{equation*}
where \mbox{$z_F = \inf \{z:F(z)>0\}$ and $z^F = \sup \{z: F(z)< 1\}$} are the lower and upper endpoints of the ancestor distribution, respectively.
Accordingly, for the $b_{\{K,r\}}$ constant we consider the $1-1/K$ quantile of the  Gamma distribution, which can be obtained by using the inverse of the regularized upper incomplete gamma function. In particular, $b_{\{K,r\}} = \beta Q^{-1}\left(r,\frac{1}{K}\right)$ yields the $1-1/K$ quantile of the  Gamma distribution. To attain the $a_{\{K,r\}}$ constant, let us examine the hazard function $h(z)$ of the Gamma distribution.
\begin{eqnarray*}
h(z/ \beta) &=& \frac{1-F_{\Gamma}(z/\beta)}{f_{\Gamma}(z/\beta)}\\
     &=&  \beta \e^{z/\beta } (z/\beta)^{1-r}\Gamma(r)(1-F_{\Gamma}(z/\beta)).
\end{eqnarray*}
Accordingly,  for $z = b_{\{K,r\}}$ we obtain,
\begin{eqnarray*}
a_{\{K,r\}} &=& h\left(b_{\{K,r\}}/\beta\right)
\nonumber
\\
 &=& \frac{\beta}{K} \Gamma(r) \exp\left\{Q^{-1}\left(r,\frac{1}{K}\right)\right\} Q^{-1}\left(r,\frac{1}{K}\right)^{1-r}.
\end{eqnarray*}
\subsection{Threshold Arrival Rate and  Channel Gain Tail Distribution}\label{sec: tail dist}
Once a threshold is set, it is important to evaluate both the distribution of the number of users which exceed it as well as the conditional distribution of the norm given that the threshold was exceeded. Herein, we utilize the Point Process Approximation \cite{smith1989extreme} and its specific usage for threshold arrival rates in the \emph{single user case} \cite{kampeas2012opportunistic} in order to derive these distributions for the problem at hand.

Assume that $\bz_1,...,\bz_K$ is a sequence of i.i.d.\ random variables with a distribution function $F(z)$, such that $F(z)$ is in the domain of attraction of some GEV distribution G, with normalizing constants $a_K$ and $b_K$.
We construct a sequence of points $P_1,P_2,...$ on $[0,1] \times \R$ by
$$ P_K = \left\{ \left(\frac{i}{K},\frac{\bz_i - b_K}{a_K}\right) , i=1,2,...,K \right\},$$
and examine the limit process, as $K \rightarrow \infty$. Consider $P_K$ on the set $[0,1] \times (b_l + \epsilon , \infty)$, where \blue{$b_l$ is some floor value and} $\epsilon > 0$. According to Kallenberg's theorem \cite{kallenberg1983random,smith1989extreme}, $\lim_{K \to \infty} P_K \to P$, where $P$ is a non-homogeneous Poisson process with intensity density
$ \lambda(t,z) = (1 + \xi z)_{+}^{-\frac{1}{\xi} - 1}$, \blue{at sample value $z$,} \red{is the sample value,} and $t$ is the index of occurrence. In fact, in the i.i.d.\ case, the process intensity density $\lambda(t,z)$ is independent of $t$.

Let $\Lambda(B)$ be the expected number of points in the set $B$. $\Lambda(B)$  can be obtained by integrating the intensity of the Poisson process over $B$, That is $\Lambda(B) = \int_{b \in B}{\lambda(b)db}$. As we are interested in sets of the form $B_v = [0,1] \times (v,\infty)$, where $v>b_l$ (threshold exceedance), we have $\Lambda(B_v) = (1 + \xi v)_{+}^{-1/\xi}$, where $a_{+}$ denotes $\max \{0,a\}$ (e.g., \cite{kampeas2012opportunistic}). That is, the number of users whose channel norms exceed a threshold can be modeled by a Poisson process, with parameter $\Lambda\left(B_v\right)$.
Note that for $\xi \to 0$, as in the case of the $\chi^2$ distribution, we have $\Lambda\left(B_v\right) \to \e^{-v}$, hence setting a threshold $\log K -\log k$ results in a per-user arrival rate of $\frac{k}{K}$, and, as a result, a total arrival rate of $k$.

To compute the conditional distribution \blue{of the channel norm, given that it is above the threshold, at the limit of large $K$}, we proceed similar to ~\cite{kampeas2012opportunistic}. \blue{Specifically, focusing on points of the process $P_K$ that are above a}\red{For any} fixed \blue{threshold} $v > b_l$, let $u_K(v) = a_K v + b_K$, and let $z>0$. Then,
\begin{align}\label{eq. ppa convergence}
\blue{\lim_{K \to \infty}}&\Pr\left(\bz_i > a_K z+ u_K(v)| \bz_i  > u_K (v)\right)&\\
&\blue{= \lim_{K \to \infty} \Pr\left(P_K\prnt{\frac{i}{K}} > z+ v| P_K\prnt{\frac{i}{K}}> v\right)}&\nonumber\\
&= \Pr(P(t) > z + v|P(t) > v)
\nonumber\\
&= \frac{\Lambda(B_{z+v})}{\Lambda(B_v)}
\nonumber\\
&= \left[\left(1+\xi \frac{z}{1 + \xi v }\right)_{+}\right]^{-1/\xi},\nonumber
\end{align}
where  $P_K\prnt{\frac{i}{K}}$ \blue{denotes the process value at time $\frac{i}{K}$, that is,} \red{and $P(t)$ are the corresponding excess value} $P\prnt{\frac{i}{K},\frac{\bz_i - b_K}{a_K}}$ at index $\frac{i}{K}$ \blue{for any finite $K$, and $P(t)$ is the corresponding value for the limit process, with $t = \lim_{K\to \infty} \frac{i}{K}$}.  \red{Namely, the result is} \blue{The result is exactly} a generalized Pareto distribution, $GPD(\sigma_v,\xi)$, where $\sigma_v = 1 + \xi v$. \blue{In other words, at the limit of large $K$, the conditional distribution of the norm, given that it is above a high enough threshold, converges to a GPD. However, this is only convergence in distribution. In general, it does not directly result in \emph{convergence of the mean}. Yet, by the claim below, the distributions at hand converge monotonically to the limit, and, as a result, the limit of the expectations exists and equals the expectation at the limit.
\begin{claim}\label{claim_monotone} The above-threshold tail distributions in \eqref{eq. ppa convergence} converge monotonically (in $K$) to the GPD.
\end{claim}
\begin{proof}
The above-threshold tail distribution of the $\chi^2 (2r)$ distribution in \eqref{eq. ppa convergence}, for a given threshold $u_K(v)$, can be expressed using the upper incomplete gamma function as 
\begin{eqnarray*}
\Pr\prnt{\bz_i > a_K z + u_K(v) \big| \bz_i > u_K(v)} &=& \frac{Pr\prnt{\bz_i > a_K z + u_K(v)}}{Pr\prnt{\bz_i > u_K(v) }}\\
&=& \frac{\Gamma\prnt{r,a_K z + u_K(v)}}{\Gamma\prnt{r,u_K(v)}},
\end{eqnarray*}
where $a_K$ and $b_K$ are given in \eqref{eqn: a_n normalized slow} and \eqref{eqn: b_n normalized slow}, respectively. Note that $u_K(v)$ scales up with $K$ as $\Theta(\log K)$. To ease notation, hereafter we use $u_K$ to denote $u_K(v)$.

Consider a series of above-threshold tail distributions, where the threshold $u_K$ grows monotonically with $K$. Then, for monotone convergence it is enough to show that 
\begin{align*}
\frac{\partial }{\partial u_K}\frac{\Gamma(r,a_K z + u_K)}{\Gamma(r,u_K)} &= \frac{\frac{\partial }{\partial u_K}\Gamma(r,a_K z + u_K)\cdot\Gamma(r,u_K)-\frac{\partial }{\partial u_K}\Gamma(r,u_K)\cdot\Gamma(r,a_K z+ u_K)}{\Gamma(r,u_K)^2}
\\
&\leq 0.
\end{align*}
We focus on the enumerator, take the derivatives of the gamma functions, and conclude that one has to show that
$$(a_K z+u_K)^{r-1} \e^{-(a_K z + u_K)}\int_{u_K}^{\infty}t^{r-1} \e^{-t}dt-u_K^{r-1} \e^{-u_K}\int_{a_K z+ u_K}^{\infty} t^{r-1} \e^{-t}dt\geq 0.$$
Changing variables to $t = \tau+u_K$ and $t = \tau +a_K z +u_K$ in the left and right integrals, respectively, the above simplifies to
$$\int_{0}^{\infty}\left[(\tau + u_K)^{r-1} (a_K z + u_K)^{r-1}-u_K^{r-1}(\tau +a_K z + u_K)^{r-1}\right] \e^{-\tau}d\tau \geq 0$$
which is clearly true since $(\tau+u_K)(a_K z+ u_K)\geq u_K(\tau+a_K z+ u_K)$ when $a_K z\tau \geq 0$. This completes the proof.
\end{proof}
The convergence of the mean excess from the threshold to the mean of the limiting generalized Pareto distribution, can now follow from the monotone convergence theorem~\cite[Ch.\ 1]{durrett2010probability}. The monotone convergence is also clearly visible in \Cref{monotone_convergence}.
\begin{figure}
\centering
\includegraphics[scale=0.55]{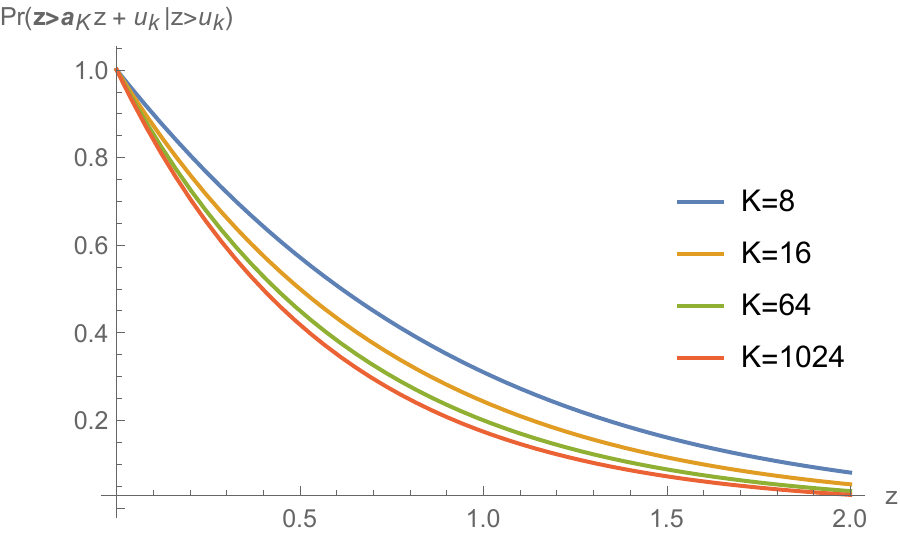}
\caption{Monotone convergence of the excess over the threshold distribution \eqref{eq. ppa convergence}.}
\label{monotone_convergence}
\end{figure}

Since the above-threshold tail is GPD, by \cite[Ch. 4.3.1]{coles2001introduction}, we now have $\E\brkt{\frac{\bz_i - u_k}{a_{K}} \big|\bz_i > u_k} \to \frac{\sigma_v}{1-\xi}$, and hence we approximate the expectation of the tail (for large enough $K$) as $\E\brkt{\bz_i \big|\bz_i > u_k} = u_k + \frac{a_K \sigma_v}{1-\xi}$.}
\red{Note that}\blue{Moreover, focusing on the excess values over the threshold, since the channel gain has a $\chi^2$ distribution, where $\xi \to 0$, by setting $\alpha = a_K z$ in \eqref{eq. ppa convergence},} \red{and}we obtain the exponential approximation, that is
\begin{equation*}\label{eqn: Exponential CDF}
    \Pr(\bz - u_K (v) \leq \alpha | \bz - u_K (v)>0) = 1 - \e^{-\frac{\alpha}{a_K}}
\end{equation*}
for all $\alpha \geq 0$ and large enough $K$.
That is, the \blue{above-threshold} tail distribution can be approximated using an exponential distribution with rate $\lambda = 1/a_K$. \blue{This results in the following corollary}.
 \red{As a result, taking the expected value, we obtain the following corollary.}
\begin{corollary}[\mbox{\cite[Ch. 4.3.1]{coles2001introduction}}]\label{coro: expected capacity above est by GEV thr}
Given a random complex Gaussian vector, and a \blue{sufficiently high} threshold such that \blue{the number of users that exceed it on average is $k$ out of $K$}, then
$$ \E\left[\Vert \bh \Vert^2 \bigg| \Vert \bh \Vert^2 > u_k  \right] = u_k + a_{\{K,r\}},$$
where $a_{\{K,r\}}$ is given in \eqref{eqn: a_n normalized}.
\end{corollary}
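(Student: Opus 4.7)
The plan is to obtain the corollary as a direct consequence of the Point Process / Pareto tail-approximation developed in this section, specialized to the $\chi^2_{2r}$ distribution. The key observation is that once we have the conditional excess distribution for the normalized maximum, the unconditional expectation of $\|\bh\|^2$ given exceedance falls out by a shift-and-integrate argument.

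First I would set $F$ to be the $\chi^2_{2r}$ cumulative distribution function, with normalizing constants $a_n = a_{\{K,r\}}$ and $b_n = b_{\{K,r\}}$ as given in \eqref{eqn: a_n normalized}--\eqref{eqn: b_n normalized}. Since $F$ lies in the Gumbel domain of attraction, the shape parameter is $\xi = 0$. Choosing the threshold $u_k$ as the $1 - k/K$ quantile implies $u_k = a_{\{K,r\}} v + b_{\{K,r\}}$ for an appropriate $v$, so that the Point Process approximation applies with $\Lambda(B_v) = e^{-v} = k/K$ (giving $k$ exceedances on average out of $K$ users, as desired).

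Next I would invoke the conditional tail formula derived just above Corollary~\ref{coro: expected capacity above est by GEV thr}: for $\xi \to 0$,
\[
\Pr\!\left( \|\bh\|^2 - u_k \leq \alpha \,\big|\, \|\bh\|^2 > u_k \right) \;=\; 1 - e^{-\alpha/a_{\{K,r\}}}, \qquad \alpha \geq 0.
\]
Thus the conditional distribution of the excess $\|\bh\|^2 - u_k$ is exponential with rate $1/a_{\{K,r\}}$, and therefore has mean $a_{\{K,r\}}$. The corollary then follows from
\[
\text{E}\!\left[\|\bh\|^2 \,\big|\, \|\bh\|^2 > u_k \right] \;=\; u_k \;+\; \text{E}\!\left[\|\bh\|^2 - u_k \,\big|\, \|\bh\|^2 > u_k \right] \;=\; u_k + a_{\{K,r\}}.
\]

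The main subtlety (rather than obstacle) to flag is that the tail formula is an asymptotic statement as $K \to \infty$, so strictly the equality in the corollary should be read as an asymptotic equivalence up to terms of order $o(a_{\{K,r\}})$, consistent with the $o(a_{\{K,r\}})$ appearing in the definition \eqref{eqn: b_n normalized} of $b_{\{K,r\}}$. The choice of normalizing constants tailored to $\chi^2_{2r}$ (rather than the slower classical $a_n = 2$) is precisely what makes the approximation tight for moderate $K$, so no additional correction term is needed beyond what is already absorbed into $a_{\{K,r\}}$.
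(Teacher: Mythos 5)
Your proposal is correct and follows essentially the same route as the paper: the text immediately preceding the corollary establishes that the excess $\Vert \bh \Vert^2 - u_k$ conditioned on exceedance is (asymptotically) exponential with rate $1/a_{\{K,r\}}$, and the corollary is obtained exactly as you describe, by adding the mean excess $a_{\{K,r\}}$ to the threshold $u_k$. Your remark that the equality should be read as an asymptotic statement absorbed into the $o(a_{\{K,r\}})$ terms is a fair and accurate caveat that the paper leaves implicit.
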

\off{
\subsection{Throughput Analysis}\label{sec: throughput analysis}
As mentioned, we say that a slot is utilized only if a $l \leq r$ users transmit in that slot, otherwise, the whole slot is lost.
In particular, a threshold exceedance of more than $r$ users coincides with the event of more than $r$ points that retains in the limit point process.
Similarly, when no user exceeds the threshold, it reflects in the event of empty sets in the point process.
\begin{claim}\label{claim: utilized slot probability}
 For  a threshold $u_k$ we have: $\Pr\left(\textmd{ utilized slot }\right) = \sum_{l=1}^{r}\frac{k^{l}}{l!}\e^{-k}$.
\end{claim}

\begin{proof}
The probability that at  most $r$ out of $K$ users will exceed $u_k$, follows
\begin{eqnarray*}\label{eqn: collision probability}
    \Pr( \textmd{ collision } ) &=& 1-\sum_{l=0}^{r}\binom{K}{l}\left(1-\frac{k}{K}\right)^{l} \left(\frac{k}{K}\right)^{K-l}\\
   &=& 1-\sum_{l=0}^{r} \frac{K(K-1)...(K-l+1)}{K^{l}} \frac{k^{l}}{l!} \frac{(1-k)^K}{(1-k)^{l}} \\
   &\stackrel{K \rightarrow \infty}{\to}& 1-\sum_{l=0}^{r} \frac{k^{l}}{l!}\e^{-k}.
\end{eqnarray*}
The last step is actually the Poisson approximation to the Binomial distribution with parameters $B(K,\frac{k}{K})$.
Similarly, under the same settings, the probability of an idle slot is
\begin{equation*}\label{eqn: idle time slot analysis}
    \Pr\left\{\textmd{ idle slot } \right\} = \left(1 - \frac{k}{K}\right)^{K} \stackrel{K\rightarrow \infty}{\longrightarrow} \e^{-k}.
\end{equation*}
Since
$\Pr\left(\textmd{ utilized slot }\right) = 1 - \Pr \left( \textmd{ idle slot }\bigcup \textmd{ collision } \right) $,
Claim~\ref{claim: utilized slot probability} follows.
\end{proof}
}
\subsection{Threshold Estimation}\label{sec: thr est}
Let $u_k$ be a threshold such that only the $k$ users with the strongest channel norm $\Arrowvert \bh_i \Arrowvert^2$ will exceed it on average.
Since $\Arrowvert \bh_i \Arrowvert^2$ follows $\chi^{2}_{2r}$ distribution, $u_k$ can be easily estimated using the Inverse-Gamma function. That is,
\begin{equation*}\label{eqn: p threshold igamma}
u_{k}  =  2 Q^{-1}(r,k/K).
\end{equation*}
Note, however, that the expression above does not give any intuition on the actual threshold value, or, more importantly, how it scales with $K$ for fixed $k$.
Yet, note that the threshold $u_k$ is closely related to the normalizing constant $b_{\{K,r\}}$, since both aim to capture the last quantiles of the ancestor distribution.
The exact relation can be obtain using the Point process approximation. Specifically, the \RED{exceedance} \BLUE{exceeding} rate for the $\chi^2$-distribution is
$$\Lambda(u_k) = \e^{-\frac{u_k- b_{\{K,r\}}}{a_{\{K,r\}}}}.$$
Accordingly, choosing $u_k = - a_{\{K,r\}}\log k + b_{\{K,r\}}$ gives a rate of $k$ in total. Thus, letting $a_{\{K,r\}}$ and $b_{\{K,r\}}$ to be set according
to \eqref{eqn: a_n normalized} and \eqref{eqn: b_n normalized}, respectively, $u_k$ is set. Now, to see the scaling law, note that $a_{\{K,r\}}$  and $b_{\{K,r\}}$ in \eqref{eqn: a_n normalized} and \eqref{eqn: b_n normalized} converges to $a_K$ and $b_K$ in \eqref{eqn: a_n normalized slow} and (\ref{eqn: b_n normalized slow}), respectively, hence $u_k = O(\log K)$. In \Cref{fig:an_convergence} Figure \ref{fig:convergence of b} depicts the above convergence result. 

\begin{figure}
	\centering
		\includegraphics[scale=0.55]{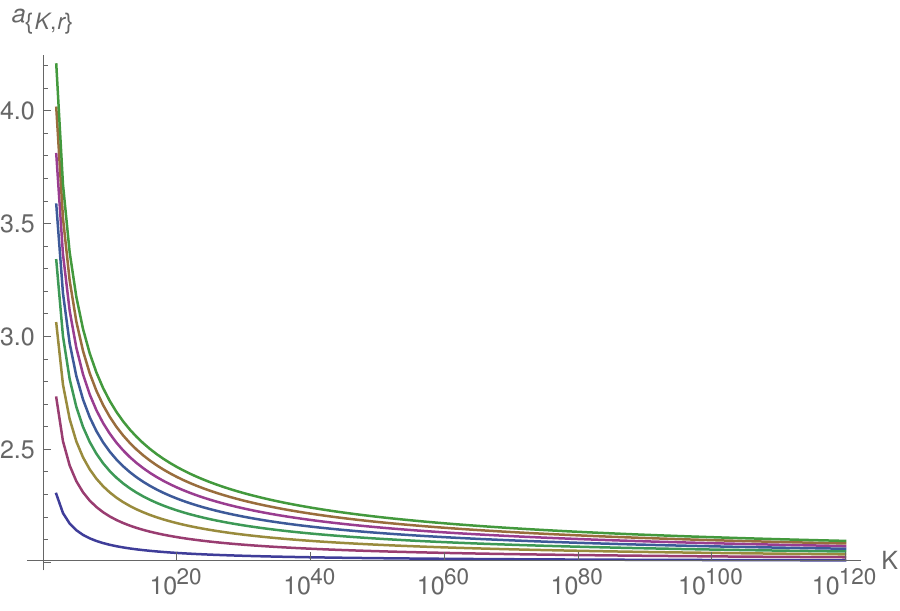}
	\caption{Convergence of the normalizing constant $a_{\{K,r\}}$ for $\{2r\}_{r=1}^8$ receiving antennas to the limit $a_K =2$. Note that this slow convergence does not affect the convergence of the actual distribution, as the values of $a_{\{K,r\}}$ are within a constant factor from $2$ even for very moderate $K$.}
	\label{fig:an_convergence}
\end{figure}

\begin{figure}
\centering
\includegraphics[scale=0.55]{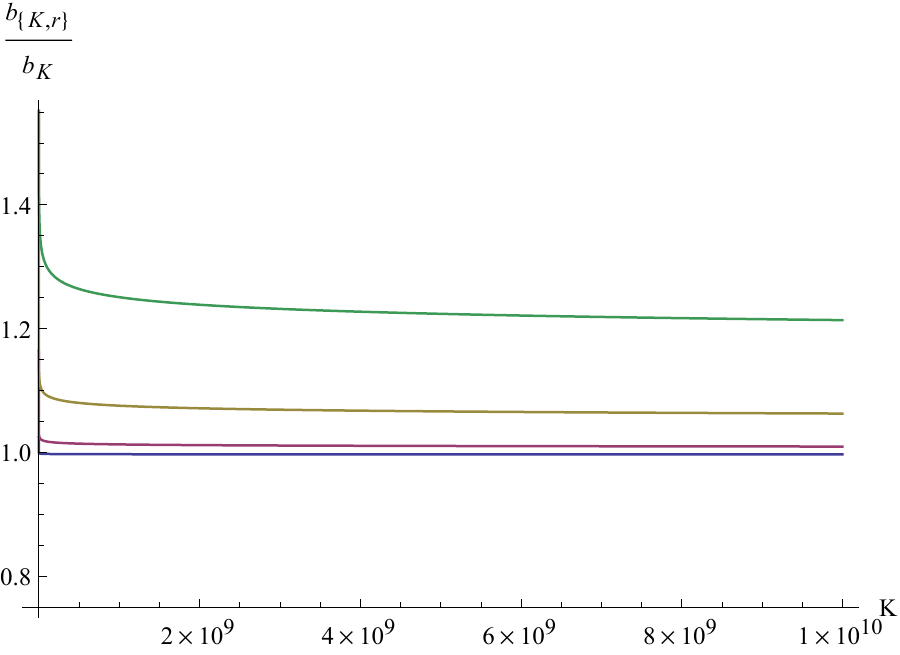}
\caption{The ratio between the normalizing constant $b_{\{K,r\}}$ in \eqref{eqn: b_n normalized} and the value in (\ref{eqn: b_n normalized slow}). The curves are plotted for $r=16,8,4$ and $2$, top to bottom.}
\label{fig:convergence of b}
\end{figure}
 \off{ 
 For this, it is possible to approximate $u_k$ using EVT, similar to \cite{kampeas2012opportunistic}.
\begin{claim}\label{prop: expected capacity above est by GEV thr}
Set $c=mk$, $m\in \N^{+}$. The threshold $u_k$ can be approximated as follows.
\begin{eqnarray*}\label{eqn: p threshold}
    u_k &=& b_\frac{K}{c} -a_\frac{K}{c} \log \left\{ -\log \left(1 - \frac{k}{c}\right)\right\} + o\left(a_\frac{K}{c}\right).
\end{eqnarray*}
\end{claim}
\begin{proof}
We first estimate a threshold $\tilde{u}_p$. Let $c=mk$, $m\in \N^{+}$, be a constant number of maximal users, i.e., each has a maximal norm among a randomly selected group of $\frac{K}{c}$ users. We set $\tilde{u}_p$ such that only a fraction $p$ of these $c$ \emph{maximal users} will exceed.

For large enough $K$, each of these users experiences a norm $\Arrowvert \bh_i \Arrowvert^2$ which is distributed according to the \emph{extreme value distribution} discussed earlier for the $\chi^2_{(2r)}$ distribution. Thus, for these users, we estimates the threshold by a simple quantile function,
$$1 - G_0(u_{p}) = {p}.$$
For such $\tilde{u}_p$, we have
\begin{equation*}\label{eqn: last quantile threshold}
  G(\tilde{u}_p) =  \exp \left\{-\e^{-\frac{1}{a_\frac{K}{c}}\left(u_{p} - b_\frac{K}{c}\right)}\right\} = 1 - {p}
\end{equation*}
and we obtain
\begin{equation*}\label{eqn: estimated u_p Gumbel}
  \tilde{u}_p = b_\frac{K}{c} - a_\frac{K}{c} \log \left\{ -\log (1 - p)\right\}.
\end{equation*}
We now set $u_k=\tilde{u}_{\frac{k}{c}}$.
\end{proof}
Note that no matter which normalizing constants $a_n$ and $b_n$ are used, either according to \eqref{eqn: a_n normalized slow} and \eqref{eqn: b_n normalized slow}, or the ones derived in this work for different values of $r$ (to allow faster convergence to the EVD), i.e., \eqref{eqn: a_n normalized} and \eqref{eqn: b_n normalized}, in both cases we have, if $k=O(1)$, $u_k = O(\log K)$. This is crucial for the $r \log \log K$ scaling law.
} 


\subsection{Proof of \Cref{C ZF uniform users lower}}\label{proof:ZF_lower}
Following the derivations of the upper bound, we have
\begin{align*}
\E\brkt{\mathcal{R}(u_k)}
&= \sum_{j=1}^{r}\frac{k^{j}\e^{-k}}{j!} \sum_{i=1}^{j} \text{E}\left[\log\left(1 + P \left. \sum_{m=1}^{r-j+1} \Arrowvert \bh_i \Arrowvert^2 \frac{\vert\langle\bV_i^{(m)},\bh_i\rangle \vert^2}{\Arrowvert \bh_i \Arrowvert^2 \Arrowvert \bV^{(m)}_i \Arrowvert^2} \right)\right| \Vert \bh_i \Vert^2 > u_k \right] \blue{+ O\prnt{\frac{\log \log K}{K}}}
\\
&\stackrel{(a)}{\ge} \sum_{j=1}^{r}\frac{k^{j}\e^{-k}}{j!} \sum_{i=1}^{j} \text{E}\log\left(1 + P u_k \sum_{m=1}^{r-j+1} \frac{\vert\langle\bV_i^{(m)},\bh_i\rangle\vert^2}{\Arrowvert \bh_i \Arrowvert^2 \Arrowvert \bV^{(m)}_i \Arrowvert^2} \right) \blue{+ O\prnt{\frac{\log \log K}{K}}}
\\
&\stackrel{(b)}{=} \sum_{j=1}^{r}\frac{k^{j}\e^{-k}}{j!}j \text{E}\log\left(1 + P u_k \sum_{m=1}^{r-j+1} \frac{\vert\langle\bV_{i'}^{(m)},\bh_{i'}\rangle\vert^2}{\Arrowvert \bh_{i'} \Arrowvert^2 \Arrowvert \bV^{(m)}_{i'} \Arrowvert^2} \right) \blue{+ O\prnt{\frac{\log \log K}{K}}}
\\
&\ge \sum_{j=1}^{r}\frac{k^{j}\e^{-k}}{j!} j \text{E}\log\left(1 + P u_k  \frac{\vert\langle\bV_{i'}^{(1)},\bh_{i'}\rangle\vert^2}{\Arrowvert \bh_{i'} \Arrowvert^2 \Arrowvert \bV^{(1)}_{i'}\Arrowvert^2} \right) \blue{+ O\prnt{\frac{\log \log K}{K}}}
\\
&\stackrel{(c)}{=} \sum_{j=1}^{r}\frac{k^{j}\e^{-k}}{j!} j \int_{0}^{1}(r-1)(1-\alpha)^{r-2}\log\left(1+P u_k \alpha\right) d \alpha \blue{+ O\prnt{\frac{\log \log K}{K}}}
\end{align*}
where (a) is since the norms of \emph{all users participating} are above the threshold $u_k$; (b) is since the angles in the inner sum are identically distributed and independent of $i$, hence an arbitrary $1 \leq i' \leq j$ can be used; (c) is by explicitly computing the expectation over the angle between $\bh_i$ and $\bV^{(1)}_i$, remembering that it has a density $(r-1)(1-\alpha)^{r-2}$ for $0 \leq \alpha \leq 1$. \blue{Omitting the $O\prnt{\frac{\log \log K}{K}}$}, completes the proof.
\blue{
\subsection{A Proof that the Sum Over the Poisson Distribution is Monotonically Increasing in $r$}\label{sec. poisson monoticity}
To prove that the sum over the Poisson distribution with parameter $r$, up to $r-1$, is monotonically increasing in $r$, one must show that 
$$ \sum_{j=0}^{r-1}\frac{r^j}{j!}\mathrm{e}^{-r} < \sum_{j=0}^{r}\frac{(r+1)^j}{j!}\mathrm{e}^{-(r+1)}, \forall r \in \mathbb{N}.$$
This can be rewritten as  
$$\frac{\Gamma(r,r)}{\Gamma(r)} < \frac{\Gamma(r+1,r+1)}{\Gamma(r+1)},$$
where $\Gamma(r,r) = \int_r^\infty t^{r-1} \mathrm{e}^{-t} \mathrm{d} t$ is the upper incomplete gamma function.
Equivalently, we need to show that
$$r \int_{r}^\infty t^{r-1} \mathrm{e}^{-t} \mathrm{d} t <  \int_{r+1}^\infty t^{r} \mathrm{e}^{-t} \mathrm{d} t.$$
Since the derivative of $t^{r} \mathrm{e}^{-t}$ is negative on the interval $[r,\infty)$, the integrand is decreasing on that interval and we have, 
$$\int_{r}^{r+1} t^{r} \mathrm{e}^{-t} \mathrm{d}t < r^{r} \mathrm{e}^{-r}, \qquad \forall r>0.$$ 
Accordingly,
\begin{eqnarray*}
\int_{r+1}^{\infty} t^{r} \mathrm{e}^{-t}  \mathrm{d}t &=&  \int_{r}^{\infty} t^{r} \mathrm{e}^{-t} \mathrm{d}t - \int_{r}^{r+1} t^{r} \mathrm{e}^{-t} \mathrm{d}t \\
&>&  \int_{r}^{\infty} t^{r} \mathrm{e}^{-t} \mathrm{d}t  - r^{r} \mathrm{e}^{-r}\\
&=&  r \int_{r}^{\infty} t^{r-1} \mathrm{e}^{-t} \mathrm{d}t,
\end{eqnarray*}
where the last step is due to integration by parts. This completes the proof.
}
\subsection{Proof of \Cref{claim. integral result}}\label{proof:integral_result}
We wish to solve the following definite integral:
\begin{equation*}
(r-1) \int_0^1 (1-\alpha)^{r-2} \log\left(1 + P u \alpha\right) d\alpha.
\end{equation*}
We first make the substitution $x = (1-\alpha)$. As a result, we obtain the following integral:
$$(r-1) \int_0^1 x^{r-2} \log\left(1 + P u (1- x)\right) dx.$$
Now, using integration by parts, we have
\begin{equation*}
(r-1) \int_0^1 x^{r-2} \log\left(1 + P u (1- x)\right) dx
= x^{r-1}\log \left(1+ P u(1-x)\right) \Bigg|_0^1 + \int_0^1 \frac{P u x^{r-1}}{1 + P u(1-x)}dx.
\end{equation*}
First, note that
\begin{equation*}
x^{r-1}\log \left(1+ P u(1-x)\right) \Bigg|_0^1=0.
\end{equation*}
Now, performing a polynomial long division, the term inside the integral can be expressed as
$$\frac{P u x^{r-1}}{1+ P u(1-x)} = -\sum_{i=0}^{r-2}\left(\frac{1+ P u}{P u}\right)^i x^{r-2-i} + \frac{(1+P u)^{r-1}}{(P u)^{r-2}(1+P u (1-x))}.$$
Finally, we exchange the integration with the (finite) summation, then integrate each term in the sum according to $dx$. We have:
\begin{align*}
\int_0^1 & \frac{P u x^{r-1}}{1 + P u(1-x)}dx
\\
&=-\sum_{i=0}^{r-2}\left(\frac{1+ P u}{P u}\right)^i \frac{1}{r-1-i}x^{r-1-i} \Bigg|_0^1 - \left(\frac{1+ P u}{P u}\right)^{r-1}\log\left(1 + P u (1-x)\right) \Bigg|_0^1.
\\
&=
\left(\frac{1 + P u}{P u}\right)^{r-1}\log\left(1+P u\right)-\sum_{i=0}^{r-2} \left(\frac{1+ P u}{P u}\right)^i \frac{1}{r-1-i}.
\end{align*}
Thus, \Cref{claim. integral result} follows.
\subsection{Proof of \Cref{claim. entries var of above thr vec}}\label{proof:above_thr_vec_entries}
 It can be shown by the total law of expectation, that the entries of the channel vector remain zero, given that the channel norm is greater than a threshold,  as follows.
\begin{eqnarray*}
\text{E}\left[ h_{i,n}  \Big| \Vert \bh_i \Vert^2 > u_k\right] &=& \text{E}\left[ h_{i,n}  \Big| \sum_{n=1}^r \vert h_{i,n} \vert^2 > u_k\right]\\
 &=& \text{E}\left[\left. \text{E}\left[ h_{i,n}  \Big| \sum_{n=1}^r \vert h_{i,n} \vert^2 > u_k, \sum_{\substack{m=1, \\ m\neq n}}^r \vert h_{i,m} \vert^2 = \tilde{u} \right]\right| \sum_{n=1}^r \vert h_{i,n} \vert^2 > u_k   \right]\\
 &=& \text{E}\left[\left. \underbrace{\text{E}\left[ h_{i,n}  \Big|  \vert h_{i,n} \vert^2 > u_k - \tilde{u} \right]}_{(\ast)}\right| \sum_{n=1}^r \vert h_{i,n} \vert^2 > u_k   \right]
\end{eqnarray*}
Since $h_{i,n}$ is complex Gaussian random variable, the inner expectation $(\ast)$ can be decomposed to a real and an imaginary part as
 \begin{eqnarray*}
\text{E}\left[ h_{i,n}  \big|  \vert h_{i,n} \vert^2 > u_k - \tilde{u} \right]&=& \text{E}\left[ a +i b  \big|  a^2+b^2 > u_k - \tilde{u} \right]\\
&=& \text{E}\left[ a \big|  a^2+b^2 > u_k - \tilde{u} \right]+ i \text{E}\left[ b  \big|  a^2+b^2 > u_k - \tilde{u} \right]\\
\end{eqnarray*}
To ease notation, let us denote $\tilde{\tilde{u}} \doteq \sqrt{u_k -\tilde{u} - b^2}$. Focusing on the real part, we use the total expectation law again to obtain:
\begin{eqnarray*}
\text{E}\left[ a \Big|  a^2+b^2 > u_k - \tilde{u} \right] &=& \text{E}\left[\left. \text{E}\left[a \Big| a^2 > u_k -\tilde{u} - b^2, b^2\right] \right| a^2 +b^2 > u_k - \tilde{u} \right]\\
&=& \text{E}\left[\left. \text{E}\left[a \Big| \vert a \vert > \tilde{\tilde{u}}, b^2\right] \right| a^2 +b^2 > u_k - \tilde{u} \right]\\
&=& \text{E}\left[\left.\underbrace{\text{E}\left[sign(a) \vert a \vert \Big| \vert a \vert > \tilde{\tilde{u}}, b^2\right] }_{(\ast \ast)} \right| a^2 +b^2 > u_k - \tilde{u} \right].
\end{eqnarray*}
Now, we show that the sign of $a$ is independent of the amplitude $\vert a \vert$ and $b^2$.
\begin{eqnarray*}
\Pr\left( sign(a) = -1 \Big| \vert a \vert > \tilde{\tilde{u}} , b^2 \right) &=& \Pr\left( a < 0 \Big| \vert a \vert > \tilde{\tilde{u}} ,b^2\right)\\
&=&\frac{\Pr\left( a < 0 , \vert a \vert > \tilde{\tilde{u}} , b^2 \right)}{\Pr\left( \vert a \vert > \tilde{\tilde{u}} , b^2 \right)}\\
&=& \frac{\Pr \left( a < - \tilde{\tilde{u}} , b^2 \right)}{\Pr \left( a < - \tilde{\tilde{u}} , b^2\right) + \Pr \left( a > \tilde{\tilde{u}} , b^2 \right)}\\
&=& \frac{1}{2}.
\end{eqnarray*}
Thus, the sign of $a$, the amplitude $\vert a \vert$ and $b^2$ are independent.
Consequently, the inner expectation $(\ast \ast)$ is equal to:
\begin{eqnarray*}
\text{E}\left[sign(a) \vert a \vert \big| \vert a \vert > \tilde{\tilde{u}}, b^2\right]&=& \text{E}\left[sign(a)\right]\text{E}\left[ \vert a \vert \big| \vert a \vert > \tilde{\tilde{u}}, b^2\right]\\
&=& 0.
\end{eqnarray*}
From symmetry, same result can be obtained for the imaginary part. Thus, the vectors are still zero mean vectors.

However, the variance of the vector entries increase, given that the vector norm is greater than a threshold. Remember, by EVT \cite{leadbetter1983}, the tail is exponentially distributed with rate parameter $1/a_{\{K,r\}}$. Thus, it follows that
\begin{eqnarray*}
u_k + a_{\{K,r\}}  &\stackrel{(a)}{=}& \text{E}\left[\left. \Vert \bh_i \Vert^2\right| \Vert \bh_i \Vert^2 > u_k\right] \\
&\stackrel{(b)}{=}& \sum_{n=1}^r \text{E}\left[\left. \vert h_{i,n} \vert^2\right| \Vert \bh_i \Vert^2 > u_k \right] \\
&\stackrel{(c)}{=}& r \text{E}\left[\left. \vert h_{i,n} \vert^2\right| \Vert \bh_i \Vert^2 > u_k \right]
\label{eq: element conditional expectation}
\end{eqnarray*} where (a) is the result of computing the expected norm of an i.i.d.\ complex normal random vector, given that it exceeded a threshold $u_k$. (b) follows from the linearity of expectation operator.  (c) is since the elements of $\bh_i$ are identically distributed. We point out that choosing vectors with norm greater than a threshold increase the variance of the entries to $ (u_k + a_{\{K,r\}})/r $.

Similar to the entries conditional expectation,  it can be shown that the matrix entries are still uncorrelated in pairs. Namely,
\begin{align*}
\text{E}\left[ h_{i,n}^{\ast}h_{i,m}  \big| \Vert \bh_i \Vert^2 > u_k\right]&\\
 &= \text{E}\left[ h_{i,n}^{\ast}h_{i,m}  \big| \sum_{n=1}^r \vert h_{i,n} \vert^2 > u_k\right]&\\
 &= \text{E}\left[\left. \text{E}\left[ h_{i,n}^{\ast}h_{i,m}  \big| \sum_{n=1}^r \vert h_{i,n} \vert^2 > u_k, \sum_{\substack{m=1 \\m\neq n}}^r \vert h_{i,m} \vert^2 = \tilde{u} \right]\right| \sum_{n=1}^r \vert h_{i,n} \vert^2 > u_k   \right]&\\
 &= \text{E}\left[\left. \underbrace{\text{E}\left[ h_{i,n}^{\ast}h_{i,m}  \big|  \vert h_{i,n} \vert^2 > u_k - \tilde{u} \right]}_{(\tilde{\ast})}\right| \sum_{n=1}^r \vert h_{i,n} \vert^2 > u_k   \right]&
\end{align*}
Similar to the method above, we decompose $h^{\ast}_{i,n}$ to its real and imaginary part. Thus, the inner expectation $(\tilde{\ast})$ is
\begin{align*}
 \text{E}\left[ h_{i,n}^{\ast}h_{i,m}  \Big|  \vert h_{i,n} \vert^2 > u_k - \tilde{u} \right]&\\
  &= \text{E}\left[ a h_{i,m} - i b h_{i,m}  \Big|  a^2 + b^2 > u_k - \tilde{u} \right]&\\
  &= \text{E}\left[ a h_{i,m} \Big|  a^2 + b^2 > u_k - \tilde{u} \right]- i \text{E}\left[ b h_{i,m} \Big|   a^2 + b^2 > u_k - \tilde{u} \right]&
\end{align*}
Using the total expectation law on the real part of the expectation, we have
\begin{align*}
\text{E}\left[ a h_{i,m} \Big|  a^2 + b^2 > u_k - \tilde{u} \right]&\\
&= \text{E}\left[\left. \text{E}\left[ a h_{i,m} \Big|  a^2  > u_k - \tilde{u} - b^2, b^2 \right] \right|   a^2 + b^2 > u_k - \tilde{u}  \right]&\\
&= \text{E}\left[\left. \underbrace{\text{E}\left[ sign(a) \vert a \vert h_{i,m} \Big|  \vert a \vert  > \tilde{\tilde{u}}, b^2 \right]}_{(\tilde{\ast \ast})} \right|   a^2 + b^2 > u_k - \tilde{u}  \right].&
\end{align*}
Now, similar to the above method,  let us show that the sign of $a$ is independent of  $\vert a \vert$ and $h_{i,m}$.
\begin{eqnarray*}
\Pr\left( sign(a) = -1 \Big| \vert a \vert > \tilde{\tilde{u}} , b^2,  h_{i,m} \right) &=& \Pr\left( a < 0 \Big| \vert a \vert > \tilde{\tilde{u}} ,b^2,  h_{i,m}\right)\\
&=&\frac{\Pr\left( a < 0 , \vert a \vert > \tilde{\tilde{u}} , b^2, h_{i,m} \right)}{\Pr\left( \vert a \vert > \tilde{\tilde{u}} , b^2, h_{i,m} \right)}\\
&=& \frac{\Pr \left( a < - \tilde{\tilde{u}} , b^2, h_{i,m} \right)}{\Pr \left( a < - \tilde{\tilde{u}} , b^2, h_{i,m}\right) + \Pr \left( a > \tilde{\tilde{u}} , b^2 , h_{i,m}\right)}\\
&=& \frac{1}{2} = \Pr(a < 0).
\end{eqnarray*}
Thus, the inner expectation $(\tilde{\ast \ast})$ can be expressed as
\begin{eqnarray*}
\text{E}\left[sign(a) \vert a \vert h_{i,m} \big| \vert a \vert > \tilde{\tilde{u}}, b^2\right]&=& \text{E}\left[sign(a)\right]\text{E}\left[ \vert a \vert h_{i,m} \big| \vert a \vert > \tilde{\tilde{u}}, b^2\right]\\
&=& 0.
\end{eqnarray*}
From symmetry, same result can be obtained for the imaginary part. Thus, given that the norm is greater than a threshold, the vector entries sill are uncorrelated.
 Thus, Claim~\ref{claim. entries var of above thr vec} follows.
\subsection{Proof of \Cref{claim. R is unitarily invariant}}\label{proof:R_unitarily_invariant}
A Hermitian random matrix is unitary invariant if the joint distribution of its entries does not change under unitary transformation, namely, $\bH \sim U \bH$ for any unitary matrix $U$. It is well known that a Gaussian matrix is unitary invariant (e.g., \cite{telatar1999capacity}).
It is not hard to see that $\bH \bH^{\dagger} \sim U \bH  \bH^{\dagger} U^{\dagger}$. That is, the Wishart matrix is also unitary invariant.
Similarly, as $U I U^\dagger = I$, $\left(\bH \bH^{\dagger} + I \right) \sim U \left(\bH \bH^{\dagger} + I \right)U^{\dagger}$.

Since $\left(\bH \bH^{\dagger} + I \right)$ is unitary invariant,
by \cite[Lemma 2.6]{tulino2004random}, there is a decomposition such that $\left(\bH \bH^{\dagger} + I \right)= \bU \pmb{\tilde{\Lambda}}\bU^{\dagger}$, $\bU$ being unitary and independent of $\pmb{\tilde{\Lambda}}$, hence $\left(\bH \bH^{\dagger} + I \right)^{-1}= \bU \pmb{\tilde{\Lambda}}^{-1} \bU^{\dagger}$ and part (i) follows with $\pmb{\Lambda} = \pmb{\tilde{\Lambda}}^{-1}$. Note that if we are only interested in the decomposition, and the independence between the unitary matrix and the eigenvalues is not required (as is the case in the lower bound below), the proof is simpler since $\bR$ is Hermitian.

For part (ii), note that if $\bh \sim U\bh$, then $\bh |\hspace{0.15cm} \Vert \bh_i \Vert^2 > u \sim U\bh |\hspace{0.15cm} \Vert \bh_i \Vert^2 > u$. This is since the unitary transform does not change the vector norm, hence the set of instances $h \in \C^r$ with $\Vert h \Vert^2 >u$ is the same as those with $\Vert U h \Vert^2 >u$. The rest follows in the same manner.
\subsection{Proof of Claim~\ref{claim. lower bound on upper gamma s geq 2}}
First, note that,
$$\Gamma(s,x) = \int_{x}^{\infty} \tau^{s-1}\e^{-\tau} d\tau = \e^{-x} \int_0^{\infty} (t+x)^{s-1}\e^{-t} dt = \e^{-x} \text{E}[(T+x)^{s-1}],$$
where $T \sim \exp(1)$.

Accordingly, since $(t+x)^{s-1}$ is convex in $t$ for the corresponding $s$, we apply Jensen's inequality to the r.h.s.\ and note that $\text{E}[T] =1$, proving the bound.
\subsection{Proof of \Cref{claim. lower bound on expectation}}
We wish to bound the following expectation, when $\bx \sim \chi^2_{2(r-j+1)}$ and $\by \sim \chi^2_{2(j-1)}$, independent of $\bx$:
\begin{multline*}
    \E[\log(1+\bx)|\bx +\by > u]
     = \frac{1}{\Pr\left(\bx +\by >u\right)}
\\
\left[\int_{x=0}^{u} \int_{y=u-x}^{\infty}\log(1+x)f_{\bx}(x) f_{\by}(y) dy dx + \int_{x=u}^{\infty} \int_{y=0}^{\infty}\log(1+x)f_{\bx}(x) f_{\by}(y)dy dx \right].
\end{multline*}
Performing integration on $y$, we have the following.
\begin{flalign}\label{eq. mmse lower bound}
&\E[\log(1+x)|\bx +\by > u] &\nonumber\\
  & \quad \quad = \frac{1}{\Pr\left(\bx +\by >u\right)}\left[ \underbrace{\int_{x=0}^{u} \frac{\Gamma(j-1,(u-x)/2)}{\Gamma(j-1)} \log(1+x)f_{\bx}(x) dx}_{I_2(x)} + \underbrace{\int_{x=u}^{\infty}\log(1+x)f_{\bx}(x) dx}_{I_1(x)} \right]&
\end{flalign}
where $\Gamma(s,x) = \int_{x}^{\infty} t^{s-1}\e^{-t} dt$ is the upper incomplete Gamma function.
Further, note that the complement CDF of the $\chi^2_{2(j-1)}$-distribution is $\Pr\left(\by > u-x\right) = \frac{\Gamma(j-1,(u-x)/2)}{\Gamma(j-1)}$.

Accordingly, to evaluate (\ref{eq. mmse lower bound}), we need to evaluate both $I_1(x)$ and $I_2(x)$. We begin with integration by parts on $I_1(x)$, where the anti-derivative is $f_{\bx}(x)$. We have
\begin{eqnarray*}
I_1(x) &=& \log(1+x)\int f_{\bx}(x)dx \Bigg|_{x=u}^{\infty} - \int_{x=u}^{\infty}\frac{\int f_{\bx}(x) dx}{1+x}dx\\
&=& \log(1+x)\left(-\frac{\Gamma(r-j+1,x/2)}{\Gamma(r-j+1)}\right)\Bigg|_{x=u}^{\infty} + \frac{1}{\Gamma(r-j+1)}\int_{x=u}^{\infty} \frac{\Gamma(r-j+1,x/2)}{1+x} dx,
\end{eqnarray*}
where the last equality follows since $\int f_{\bx}(x) dx = \int \frac{x^{r-j} \e^{-x/2}}{2^{r-j+1} \Gamma(r-j+1)} dx = -\frac{\Gamma(r-j+1,x/2)}{\Gamma(r-j+1)}$.
Now, to bound the above expression, we restrict our attention to $1<j\leq r$. Since $\lim_{x\to\infty}\log(1+x)\Gamma(s,x/2)=0$, utilizing Claim~\ref{claim. lower bound on upper gamma s geq 2}, we have
\begin{eqnarray*}
I_1(x)&\geq& \frac{\Gamma(r-j+1,u/2)}{\Gamma(r-j+1)} \log(1+u) +\frac{1}{\Gamma(r-j+1)}\int_{x=u}^{\infty} \frac{\e^{-x/2} (1+x/2)^{r-j}}{1+x} dx \\
&>& \frac{\Gamma(r-j+1,u/2)}{\Gamma(r-j+1)} \log(1+u) +\frac{1}{\Gamma(r-j+1)}\int_{x=u}^{\infty} \frac{\e^{-x/2} (1+x/2)^{r-j}}{2+x} dx \\
&=& \frac{\Gamma(r-j+1,u/2)}{\Gamma(r-j+1)} \log(1+u) +\frac{1}{2\Gamma(r-j+1)}\int_{x=u}^{\infty} \e^{-x/2} (1+x/2)^{r-j-1} dx
\\
&=& \frac{\Gamma(r-j+1,u/2)}{\Gamma(r-j+1)} \log(1+u) +\frac{e}{\Gamma(r-j+1)}\int_{z=1+u/2}^{\infty} \e^{-z} z^{r-j-1} dz \\
&=& \frac{\Gamma(r-j+1,u/2)}{\Gamma(r-j+1)} \log(1+u) +\frac{e}{\Gamma(r-j+1)} \Gamma(r-j,1+u/2).
\end{eqnarray*}

Next, let us address $I_2(x)$. Again, using Claim~\ref{claim. lower bound on upper gamma s geq 2}, we have
\begin{eqnarray*}
I_2(x) &\geq&  \frac{1}{\Gamma(j-1)}\int_{x=0}^{u} \e^{-\frac{u-x}{2}}\left(1+ \frac{u-x}{2}\right)^{j-2} \log(1+x)f_{\bx}(x) dx \\
 &=&   \frac{1}{\Gamma(j-1)} \int_{x=0}^{u} \e^{-\frac{u-x}{2}}\left(1+ \frac{u-x}{2}\right)^{j-2} \log(1+x) \frac{x^{r-j} \e^{-x/2}}{2^{r-j+1} \Gamma(r-j+1)} dx \\
 &=&  \frac{\e^{-u/2}}{2^{r-j+1} \Gamma(r-j+1) \Gamma(j-1)} \int_{x=0}^{u} \left(1+ \frac{u-x}{2}\right)^{j-2}  x^{r-j} \log(1+x)  dx \\
 &=&  \frac{\e^{-u/2}}{2^{j-2} 2^{r-j+1} \Gamma(r-j+1) \Gamma(j-1)} \int_{x=0}^{u} \left(2+ u-x\right)^{j-2}  x^{r-j} \log(1+x)  dx \\
 &>&  \frac{\e^{-u/2}}{ 2^{r-1} \Gamma(r-j+1)\Gamma(j-1)} \int_{x=0}^{u} \left(u-x\right)^{j-2}  x^{r-j} \log(1+x)  dx \\
 &=&  \frac{\e^{-u/2} u^{r-2}}{2^{r-1} \Gamma(r-j+1)\Gamma(j-1)} \int_{x=0}^{u} \left(1-\frac{x}{u}\right)^{j-2}  \left(\frac{x}{u}\right)^{r-j} \log(1+x)  dx \\
\end{eqnarray*}
Let $z = x/u$. Accordingly, we have
\begin{eqnarray*}
I_2(x)
&> &\frac{\e^{-u/2} u^{r-1}}{ 2^{r-1} \Gamma(r-j+1) \Gamma(j-1)} \int_{z=0}^{1} \left(1-z\right)^{j-2}  z^{r-j} \log(1+u z)  dz\\
&\geq& \frac{\e^{-u/2} u^{r-1}\beta(r-j+1,j-1)}{ 2^{r-1} \Gamma(r-j+1)\Gamma(j-1)} \int_{z=0}^{1} \frac{1}{\beta(r-j+1,j-1)} \left(1-z\right)^{j-2}  z^{r-j} \log(u z)  dz\\
&=&  \frac{\e^{-u/2} u^{r-1}\beta(r-j+1,j-1)}{ 2^{r-1} \Gamma(r-j+1)\Gamma(j-1)} \left[\log(u)\int_{z=0}^{1} \frac{1}{\beta(r-j+1,j-1)} \left(1-z\right)^{j-2}  z^{r-j} dz\right.\\
&& \quad \quad \left. +\int_{z=0}^{1} \frac{1}{\beta(r-j+1,j-1)} \left(1-z\right)^{j-2}  z^{r-j} \log(z)  dz\right]
\end{eqnarray*}
where $\beta(r-j+1,j-1) = \frac{\Gamma(r-j+1) \Gamma(j-1)}{\Gamma(r)}$ is the Beta function. Note that while $\log(z)\to-\infty$ as $z\to 0$, the integral converges as this is the mean of $\log(Z)$ when $Z$ has a Beta distribution. Accordingly, we have
$$I_2(x) > \frac{\e^{-u/2} u^{r-1}}{2^{r-1} \Gamma(r)}\left[\log (u) + \psi(r-j+1) - \psi(r) \right].$$
To complete the proof, remember that $Pr(\bx +\by>u) = \frac{\Gamma(r,u/2)}{\Gamma(r)} $. Thus, we have
\begin{multline*}
    \E[\log(1+x)|\bx +\by > u]
    > \frac{\Gamma(r)}{\Gamma(r,u/2)}\Bigg(\frac{\e^{-u/2} u^{r-1}}{2^{r-1} \Gamma(r)} \left[\log (u) + \psi(r-j+1) - \psi(r) \right]
\\
+ \frac{\Gamma(r-j+1,u/2)}{\Gamma(r-j+1)} \log(1+u) +e\frac{\Gamma(r-j,1+u/2)}{\Gamma(r-j+1)} \Bigg).
    \end{multline*}

\bibliographystyle{IEEEtran}
\bibliography{bibliography}

\begin{thebibliography}{10}
\providecommand{\url}[1]{#1}
\csname url@samestyle\endcsname
\providecommand{\newblock}{\relax}
\providecommand{\bibinfo}[2]{#2}
\providecommand{\BIBentrySTDinterwordspacing}{\spaceskip=0pt\relax}
\providecommand{\BIBentryALTinterwordstretchfactor}{4}
\providecommand{\BIBentryALTinterwordspacing}{\spaceskip=\fontdimen2\font plus
\BIBentryALTinterwordstretchfactor\fontdimen3\font minus
  \fontdimen4\font\relax}
\providecommand{\BIBforeignlanguage}[2]{{%
\expandafter\ifx\csname l@#1\endcsname\relax
\typeout{** WARNING: IEEEtran.bst: No hyphenation pattern has been}%
\typeout{** loaded for the language `#1'. Using the pattern for}%
\typeout{** the default language instead.}%
\else
\language=\csname l@#1\endcsname
\fi
#2}}
\providecommand{\BIBdecl}{\relax}
\BIBdecl

\bibitem{yoo2006optimality}
T.~Yoo and A.~Goldsmith, ``On the optimality of multiantenna broadcast
  scheduling using zero-forcing beamforming,'' \emph{Selected Areas in
  Communications, IEEE Journal on}, vol.~24, no.~3, pp. 528--541, 2006.

\bibitem{jagannathan2007scheduling}
K.~Jagannathan, S.~Borst, P.~Whiting, and E.~Modiano, ``Scheduling of
  multi-antenna broadcast systems with heterogeneous users,'' \emph{Selected
  Areas in Communications, IEEE Journal on}, vol.~25, no.~7, pp. 1424--1434,
  2007.

\bibitem{swannack2004low}
C.~Swannack, E.~Uysal-Biyikoglu, and G.~Wornell, ``Low complexity multiuser
  scheduling for maximizing throughput in the {MIMO} broadcast channel,'' in
  \emph{Proc. Allerton Conf. Communications, Control and Computing}, 2004.

\bibitem{kim2005scheduling}
J.~Kim, S.~Park, J.~Lee, J.~Lee, and H.~Jung, ``A scheduling algorithm combined
  with zero-forcing beamforming for a multiuser {MIMO} wireless system,'' in
  \emph{IEEE Vehicular Technology Conference}, vol.~1, 2005, pp. 211--215.

\bibitem{airy2003spatially}
M.~Airy, S.~Shakkattai, and R.~Heath~Jr, ``Spatially greedy scheduling in
  multi-user {MIMO} wireless systems,'' in \emph{the Thirty-Seventh Asilomar
  Conference on Signals, Systems and Computers}, vol.~1.\hskip 1em plus 0.5em
  minus 0.4em\relax IEEE, 2003, pp. 982--986.

\bibitem{knopp1995information}
R.~Knopp and P.~Humblet, ``Information capacity and power control in
  single-cell multiuser communications,'' in \emph{IEEE International
  Conference on Communications, Seattle}, vol.~1, 1995, pp. 331--335.

\bibitem{gozali2003impact}
R.~Gozali, R.~Buehrer, and B.~Woerner, ``The impact of multiuser diversity on
  space-time block coding,'' \emph{Communications Letters, IEEE}, vol.~7,
  no.~5, pp. 213--215, 2003.

\bibitem{qin2003exploiting}
X.~Qin and R.~Berry, ``Exploiting multiuser diversity for medium access control
  in wireless networks,'' in \emph{INFOCOM 2003. Twenty-Second Annual Joint
  Conference of the IEEE Computer and Communications. IEEE Societies},
  vol.~2.\hskip 1em plus 0.5em minus 0.4em\relax IEEE, 2003, pp. 1084--1094.

\bibitem{qin2006distributed}
------, ``Distributed approaches for exploiting multiuser diversity in wireless
  networks,'' \emph{Information Theory, IEEE Transactions on}, vol.~52, no.~2,
  pp. 392--413, 2006.

\bibitem{gesbert2004much}
D.~Gesbert and M.-S. Alouini, ``How much feedback is multi-user diversity
  really worth?'' in \emph{Communications, 2004 IEEE International Conference
  on}, vol.~1.\hskip 1em plus 0.5em minus 0.4em\relax IEEE, 2004, pp. 234--238.

\bibitem{kampeas2012opportunistic}
J.~Kampeas, A.~Cohen, and O.~Gurewitz, ``Capacity of distributed opportunistic
  scheduling in nonhomogeneous networks,'' \emph{Information Theory, IEEE
  Transactions on}, vol.~60, no.~11, pp. 7231--7247, 2014.

\bibitem{kampeas2012capdisthetnet}
------, ``Capacity of distributed opportunistic scheduling in heterogeneous
  networks,'' in \emph{Proceedings of the Annual Allerton Conference on
  Communication Control and Computing}, 2012.

\bibitem{bai2006opportunistic}
K.~Bai and J.~Zhang, ``Opportunistic multichannel aloha: distributed
  multiaccess control scheme for {OFDMA} wireless networks,'' \emph{Vehicular
  Technology, IEEE Transactions on}, vol.~55, no.~3, pp. 848--855, 2006.

\bibitem{qin2008distributed}
X.~Qin and R.~Berry, ``Distributed power allocation and scheduling for parallel
  channel wireless networks,'' \emph{Wireless Networks}, vol.~14, no.~5, pp.
  601--613, 2008.

\bibitem{shen2006low}
Z.~Shen, R.~Chen, J.~Andrews, R.~Heath, and B.~Evans, ``Low complexity user
  selection algorithms for multiuser {MIMO} systems with block
  diagonalization,'' \emph{Signal Processing, IEEE Transactions on}, vol.~54,
  no.~9, pp. 3658--3663, 2006.

\bibitem{jagannathan2006efficient}
K.~Jagannathan, S.~Borst, P.~Whiting, and E.~Modiano, ``Efficient scheduling of
  multi-user multi-antenna systems,'' in \emph{Modeling and Optimization in
  Mobile, Ad Hoc and Wireless Networks, 2006 4th International Symposium
  on}.\hskip 1em plus 0.5em minus 0.4em\relax IEEE, 2006, pp. 1--8.

\bibitem{weingarten2006capacity}
H.~Weingarten, Y.~Steinberg, and S.~Shamai, ``The capacity region of the
  gaussian multiple-input multiple-output broadcast channel,''
  \emph{Information Theory, IEEE Transactions on}, vol.~52, no.~9, pp.
  3936--3964, 2006.

\bibitem{primolevo2005channel}
G.~Primolevo, O.~Simeone, and U.~Spagnolini, ``Channel aware scheduling for
  broadcast {MIMO} systems with orthogonal linear precoding and fairness
  constraints,'' in \emph{IEEE International Conference on Communications},
  vol.~4, 2005, pp. 2749--2753.

\bibitem{yoo2006finite}
T.~Yoo, N.~Jindal, and A.~Goldsmith, ``Finite-rate feedback {MIMO} broadcast
  channels with a large number of users,'' in \emph{Information Theory, 2006
  IEEE International Symposium on}.\hskip 1em plus 0.5em minus 0.4em\relax
  IEEE, 2006, pp. 1214--1218.

\bibitem{gartner2006multiuser}
M.~Gartner and H.~Bolcskei, ``Multiuser space-time/frequency code design,'' in
  \emph{Information Theory, 2006 IEEE International Symposium on}.\hskip 1em
  plus 0.5em minus 0.4em\relax IEEE, 2006, pp. 2819--2823.

\bibitem{pun2007opportunistic}
M.~Pun, V.~Koivunen, and H.~Poor, ``Opportunistic scheduling and beamforming
  for {MIMO-SDMA} downlink systems with linear combining,'' in \emph{IEEE 18th
  International Symposium on Personal, Indoor and Mobile Radio Communications},
  2007, pp. 1--6.

\bibitem{wang2007coverage}
L.~Wang, C.~Chiu, C.~Yeh, and C.~Li, ``Coverage enhancement for {OFDM}-based
  spatial multiplexing systems by scheduling,'' in \emph{IEEE Wireless
  Communications and Networking Conference, WCNC}.\hskip 1em plus 0.5em minus
  0.4em\relax IEEE, 2007, pp. 1439--1443.

\bibitem{zakhour2011minmax}
R.~Zakhour and S.~Hanly, ``Min-max fair coordinated beamforming via large
  system analysis,'' \emph{IEEE International Symposium on Information Theory
  Proceedings}, pp. 1896--1900, 2011.

\bibitem{choi2008capacity}
W.~Choi and J.~Andrews, ``The capacity gain from intercell scheduling in
  multi-antenna systems,'' \emph{Wireless Communications, IEEE Transactions
  on}, vol.~7, no.~2, pp. 714--725, 2008.

\bibitem{caire2006mimo}
G.~Caire, ``{MIMO} downlink joint processing and scheduling: a survey of
  classical and recent results,'' in \emph{Proc. Workshop on Information Theory
  and Its Applications}, 2006.

\bibitem{hassibi2007fundamental}
B.~Hassibi and M.~Sharif, ``Fundamental limits in {MIMO} broadcast channels,''
  \emph{Selected Areas in Communications, IEEE Journal on}, vol.~25, no.~7, pp.
  1333--1344, 2007.

\bibitem{sharif2007comparison}
M.~Sharif and B.~Hassibi, ``A comparison of time-sharing, dpc, and beamforming
  for {MIMO} broadcast channels with many users,'' \emph{Communications, IEEE
  Transactions on}, vol.~55, no.~1, pp. 11--15, 2007.

\bibitem{bayesteh2008user}
A.~Bayesteh and A.~K. Khandani, ``On the user selection for {MIMO} broadcast
  channels,'' \emph{Information Theory, IEEE Transactions on}, vol.~54, no.~3,
  pp. 1086--1107, 2008.

\bibitem{sharif2005capacity}
M.~Sharif and B.~Hassibi, ``On the capacity of {MIMO} broadcast channels with
  partial side information,'' \emph{Information Theory, IEEE Transactions on},
  vol.~51, no.~2, pp. 506--522, 2005.

\bibitem{hochwald2004multiple}
B.~M. Hochwald, T.~L. Marzetta, and V.~Tarokh, ``Multiple-antenna channel
  hardening and its implications for rate feedback and scheduling,''
  \emph{Information Theory, IEEE Transactions on}, vol.~50, no.~9, pp.
  1893--1909, 2004.

\bibitem{caire1999capacity}
G.~Caire and S.~Shamai, ``On the capacity of some channels with channel state
  information,'' \emph{Information Theory, IEEE Transactions on}, vol.~45,
  no.~6, pp. 2007--2019, 1999.

\bibitem{ozarow1994information}
L.~H. Ozarow, S.~Shamai, and A.~D. Wyner, ``Information theoretic
  considerations for cellular mobile radio,'' \emph{IEEE transactions on
  Vehicular Technology}, vol.~43, no.~2, pp. 359--378, 1994.

\bibitem{de2006extreme}
L.~De~Haan and A.~Ferreira, \emph{Extreme value theory: an introduction}.\hskip
  1em plus 0.5em minus 0.4em\relax Springer Verlag, 2006.

\bibitem{leadbetter1983}
M.~Leadbetter, \emph{Extremes and Related Properties of Random Sequences and
  Processes}.\hskip 1em plus 0.5em minus 0.4em\relax Springer-Verlag, N.Y,
  1983.

\bibitem{embrechts2011modelling}
P.~Embrechts, C.~Kl{\"u}ppelberg, and T.~Mikosch, \emph{Modelling extremal
  events: for insurance and finance}.\hskip 1em plus 0.5em minus 0.4em\relax
  Springer, 2011, vol.~33.

\bibitem{tse2005fundamentals}
D.~Tse and P.~Viswanath, \emph{Fundamentals of wireless communication}.\hskip
  1em plus 0.5em minus 0.4em\relax Cambridge university press, 2005.

\bibitem{li2006distribution}
P.~Li, D.~Paul, R.~Narasimhan, and J.~Cioffi, ``On the distribution of {SINR}
  for the {MMSE MIMO} receiver and performance analysis,'' \emph{Information
  Theory, IEEE Transactions on}, vol.~52, no.~1, pp. 271--286, 2006.

\bibitem{kim2008performance}
N.~Kim, Y.~Lee, and H.~Park, ``Performance analysis of {MIMO} system with
  linear mmse receiver,'' \emph{Wireless Communications, IEEE Transactions on},
  vol.~7, no.~11, pp. 4474--4478, 2008.

\bibitem{xu2015analysis}
P.~Xu, J.~Wang, J.~Wang, and F.~Qi, ``Analysis and design of channel estimation
  in multicell multiuser {MIMO OFDM} systems,'' \emph{Vehicular Technology,
  IEEE Transactions on}, vol.~64, no.~2, pp. 610--620, 2015.

\bibitem{suh2003preamble}
C.~Suh, C.-S. Hwang, and H.~Choi, ``Preamble design for channel estimation in
  {MIMO-OFDM} systems,'' in \emph{Global Telecommunications Conference, 2003.
  GLOBECOM'03. IEEE}, vol.~1.\hskip 1em plus 0.5em minus 0.4em\relax IEEE,
  2003, pp. 317--321.

\bibitem{magistretti2014802}
E.~Magistretti, O.~Gurewitz, and E.~W. Knightly, ``802.11 ec: collision
  avoidance without control messages,'' \emph{IEEE/ACM Transactions on
  Networking {(TON)}}, vol.~22, no.~6, pp. 1845--1858, 2014.

\bibitem{janssen2008gaussian}
A.~Janssen, J.~Van~Leeuwaarden, and B.~Zwart, ``Gaussian expansions and bounds
  for the poisson distribution applied to the erlang b formula,''
  \emph{Advances in Applied Probability}, vol.~40, no.~1, pp. 122--143, 2008.

\bibitem{lau2004design}
V.~Lau, Y.~Liu, and T.-A. Chen, ``On the design of {MIMO} block-fading channels
  with feedback-link capacity constraint,'' \emph{IEEE Transactions on
  Communications}, vol.~52, no.~1, pp. 62--70, 2004.

\bibitem{bai1996bounds}
Z.~Bai and G.~H. Golub, ``Bounds for the trace of the inverse and the
  determinant of symmetric positive definite matrices,'' \emph{Annals of
  Numerical Mathematics}, vol.~4, pp. 29--38, 1996.

\bibitem{GammaBound}
\BIBentryALTinterwordspacing
``Are there well known lower bounds for the upper incomplete gamma function?''
  [Online]. Available: \url{http://math.stackexchange.com/questions/129170}
\BIBentrySTDinterwordspacing

\bibitem{smith1989extreme}
R.~Smith, ``Extreme value analysis of environmental time series: an application
  to trend detection in ground-level ozone,'' \emph{Statistical Science},
  vol.~4, no.~4, pp. 367--377, 1989.

\bibitem{kallenberg1983random}
O.~Kallenberg, \emph{Random Measures}, 3rd~ed.\hskip 1em plus 0.5em minus
  0.4em\relax Akademie Verlag, Berlin, Academic Press, New York, 1983.

\bibitem{durrett2010probability}
R.~Durrett, \emph{Probability: theory and examples}.\hskip 1em plus 0.5em minus
  0.4em\relax Cambridge university press, 2010.

\bibitem{coles2001introduction}
S.~Coles, \emph{An introduction to statistical modeling of extreme
  values}.\hskip 1em plus 0.5em minus 0.4em\relax Springer Verlag, 2001.

\bibitem{telatar1999capacity}
E.~Telatar, ``Capacity of multi-antenna gaussian channels,'' \emph{European
  transactions on telecommunications}, vol.~10, no.~6, pp. 585--595, 1999.

\bibitem{tulino2004random}
A.~M. Tulino and S.~Verd{\'u}, \emph{Random matrix theory and wireless
  communications}.\hskip 1em plus 0.5em minus 0.4em\relax Now Publishers Inc,
  2004, vol.~1.

\end{thebibliography}
\end{document}